\def\extraspacing{\vspace{3mm} \noindent}
\def\figcapup{\vspace{-1mm}}
\def\figcapdown{\vspace{-0mm}}
\def\vgap{\vspace{1mm}}
\def\tabpos{\hspace{4mm} \= \hspace{4mm} \= \hspace{4mm} \= \hspace{4mm} \= \hspace{4mm} \= \hspace{4mm} \= \hspace{4mm} \= \hspace{4mm} \= \hspace{4mm} \= \hspace{4mm} \= \hspace{4mm} \= \hspace{4mm} \= \hspace{4mm} \= \hspace{4mm} 
\kill}
\newcommand{\mytab}[1]{\begin{tabbing}\tabpos #1\end{tabbing}}
 \theoremstyle{plain}
 \newtheorem{theorem}{Theorem}[section]
 \newtheorem{proposition}[theorem]{Proposition}
 \newtheorem{lemma}[theorem]{Lemma}
 \theoremstyle{definition}
 \theoremstyle{remark}
\newtheorem{fact}{Fact}[section]
\newcommand{\boxminipg}[2]{\begin{center}\fbox{\begin{minipage}{#1}#2\end{minipage}}\end{center}}
\newcommand{\minipg}[2]{\begin{center}\begin{minipage}{#1}#2\end{minipage}\end{center}}
\newcommand{\myitems}[1]{\begin{itemize} #1 \end{itemize}}
\newcommand{\myenums}[1]{\begin{enumerate} #1 \end{enumerate}}
\newcommand{\bm}[1]{\textrm{\boldmath${#1}$}}
\newcommand{\myeqn}[1]{\begin{eqnarray}#1\end{eqnarray}}
\newcommand{\set}[1]{\{#1\}}
\newcommand{\explain}[1]{(\textrm{#1})}
\def\mit{\mathit}
\def\eps{\epsilon}
\def\fr{\frac}
\def\-{\mbox{-}}
\def\real{\mathbb{R}}
\newcommand{\ceil}[1]{\left \lceil #1 \right \rceil}
\newcommand{\floor}[1]{\left \lfloor #1 \right \rfloor}
\def\nn{\nonumber}
\def\setm{\setminus}
\DeclareMathOperator*{\polylog}{polylog}
\def\vgap{\vspace{1mm}}
\def\vslit{\vspace{0.5mm}}
\def\extraspacing{\vspace{2mm} \noindent}
\def\figcapup{\vspace{-3mm}}
\def\figcapdown{\vspace{-3mm}}
\def\Zdom{\mathbb{Z}}
\def\C{\mathscr{C}}
\def\D{\mathcal{D}}
\def\M{\mathcal{M}}
\def\T{\mathcal{T}}
\def\anc{\text{anc}}
\def\del{\text{del}}
\def\diam{\mit{diam}}
\def\first{\text{1st}}
\def\geo{\text{geo}}
\def\last{\text{last}}
\def\level{\mit{level}}
\def\net{\text{net}}
\def\qry{\text{qry}}
\def\out{\text{out}}
\def\start{\text{start}}
\def\upd{\text{upd}}
\title{Proximity Graphs for Similarity Search: Fast Construction, Lower Bounds, and Euclidean Separation}
\author{
	Shangqi Lu \\[2mm]
	HKUST-Guangzhou \\
	{\em shangqilu@hkust-gz.edu.cn}
	\and Yufei Tao \\[2mm]
	CUHK \\
	{\em taoyf@cse.cuhk.edu.hk}
}
\begin{document}

\maketitle

\begin{abstract}
	Proximity graph-based methods have emerged as a leading paradigm for approximate nearest neighbor (ANN) search in the system community. This paper presents fresh insights into the theoretical foundation of these methods. We describe an algorithm to build a proximity graph for $(1+\eps)$-ANN search that has $O((1/\eps)^\lambda \cdot n \log \Delta)$ edges and guarantees $(1/\eps)^\lambda \cdot \polylog \Delta$ query time. Here, $n$ and $\Delta$ are the size and aspect ratio of the data input, respectively, and $\lambda = O(1)$ is the doubling dimension of the underlying metric space. Our construction time is near-linear to $n$, improving the $\Omega(n^2)$ bounds of all previous constructions. We complement our algorithm with lower bounds revealing an inherent limitation of proximity graphs: the number of edges needs to be at least $\Omega((1/\eps)^\lambda \cdot n + n \log \Delta)$ in the worst case, up to a subpolynomial factor. The hard inputs used in our lower-bound arguments are non-geometric, thus prompting the question of whether improvement is possible in the Euclidean space (a key subclass of metric spaces). We provide an affirmative answer by using geometry to reduce the graph size to $O((1/\eps)^\lambda \cdot n)$ while preserving nearly the same query and construction time.
\end{abstract}

\thispagestyle{empty}
\pagebreak

\setcounter{page}{1}

\section{Introduction}\label{sec:intro}

Approximate nearest neighbor (ANN) search is fundamental to similarity retrieval and plays an imperative role in a wide range of database applications, such as recommendation systems, entity matching, multimedia search, DB for AI, and so on. In the past decade, proximity graph-based approaches (e.g., HNSW \cite{my20}) have become a dominant paradigm for ANN search in the system community. Numerous articles \cite{aep25,czk+24,fxwc19,ggx+25,hd16,gl23,mplv14,my20,pcc+23,pjw24,sds+19,wxy+24,wxyw21,zth+23} in venues like SIGMOD, VLDB, NeurIPS, etc.\ have demonstrated proximity graphs' superior empirical performance on real-world data, even against methods with solid worst-case guarantees. Despite their empirical success, however, the theoretical underpinnings of proximity graphs remain largely unexplored. This raises a critical question:

\vgap

\minipg{0.8\linewidth}{
	Is the performance of proximity graphs driven by specific properties of the datasets evaluated, or do they possess inherent theoretical strengths?
}

\vgap

\noindent Given the vast popularity of proximity graphs, we believe that there is an urgent need to deepen our understanding of their combinatorial nature, thereby enabling us to analyze and predict their efficacy across diverse contexts.

\subsection{Problem Definitions} \label{sec:intro:prob}

We consider a metric space $(\M, D)$ where
\myitems{
	\item $\M$ is a (possibly infinite) set where each element is called a {\em point};
	\item $D$ is a function that, given two points $p_1, p_2 \in \M$, computes in constant time a non-negative real value as their {\em distance}, denoted as $D(p_1, p_2)$.
}
The function $D$ satisfies (i) identity of indiscernibles: $D(p_1, p_2) = 0$ if and only if $p_1 = p_2$, (ii) symmetry: $D(p_1, p_2) = D(p_2, p_1)$, and (iii) triangle inequality: $D(p_1, p_2) \le D(p_1, p_3) + D(p_2, p_3)$.

\vgap

Let $P$ be a set of $n \ge 2$ points from $\M$, which we refer to as the {\em data points}. Given a point $q \in \M$, a point $p^* \in P$ is a {\em nearest neighbor} (NN) of $q$ if $D(p^*, q) \le D(p, q)$ holds for all $p \in P$. For a value $\eps \in (0, 1]$, a point $p \in P$ is called a {\em $(1+\eps)$-approximate nearest neighbor} of $q$ if $D(p, q)$ $\le$ $(1+\eps) \cdot D(p^*, q)$.

\vgap

Consider a simple directed graph $G$, where each point of $P$ corresponds to a vertex in $G$, and vice versa. We call $G$ a {\em $(1+\eps)$-proximity graph} (PG) if, given any query point $q \in \M$ and any data point $p_\start \in P$, the following procedure always returns a $(1+\eps)$-ANN of $q$:

\mytab{
	\> {\bf greedy}$(p_\start, q)$ \\
	\> 1. \> $p^\circ \leftarrow p_\start$ \hspace{5mm} /* the first hop */ \\
	\> 2. \> {\bf repeat} \\
	\> 3. \>\> $p^+_\out \leftarrow$ the out-neighbor of $p^\circ$ closest to $q$
	\hspace{5mm} /* $p^+_\out =$ nil if $p^\circ$ has no out-neighbors */ \\
	\> 4. \>\> {\bf if} $p^+_\out =$ nil {\bf or} $D(p^\circ, q) \le D(p^+_\out, q)$ {\bf then return $p^\circ$} \\
	\> 5. \>\> $p^\circ \leftarrow p^+_\out$ \hspace{5mm} /* the next hop */
}

\noindent At each $p^\circ$ --- henceforth referred to as a {\em hop vertex} --- the procedure computes $D(p_\out, q)$ for every out-neighbor $p_\out$ of $p^\circ$. The sequence of hop vertices (a.k.a.\ data points)  visited have strictly descending distances (i.e., $D(p^\circ, q)$) to $q$.

\vgap


Although {\bf greedy} always returns a correct answer, it can be slow because it may need to visit a long senquence of vertices before termination. However, a good $(1+\eps)$-PG should allow {\bf greedy} to find an $(1+\eps)$-ANN after only a small number of distance computations. Formally, we say that $G$ ensures {\em query time} $Q$ if the following algorithm always returns a $(1+\eps)$-ANN of $q$, regardless of the choice of $p_\start$ and $q$:

\mytab{
	\> {\bf query}$(p_\start, q, Q)$ \\
	\> 1. \> run {\bf greedy}$(p_\start, q)$ until it self-terminates or has computed $Q$ distances \\
	\> 2. \> {\bf if} self-termination {\bf then return} the output of {\bf greedy} \\
	\> 3. \> {\bf else} {\bf return} the last hop vertex $p^\circ$ visited by {\bf greedy}
}

\noindent Note that a ``$Q$ query time'' guarantee defined as above directly translates into a maximum running time of $O(Q)$ because distance calculation is the bottleneck of {\bf greedy}.

\vgap

Proximity graphs definitely exist: the complete graph $G$ --- namely, there is an edge from a data point to every other data point --- is a proximity graph for any $\eps > 0$. However, this $G$ has $\Theta(n^2)$ edges and can only ensure a query time of $\Omega(n)$. Research on proximity graphs revolves around two questions:
\myitems{
	\item {{\bf Q1:}} How to build a smaller proximity graph with faster  query time?

	\item {{\bf Q2:}} What are the limitations of proximity graphs?
}
Besides $\eps$ and $n$, we will describe our results using two other parameters, as introduced next.

\extraspacing {\bf Aspect Ratio.} In general, for any subset $X \subseteq \M$, its {\em diameter} --- denoted as $\diam(X)$ --- is the maximum distance of two points in $X$, while its {\em aspect ratio} is the ratio between $\diam(X)$ and the smallest inter-point distance in $X$. The first extra parameter we adopt is the aspect ratio of $P$, denoted as $\Delta$.

\extraspacing {\bf Doubling Dimension.} For any point $q \in \M$ and any real value $r \ge 0$, define $B(q, r)$ --- referred to as a {\em ball} with radius $r$ --- as the set $\set{p \in \M \mid D(p, q) \le r}$. The second extra parameter we adopt is the {\em doubling dimension} of the metric space $(\M, D)$. This is the smallest value $\lambda$ satisfying the following condition: for any $r > 0$, every ball of radius $r$ can be covered by the union of at most $2^\lambda$ balls of radius $r/2$. The value $\lambda$ measures  the ``intrinsic dimensionality'' of a metric space. Our discussion throughout the paper will assume $\lambda$ to be bounded by a constant.


\extraspacing {\bf Mathematical Conventions.} For an integer $x \ge 1$, the notation $[x]$ represents the set $\set{1, 2, ..., x}$. If $p$ is a point in $\real^d$, its $i$-th coordinate is denoted as $p[i]$ for each $i \in [d]$. Given two points $p, q \in \real^d$, we use $L_2(p, q)$ and $L_\infty(p, q)$ to represent their distance under the $L_2$ and $L_\infty$ norms, respectively. All angles are measured in Radians, and all logarithms have base 2. In a directed graph, we use the notation $(u, v)$ to represent a directed edge from vertex $u$ to vertex $v$. By saying that a random event occurs ``with high probability'' (w.h.p.\ for short), we mean that the event happens with probability at least $1-1/n^c$ where $c$ can be set to an arbitrarily large constant.


\subsection{Previous Work} \label{sec:intro:prev}

Many methods have been proposed in the system community for proximity graph construction, with small-world graph \cite{mplv14}, DiskANN \cite{sds+19}, NSG \cite{fxwc19}, and HNSW \cite{my20} being notable examples. Although the efficiency of these methods has been demonstrated through extensive experiments, limited research has focused on understanding their theoretical characteristics. In \cite{ix23}, Indyk and Xu addressed the issue by conducting a comprehensive analysis of the worst-case performance of the existing PG-based methods. They found that DiskANN is the only method that enjoys non-trivial guarantees. Specifically, DiskANN builds a $(1+\eps)$-PG in $O(n^3)$ time that has $O((1/\eps)^{\lambda} \cdot n \log \Delta)$ edges and guarantees a query time of $O((1/\eps)^{\lambda} \cdot \log^2 \Delta)$.

\vgap

Diwan et al.\ \cite{dgm+24} considered the special scenario where $P = \M$ (or equivalently, every query point originates from $P$). In that case, they proved the existence of a $(1+\eps)$-PG that has $O(n^{1.5} \log n)$ edges and ensures $O(\sqrt{n} \log n)$ query time. Given any $q \in P$, their PG allows the greedy algorithm of Section~\ref{sec:intro:prob} to find an exact NN of $q$ (which is $q$ itself) within the aforementioned time bound; hence, their construction works for any value of $\eps$.

\vgap

An important class of metric spaces is the Euclidean space $\real^d$ coupled with the $L_2$ norm, for which several authors have studied how to leverage geometry to build PGs, assuming the dimensionality $d$ to be a constant. Specifically, the construction by Arya and Mount \cite{am93} produces a $(1+\eps)$-PG with $O((1/\eps)^d \cdot n)$ edges but $\Omega(n)$ query time.  Clarkson \cite{c94} presented another construction that yields a $(1+\eps)$-PG with $O((1/\eps)^{(d-1)/2} \cdot n \log (\Delta/\eps))$ edges and $O((1/\eps)^{(d+1)/2} \cdot \log (\Delta/\eps) \cdot \log \Delta)$ query time. The construction of \cite{am93} and \cite{c94} takes $O((1/\eps)^d \cdot n^2)$ expected time and $O((1/\eps)^{d-1} \cdot n^2 \log (\Delta/\eps))$ time, respectively. As the doubling dimension $\lambda$ of $\real^d$ satisfies $d \le \lambda = O(d)$, Clarkson's bounds are better than those of DiskANN on $(\real^d, L_2)$.


\vgap

We emphasize that the purpose of this work is to explore the theory of proximity graphs --- in particular, to seek answers for {\bf Q1} and {\bf Q2} in Section~\ref{sec:intro:prob} --- rather than designing new data structures for ANN search. Readers interested in {\bf non}-PG-based ANN structures with strong performance guarantees may refer to the representative works \cite{amn+98, amm09, ar15, cg06, diim04, hm06, im98, kl04} and the references therein.

\subsection{Our Results} \label{sec:intro:ours}

We present new answers to both questions {\bf Q1} and {\bf Q2}. Our first main result is an algorithm for building proximity graphs:

\begin{theorem} \label{thm:ub-metric}
	For a metric space with a constant doubling dimension $\lambda$, there is a $(1+\eps)$-PG that has $O((1/\eps)^\lambda \cdot n \log \Delta)$ edges and $O((1/\eps)^{\lambda} \cdot \log^2 \Delta)$ query time, where $n$ and $\Delta$ are the size and aspect ratio of the data input, respectively. We can construct such a graph in $(1/\eps)^{\lambda} \cdot n  \polylog (n \Delta)$ time.
\end{theorem}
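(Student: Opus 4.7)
The plan is to combine a hierarchical net structure with scale-indexed spanner edges, following the paradigm of navigating nets and net-trees in doubling metrics. Let $L = \lceil \log \Delta \rceil$ and construct a nested sequence $P = N_0 \supseteq N_1 \supseteq \cdots \supseteq N_L$, where $N_i \subseteq P$ is a maximal subset whose pairwise distances are at least $2^i$. Each $N_i$ is simultaneously a $2^i$-packing and a $2^i$-covering of $P$: every $p \in P$ has a representative in $N_i$ within distance $2^i$, and by the packing property combined with doubling dimension $\lambda$, any ball of radius $r \ge 2^i$ contains $O((r/2^i)^\lambda)$ points of $N_i$.

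The proximity graph $G$ is then assembled level by level: for each $i \in \{0, 1, \ldots, L\}$ and each $p \in N_i$, add a directed edge from $p$ to every other $p' \in N_i$ satisfying $D(p, p') \le c \cdot 2^i / \eps$ for a suitable constant $c$. The packing property bounds the out-degree per level by $O((1/\eps)^\lambda)$. Since each point of $P$ lies in $N_0$ and possibly in higher levels, and since the number of (point, level) pairs is $O(n \log \Delta)$, the total edge count is $O((1/\eps)^\lambda \cdot n \log \Delta)$.

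For correctness and query time, I would analyze the {\bf greedy} procedure started from the top-level vertex. Fix $q$ and let $r^* = D(p^*, q)$ for a true NN $p^* \in P$. The invariant is that, at a current hop $p^\circ$ with $D(p^\circ, q) \in \Theta(2^i)$ for some scale $i$, either $p^\circ$ is already a $(1+\eps)$-ANN or the covering property yields a representative $p' \in N_i$ within distance $2^i$ of $p^*$, hence within distance at most $r^* + 2^i$ of $q$; by choosing $c$ appropriately, $p'$ is guaranteed to be an out-neighbor of $p^\circ$ and strictly closer to $q$. A scale-descent amortization then gives $O(\log^2 \Delta)$ hops in total, with each hop costing $O((1/\eps)^\lambda)$ distance computations, matching the claimed query-time bound.

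The main obstacle is producing the edges in near-linear time, since naive pairwise enumeration is $\Omega(n^2)$. I would construct the net-tree of Har-Peled and Mendel in $2^{O(\lambda)} n \log n$ time; this data structure supports the range reports we need, namely listing $N_i \cap B(p, c \cdot 2^i / \eps)$ in time $2^{O(\lambda)}$ per output point via a standard descent through net-tree cells. Summed over all $(p, i)$ pairs, this yields the stated $(1/\eps)^{\lambda} \cdot n \cdot \polylog(n \Delta)$ construction bound. The subtlest step, and the one I would need to argue carefully, is the scale-descent invariant for {\bf greedy}: because the out-neighbor lists live on the discrete nets $N_i$, the analysis must track how $D(p^\circ, q)$ drops across scales without getting stuck at a hop vertex whose best neighbor lies only a negligibly shorter distance from $q$, and it must absorb the constant factors in the net thresholds so that the final approximation remains $(1+\eps)$.
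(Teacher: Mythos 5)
There is a genuine gap, and it lies in the edge set itself. You add level-$i$ edges only between points of $N_i$: ``for each $p \in N_i$, add an edge to every other $p' \in N_i$ within distance $c\cdot 2^i/\eps$.'' Under this rule, a point of $P$ that survives only at low net levels has only short out-edges, and the {\bf greedy} procedure can get permanently stuck at such a point. Concretely, take two tight clusters $A$ and $B$ at mutual distance $\Delta$, with intra-cluster distances $O(1)$, let the query $q$ sit next to a point of $B$, and start {\bf greedy} at a point of $A$ (the PG definition requires correctness for \emph{every} $p_\start \in P$, so ``starting from the top-level vertex'' is not available to you --- and even if it were, {\bf greedy} is scale-oblivious and may immediately hop to a low-level-only point that is marginally closer to $q$). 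The walk then wanders inside $A$ and terminates at the point of $A$ locally closest to $q$, which generically is not a net representative at any high level and hence has no long edge toward $B$; its distance to $q$ is $\Theta(\Delta)$ while the true NN is at distance $O(1)$, so the graph is not $(1+\eps)$-navigable and not a $(1+\eps)$-PG. The same flaw undermines your query-time invariant: the claim that the covering representative $p' \in N_i$ near $p^*$ is an out-neighbor of the current hop $p^\circ$ presupposes $p^\circ \in N_i$ at the relevant scale $i$, which an arbitrary hop vertex need not satisfy.

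The paper's construction differs in exactly this point and it is the crux of the proof: for \emph{every} $p \in P$ (net point or not) and every scale $i$, it adds an edge from $p$ to every $y \in Y_i$ with $D(p,y) \le \phi\cdot 2^i$, $\phi = \Theta(1/\eps)$. Then for any hop vertex $p^\circ$ that is not a $(1+\eps)$-ANN, choosing the scale $\beta \approx \log D(p^\circ,p^*) - \log(1/\eps)$ produces a net point $y^\circ \in Y_\beta$ with $D(p^*,y^\circ)\le 2^\beta$ and $D(p^\circ,y^\circ)\le \phi\cdot 2^\beta$, so $y^\circ$ is guaranteed to be an out-neighbor of $p^\circ$ and is strictly closer to $q$; moreover $\ceil{\log D(\cdot,p^*)}$ drops by at least one per hop (the log-drop property), giving $O(\log\Delta)$ hops and $O((1/\eps)^\lambda\log^2\Delta)$ query time. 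The degree and edge-count bounds then follow from a packing argument as in your sketch, and the near-linear construction (Har-Peled--Mendel nets plus a dynamic $2$-ANN structure used to enumerate, for each $p$ and $i$, the $O((1/\eps)^\lambda)$ net points within $\phi\cdot 2^i$) is close in spirit to your net-tree range-reporting plan. To repair your proof you would have to extend the edge rule to all of $P$ at all scales (which keeps the $O((1/\eps)^\lambda n\log\Delta)$ size, since $|Y_i \cap B(p,\phi 2^i)| = O((1/\eps)^\lambda)$) and replace the scale-descent invariant with an argument, like the paper's, that measures progress via the distance to $p^*$ rather than via membership of the hop vertex in a net.
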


Our algorithm, which improves DiskANN, is the first in the literature whose construction cost avoids a quadratic dependence on $n$. We then continue to explore whether the graph size in Theorem~\ref{thm:ub-metric} can be significantly reduced, in particular:
\myitems{
	\item {{\bf Q2.1:}} For constant $\eps$, both DiskANN and our solution produce a graph of $O(n \log \Delta)$ edges. Is the $\log \Delta$ factor an artifact? In other words, for constant $\eps$, is there a $(1+\eps)$-PG of $O(n)$ edges? What if the query time is allowed to be arbitrarily large?

	\vgap

	\item {{\bf Q2.2:}} For non-constant $\eps$, both DiskANN and our solution have the term $(1/\eps)^\lambda \cdot n$ in the graph size. Is the $(1/\eps)^\lambda$ factor necessary? Again, what if the query time is allowed to be arbitrarily large?

}

Our second theorem gives lower bounds justifying both the $\log \Delta$ and the $(1/\eps)^\lambda$ factors.

\begin{theorem} \label{thm:lbs}
	The following statements are true:
	\myenums{
		\item For any integers $\Delta$ and $n$ that are powers of 2 satisfying $n \ge 2$ and $n^2 \le 2\Delta \le 2^n$, there is a set $P$ of $\Theta(n)$ points with aspect ratio $\Delta$ from a metric space whose doubling dimension is 1, such that any 2-PG for $P$ must have $\Omega(n \log \Delta)$ edges, regardless of the query time allowed.

		\vgap

		\item Given any integers $s \ge 2$, $t \ge 1$, and constant $d \ge 1$, there is a set $P$ of $n = s^d \cdot t$ points with aspect ratio $\Delta = O(n)$ from a metric space whose doubling dimension $\lambda$ is at most $\log (1+2^d)$ such that, for $\eps = 1/(2s)$, any $(1+\eps)$-PG for $P$ must have $\Omega(s^d \cdot n)$ edges, regardless of the query time allowed.
	}
\end{theorem}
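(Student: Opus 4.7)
The plan is to exhibit, for each of the two statements, an explicit hard input whose analysis forces the stated edge count; the two inputs are built independently.

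\textbf{Statement (1): the $\Omega(n\log\Delta)$ bound.} I would place $\Theta(n)$ data points on a line (which automatically gives doubling dimension $1$), partitioning them into $n/\log\Delta$ disjoint \emph{groups}. Each group occupies its own sub-interval and internally holds $\log\Delta$ points at geometrically spaced positions $1, 2, 4, \ldots, 2^{\log\Delta - 1}$. The assumption $n^2 \le 2\Delta$ supplies just enough room to spread the groups apart without violating the overall aspect-ratio bound $\Delta$. The edge lower bound is argued locally: for each data point $p$ and each scale $r = 2^k$ with $k \in \{0, \ldots, \log\Delta - 1\}$, I design a witness query $q_{p, r}$ positioned near a specific group-mate $p^\star_{p, r}$ at distance $\Theta(r)$ from $p$, chosen so that $p^\star_{p, r}$ is the only data point within distance $2\,D(q_{p, r}, p^\star_{p, r})$ of $q_{p, r}$. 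Since the graph is a $2$-PG, greedy starting at $p$ must make a first hop to an out-neighbor strictly closer to $q_{p, r}$ than $p$ is, and the $2$-ANN uniqueness pins that out-neighbor inside a narrow scale-$r$ window around $p^\star_{p, r}$. Because the scale-$r$ window is disjoint from the scale-$r'$ window for $r' \ne r$, this forces $\Omega(\log\Delta)$ distinct out-edges at $p$; summing over the $\Theta(n)$ data points yields $\Omega(n\log\Delta)$.

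\textbf{Statement (2): the $\Omega(s^d \cdot n)$ bound.} I would build $t$ independent \emph{clusters}, each of size $s^d$, placed far apart along a line so the global aspect ratio stays $O(n)$ and cross-cluster points cannot serve as greedy out-neighbors for intra-cluster queries. Inside a single cluster, the $s^d$ points are arranged in $\real^d$ under an $L_\infty$-style metric (possibly perturbed) so that every pair of intra-cluster distances lies in a narrow range $[1,\,1 + O(1/s)]$, i.e.\ the cluster is \emph{nearly equidistant}. The doubling-dimension bound $\lambda \le \log(1+2^d)$ is verified by a direct covering argument: a ball of radius $r$ either sits inside one cluster (covered by $2^d$ balls of half radius, from the $L_\infty$ grid structure) or spans one cluster gap together with a portion of the adjacent cluster (covered by $1 + 2^d$ balls). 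For every ordered pair $(u, v)$ of points in the same cluster, I place a query $q_{u, v}$ at distance $\delta \ll \eps$ from $v$ so that $v$ is the unique $(1 + 1/(2s))$-ANN of $q_{u, v}$. For greedy starting at $u$ to return $v$, the first hop must land at some $u'$ with $D(q_{u, v}, u') < D(q_{u, v}, u)$; but near-equidistance gives $D(q_{u, v}, u') \ge D(u', v) - \delta \ge 1 - \delta$ while $D(q_{u, v}, u) \le 1 + \delta$, and with the tight $\eps = 1/(2s)$ margin plus the $O(1/s)$ intra-cluster spread, the only feasible first hop is $u' = v$. Hence the direct edge $(u, v)$ must be present, giving $\Omega(s^{2d})$ edges per cluster and $\Omega(s^{2d} \cdot t) = \Omega(s^d \cdot n)$ overall.

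\textbf{Main obstacles.} For (1), the delicate step is scale separation: arranging the witnesses so that a single out-edge of $p$ cannot simultaneously serve as the scale-$r_1$ and scale-$r_2$ witness for different $r_1, r_2$; this is exactly where the $n^2 \le 2\Delta$ slack is spent, ensuring each scale-$r$ window is strictly narrower than the gap between consecutive scales. For (2), the main challenge is ruling out multi-hop alternatives that bypass the direct edge $(u, v)$: even though greedy can in principle chain many strictly-decreasing hops, the tight $\eps = 1/(2s)$ margin coupled with near-equidistance guarantees that the very first strict-decrease test out of $u$ admits no candidate other than $v$, shutting down every multi-hop route at its first step. A secondary technical task is tuning the intra-cluster perturbation to simultaneously cap the distance spread at $1 + O(1/s)$ and keep the doubling dimension at $\log(1 + 2^d)$, which an $L_\infty$-like grid with small symbolic offsets should meet.
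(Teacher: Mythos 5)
Your construction for Statement (1) cannot work. For points on the real line under the standard metric, the sorted-order path graph (each point linked to its immediate left and right neighbors) is already navigable for \emph{exact} NN search: from any point that is not the nearest neighbor of $q$, the adjacent point on the side of $q$ is strictly closer to $q$. Hence every 1D Euclidean point set admits a proximity graph with $O(n)$ edges, and no $\Omega(n\log\Delta)$ bound can be extracted from such an input. The concrete error is the sentence claiming that ``the $2$-ANN uniqueness pins that out-neighbor inside a narrow scale-$r$ window'': by Fact~\ref{fact:alg1:pg-proof:navigable}, navigability only requires the first hop to be \emph{strictly closer} to the query than $p$, not to be an approximate answer, and in your geometrically spaced groups the group-mates at smaller scales lie between $p$ and the scale-$r$ witness, so they are legitimate (and reusable across all scales) first hops; no edge into the scale-$r$ window is forced. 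The paper eliminates such stepping stones by working in a tree metric where the distance between two leaves is $2^{\ell}$ with $\ell$ the level of their LCA, and by taking the query to be a far-away data point $v_2$ itself (NN distance $0$): then \emph{every} point of $P$ other than $v_2$ is at distance at least $D(v_1,v_2)$ from the query, so the only strictly improving move from $v_1$ is the direct edge $(v_1,v_2)$. Some ultrametric-like rigidity of this kind is essential, not a refinement you can postpone; the slack $n^2\le 2\Delta$ is spent on fitting $\Theta(\log\Delta)$ mutually ``equidistant-looking'' targets next to an $n$-point cluster, not on separating windows on a line.

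Statement (2) has two gaps. First, a cluster of $s^d$ points whose pairwise distances all lie in $[1,\,1+O(1/s)]$ has aspect ratio $1+O(1/s)$, and by the packing bound of Fact~\ref{fact:alg1:pg-proof:aspect} a set of aspect ratio $A$ in a space of doubling dimension $\lambda\le\log(1+2^d)$ has at most $(8A)^{\lambda}=2^{O(d)}$ points; so ``nearly equidistant clusters of size $s^d$'' are impossible once $s$ exceeds a constant, and your construction is internally inconsistent with the claimed doubling dimension. Second, even granting the $O(1/s)$ spread, your inequalities $D(q_{u,v},u')\ge 1-\delta$ and $D(q_{u,v},u)\le 1+\delta$ point the wrong way: they do not exclude $D(q_{u,v},u')<D(q_{u,v},u)$, so the first hop out of $u$ is not forced to be $v$; forcing the edge $(u,v)$ requires every other cluster point to be at distance \emph{at least} $D(q_{u,v},u)$ from the query, which a fixed placement of $q_{u,v}$ in $\real^d$ near $v$ cannot achieve for all ordered pairs simultaneously. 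The paper obtains exactly this property by adjoining a single non-Euclidean query point $q$ whose distance is exactly $s$ to every point of the favored block except one designated point $p^*$ at distance $s-1$ (consistent with the triangle inequality because $s-1$ is at least the block diameter), and --- this is the idea your proposal is missing --- by an adversary argument over the family $\{D_{p^*}\}$ of metrics that all agree on $P\times P$: the favored $p^*$ is chosen only \emph{after} the graph is seen, which is what forces all ordered intra-block pairs rather than merely the edges into one fixed target. Without such a hidden choice (or another mechanism making every block-mate simultaneously unreachable except via the direct edge), per-pair witness queries in a fixed metric do not yield $\Omega(s^d\cdot n)$ edges.
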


Statement (1) of Theorem~\ref{thm:lbs} answers {\bf Q2.1} in a straightforward manner (it is worth mentioning that the theorem still holds even when the constant 2 in ``2-PG'' is replaced with any other constant greater than 1). To see the connection between Statement (2) and {\bf Q2.2}, note that the difference between $\lambda$ and $d$ converges to 0 when $d$ increases. The term $s^d \cdot n$ --- which is $(\fr{1}{2\eps})^d \cdot n$ --- is greater than $(\fr{1}{2 \eps})^{\lambda - \delta} \cdot n$ for any arbitrarily small constant $\delta > 0$ when $d$ is sufficiently large. Thus, an $(1+\eps)$-PG must have $\Omega((1/\eps)^\lambda \cdot n)$ edges up to a subpolynomial factor. This also provides a justification on the construction time in Theorem~\ref{thm:ub-metric}, which can no longer be improved by more than a sub-polynomial factor. It is worth mentioning that the parameter $t$ in Statement (2) permits the lower bound to hold for a wide range of $\eps$.

\vgap

The astute reader may have noticed from Statement (2) that, when $\eps = O(1/n^{1/\lambda})$, in the worst case every $(1+\eps)$-PG must have $\Omega(n^2)$ edges under our construction of $P$, essentially the worst possible! This does not contradict the result of \cite{dgm+24} --- which as mentioned before argues for the existence of a $(1+\eps)$-PG of size $O(n^{1.5} \log n)$ for any $\eps$ --- because the result of \cite{dgm+24} holds only in the (very) special case where $P = \M$.


\vgap

The hard instances utilized to establish Theorem~\ref{thm:lbs} are non-geometric. This prompts another intriguing question: does the Euclidean space allow the fast construction of a $(1+\eps)$-PG of $O((1/\eps)^\lambda \cdot n)$ edges and $(1/\eps)^\lambda \cdot \polylog (n\Delta)$ query time? Our last main result answers the question in the affirmative:

\begin{theorem} \label{thm:ub-geo}
	Given any set $P$ of $n$ points in the metric space $(\real^d, L_2)$ where $d = O(1)$, there is a $(1+\eps)$-PG that has $O((1/\eps)^{\lambda} \cdot n)$ edges and ensures a query time of $O((1/\eps)^\lambda \cdot \log^2 \Delta + (1/\eps)^{d-1} \log n \cdot \log^2 \Delta)$, where $n$ and $\Delta$ are the size and aspect ratio of the data input, respectively, and $\lambda = O(d)$ is the doubling dimension of $(\real^d, L_2)$. W.h.p., we can build such a graph in $(1/\eps)^{\lambda} \cdot n  \polylog (n \Delta)$ time.
\end{theorem}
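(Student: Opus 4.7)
The plan is to upgrade the metric construction of Theorem~\ref{thm:ub-metric} by exploiting Euclidean geometry to shave off the $\log \Delta$ factor in the edge count, while keeping essentially the same query and construction time. The key observation is that in $(\real^d, L_2)$ a compressed net-tree (equivalently a compressed quadtree) on $P$ has only $O(n)$ nodes, rather than the $O(n \log \Delta)$ nodes arising from a naive level-by-level hierarchy; so at most $O(n)$ distinct scales need to be represented in the graph, and each may contribute $O((1/\eps)^\lambda)$ edges without a $\log \Delta$ multiplier.

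First I would build a compressed net-tree $\T$ on $P$ in $O(n \polylog n)$ time. Each node $u \in \T$ is associated with a scale $r_u$ and a representative point $c_u \in P$; along any root-to-leaf path the scales decrease by factors of at least $2$, but consecutive scales may differ by a large factor when compression occurs. Then, mimicking the edge-insertion procedure behind Theorem~\ref{thm:ub-metric} but restricted to these active scales, I would install, for each node $u$, at most $O((1/\eps)^\lambda)$ outgoing edges from $c_u$ to nearby representatives at scale $r_u$. This yields $O((1/\eps)^\lambda \cdot n)$ long-range edges in total.

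To prevent greedy from stalling when compression skips many scales, I would overlay a bounded-out-degree Euclidean $(1+\eps)$-spanner on $P$ (e.g., a $\Theta$-graph or Yao graph), contributing $O((1/\eps)^{d-1} \cdot n)$ short-range edges; since $d-1 \le \lambda$, the total edge count remains $O((1/\eps)^\lambda \cdot n)$. Greedy correctness is then argued in two phases. In the macro phase, while $D(p^\circ, q)$ is much larger than the NN distance $D(p^*, q)$, the compressed long-range edges mimic the Theorem~\ref{thm:ub-metric} descent and account for the $O((1/\eps)^\lambda \log^2 \Delta)$ term. In the micro phase, once $p^\circ$ lies within a small multiple of $D(p^*, q)$ from $q$, the spanner supplies a $(1+\eps)$-approximate walk to $p^*$ in $O(\log n)$ hops of out-degree $O((1/\eps)^{d-1})$; interleaving the two phases across $O(\log \Delta)$ scales yields the additional $O((1/\eps)^{d-1} \log n \cdot \log^2 \Delta)$ term.

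For construction, approximate range-search structures such as balanced-box decomposition trees enumerate the $O((1/\eps)^\lambda)$ nearby representatives at each net-tree node in $\polylog(n\Delta)$ time, summed over $O(n)$ nodes; the spanner is built in $O((1/\eps)^{d-1} \cdot n \polylog n)$ time by standard randomized techniques, which also accounts for the w.h.p.\ qualifier. The main obstacle I expect is proving greedy correctness across compressed levels: when $\T$ collapses many scales, one must show that at every hop vertex $p^\circ$ with $D(p^\circ, q) > (1+\eps) \cdot D(p^*, q)$, either a long-range edge or a spanner edge yields a strictly closer neighbor, and moreover, sufficiently close to $p^*$, a neighbor within the $(1+\eps)$ approximation. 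Establishing this will demand a careful coupling of the net-tree covering guarantee with the spanner's hop-bounded $(1+\eps)$-approximation property, ensuring that the transition between macro and micro phases never creates a local minimum of $D(\cdot, q)$ outside the $(1+\eps)$-ball around $p^*$.
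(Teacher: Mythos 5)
There is a genuine gap, and it lies exactly where your outline is vaguest: the query-time analysis. Correctness is not the hard part --- your Theta-graph overlay is essentially the paper's $G_\geo$ (an $(\eps/32)$-graph, which by Lemma~\ref{lmm:euc:theta-graph} is by itself a $(1+\eps)$-PG), so the merged graph is trivially $(1+\eps)$-navigable. The problem is the two claims powering your time bound. First, the ``micro phase'' asserts that the spanner yields a $(1+\eps)$-approximate walk to $p^*$ in $O(\log n)$ hops. Theta/Yao graphs guarantee bounded \emph{stretch in path length}, not bounded \emph{hop count}: greedy on a Theta-graph can crawl through $\Theta(n)$ vertices, each hop shrinking $L_2(\cdot,q)$ by an arbitrarily small amount (think of many points in an annulus of radius $\Theta(D(p^*,q))$ around $q$, or points evenly spaced along a segment), even after the current vertex is already within a constant multiple of $D(p^*,q)$. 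No $O(\log n)$ hop bound is available, and your time bound does not follow. Second, the ``macro phase'' assumes the compressed net-tree edges ``mimic the Theorem~\ref{thm:ub-metric} descent''. But after compression, a hop vertex $p^\circ$ carries long-range edges only at the scales where it happens to be a representative, whereas the log-drop argument (Statement (2) of Lemma~\ref{lmm:alg1:pg-proof:help2}) needs an edge from $p^\circ$ at scale roughly $\lceil\log D(p^\circ,p^*)\rceil - \eta$, which may be precisely a skipped scale; and \textbf{greedy} is oblivious --- it always hops to the out-neighbor closest to $q$ and cannot be steered to some other vertex that does own an edge at the right scale. So neither phase of the claimed $O((1/\eps)^\lambda\log^2\Delta + (1/\eps)^{d-1}\log n\cdot\log^2\Delta)$ bound is established, and ``interleaving the phases'' is not something the algorithm can be made to do by design.

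The paper resolves both issues by a different mechanism: it keeps the \emph{full} multi-scale edge set of $G_\net$, but only at a random sample of vertices (each vertex sampled with probability $z/\log\Delta$), and merges this sparsified graph with the Theta-graph. W.h.p.\ every greedy walk inside the Theta-graph that lasts at least $\lceil \ln n\cdot\log\Delta\rceil$ hops passes a sampled ``jackpot'' vertex; at each jackpot vertex the log-drop of Lemma~\ref{lmm:alg1:pg-proof:help2} fires, so only $O(\log\Delta)$ jackpot hops occur, separated by at most $O(\log n\cdot\log\Delta)$ cheap hops of out-degree $O((1/\eps)^{d-1})$ --- this is where the $(1/\eps)^{d-1}\log n\cdot\log^2\Delta$ term really comes from. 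The guarantee is then extended from one query to all of $\real^d$ by a union bound over the $O(n^{2d})$ cells of the bisector arrangement, and the $O((1/\eps)^\lambda n)$ size bound is made to hold w.h.p.\ by repeating the construction $O(\log n)$ times and keeping the smallest graph. If you want to keep your deterministic compressed-net-tree route, you would have to prove a hop bound for the spanner phase or show that greedy provably reaches scale-appropriate vertices between long-range hops; as written, the proposal supplies neither.
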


For constant $\eps$ (an important use case in practice), our PG is the first in the literature that has $O(n)$ edges and guarantees $\polylog (n \Delta)$ query time, not to mention that it is also the first PG that can be constructed in $n \polylog (n \Delta)$ time. Its size bound draws a separation between the Euclidean space and general metric spaces (as per Statement (1) of Theorem~\ref{thm:lbs}).


\extraspacing {\bf A Paradigm Critique.} Our results enable an objective critique on the proximity-graph paradigm, at least in the regime where the doubling dimension is small. On the bright side, Theorem~\ref{thm:ub-metric} shows that it {\em is} possible to build a good PG in time near-linear to $n$ (expensive construction has been a major issue in the paradigm's literature). However, our hardness results in Theorem~\ref{thm:lbs} clearly indicate that {\em space} is an inherent defect of the paradigm. In particular, one should abandon the hope to attain a clean space complexity of $O(n)$, except in the restricted scenario where both $\eps$ and $\Delta$ are constants. In contrast, the theory field has already discovered \cite{cg06,hm06} a data structure of $O(n)$ space that can answer a $(1+\eps)$-ANN query in $O(\log n) + (1/\eps)^{O(\lambda)}$ time, regardless of $\Delta$. Our lower bounds, however, do not rule out a $(1+\eps)$-PG of $O((1/\eps)^\lambda \cdot n + n \log \Delta)$ edges. Finding a way to meet this bound or arguing against its possibility would make an interesting intellectual challenge.

\vgap

We emphasize that it is not our objective to dismiss PGs as an inferior paradigm. The constituting concepts of the paradigm are elegant, especially  the convenient flexibility in choosing the $p_\start$ point for {\bf greedy}, which suggests that the paradigm may have strengths in enforcing load-balancing in network-scale distributed computing (found in ``Internet-of-Things'' applications).

\section{A Proximity Graph with Fast Construction} \label{sec:alg1}

This section serves as a proof of Theorem~\ref{thm:ub-metric}. The key of our proof is to explain how ``$r$-nets'' --- a tool from computational geometry as defined below --- are useful for building proximity graphs:

\vgap

\minipg{0.85\linewidth}{
	Given a subset $X \subseteq \M$ and a value $r > 0$, an {\em $r$-net} of $X$ is a subset $Y \subseteq X$ satisfying:
	\myitems{
		\item {(separation property)} $D(y_1, y_2) \ge r$ for any two distinct points $y_1, y_2 \in Y$;
		\item {(covering property)}
		$X \subseteq \bigcup_{y \in Y} B(y, r)$, i.e., for $\forall x \in X$, $\exists$ a point $y \in Y$ with $D(x, y) \le r$.
	}
}

\vslit

\noindent The rest of the section is organized as follows. We will first define the proposed proximity graph in Section~\ref{sec:alg1:graph} and then analyze its properties, size, and query time in Sections~\ref{sec:alg1:pg-proof} and \ref{sec:alg1:space-qry}. Finally, Section~\ref{sec:alg1:construction} will explain how to construct the graph efficiently.

\subsection{The Graph} \label{sec:alg1:graph}

We consider that the smallest inter-point distance in $P$ is 2 (as can be achieved by scaling $D$ appropriately). In other words, the aspect ratio of $P$ is $\Delta = \diam(P) / 2$.
Define
\myeqn{
	h = \ceil{\log \diam(P)}. \label{eqn:alg1:h}
}
For each $i \in [0, h]$, define
\myeqn{
	Y_i = \text{a $2^i$-net of $P$}. \label{eqn:alg1:Y_i}
}
Note that $Y_0$ must be $P$ (as the smallest inter-point distance is 2). Furthermore, define
\myeqn{
	\eta &=& \ceil{\log(1 + 2/\eps)} \label{eqn:alg1:eta} \\
	\phi &=& 1+2^{\eta+1}. \label{eqn:alg1:rho}
}
Clearly, $\eta \ge 2$ and $9 \le \phi = \Theta(1/\eps)$.

\vgap

We now formulate a graph $G_\net$. Every vertex of $G_\net$ is a point in $P$ and vice versa. For each $p \in P$, decide its out-edges as follows:
\boxminipg{0.9\linewidth}{
	for each $i \in [0, h]$, create an edge $(p, y)$ in $G_\net$ for every $y \in Y_i$ satisfying $D(p, y) \le \phi \cdot 2^i$.
}

\begin{proposition} \label{prop:alg1:out-deg}
	Every vertex (a.k.a.\ point) in $G_\net$ has an out-degree at least 1.
\end{proposition}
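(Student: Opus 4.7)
The plan is to show that for every $p \in P$, the construction prescribes at least one out-edge to a vertex distinct from $p$ itself. Since $n \ge 2$, I can pick an arbitrary $p' \in P \setminus \{p\}$ and set $d := D(p, p')$. Because the smallest inter-point distance in $P$ equals $2$ (by our scaling convention), $d \ge 2$, and because $d \le \diam(P) \le 2^h$, the integer
\[
i^{*} \;:=\; \lceil \log d \rceil
\]
lies in $[1, h]$. The level I will aim for is $i := i^{*} - 1 \in [0, h-1]$, chosen so that the two bounds $2^{i} < d \le 2 \cdot 2^{i}$ both hold simultaneously.

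Next I would invoke the covering property of the $2^{i}$-net $Y_{i}$ applied at the point $p'$: there exists some $y \in Y_{i}$ with $D(p', y) \le 2^{i}$. The crucial observation is that $y$ cannot coincide with $p$, for $D(p', p) = d > 2^{i} \ge D(p', y)$. A single use of the triangle inequality then yields
\[
D(p, y) \;\le\; D(p, p') + D(p', y) \;\le\; d + 2^{i} \;\le\; 3 \cdot 2^{i} \;\le\; \phi \cdot 2^{i},
\]
where the last inequality uses $\phi \ge 9$ from \eqref{eqn:alg1:rho}. By the edge rule defining $G_{\net}$ at level $i$, this delivers an edge $(p, y)$ with $y \ne p$, so $p$ has out-degree at least $1$.

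The only mild subtlety, and the step I expect to be the crux, is picking the right level. Working directly at level $i^{*} = \lceil \log d \rceil$ would only guarantee $D(p', y) \le 2^{i^{*}}$, which does not preclude $y = p$ (so the resulting ``edge'' could degenerate into a self-loop, which the simple graph $G_{\net}$ forbids). Dropping one scale to $i = i^{*} - 1$ forces the covering witness for $p'$ to be strictly closer to $p'$ than $p$ is, making $y \ne p$ automatic, while still keeping $D(p, y) \le 3 \cdot 2^{i}$ comfortably inside the $\phi \cdot 2^{i}$ radius allowed by the construction.
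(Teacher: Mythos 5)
Your proof is correct, but it takes a different route from the paper's. You pick an arbitrary second point $p' \ne p$ (possible since $n \ge 2$), choose the scale $i = \lceil \log D(p,p') \rceil - 1$ so that $2^i < D(p,p') \le 2^{i+1}$, apply the covering property of $Y_i$ \emph{at $p'$} to get a net point $y$ with $D(p',y) \le 2^i$, note $y \ne p$ because $p$ is farther than $2^i$ from $p'$, and finish with the triangle inequality $D(p,y) \le 3 \cdot 2^i \le \phi \cdot 2^i$; all the index-range checks ($i \in [0,h-1]$, $2^i < D(p,p')$ even when the distance is a power of $2$) go through. The paper instead argues by cases on $p$ itself: if $p \notin Y_i$ for some level $i$, the covering property at $p$ immediately yields a distinct out-neighbor within distance $2^i$; otherwise $p$ lies in every net, and one takes the highest level $j$ with $|Y_j| \ge 2$ and bounds the distance to any other point of $Y_j$ either by $\diam(P) \le 2^h$ (if $j = h$) or via $|Y_{j+1}| = 1$ and the covering property (if $j < h$). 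Your argument buys a uniform, case-free proof at the mild cost of invoking the specific size of $\phi$ (any $\phi \ge 3$ suffices, comfortably below the paper's $\phi \ge 9$), whereas the paper's case analysis uses only $\phi \ge 2$ and never needs an auxiliary point $p'$; both are complete and short.
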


\begin{proof}
	Fix any point $p \in P$. If $p \notin Y_i$ for some $i \in [0, h]$, there is a point $y \in Y_i$ with $D(p, y) \le 2^i$ due to the covering property. This $y$ must be an out-neighbor of $p$. Next, we assume that $p \in Y_i$ for all $i \in [0, h]$.

	\vgap

	Denote by $j$ the highest value of $i$ satisfying $|Y_i| \ge 2$. Let $y$ be any point in $Y_i$ different from $p$. We argue that $y$ must be an out-neighbor of $p$, i.e., $D(p, y) \le \phi \cdot 2^j$. Indeed, this is true if $j = h$ because $D(p, y) \le \diam(P) \le 2^h$. Consider now $j < h$. It follows from the definition of $j$ that $|Y_{j+1}| = 1$, in which case the covering property tells us $D(p, y) \le 2^{j+1} < \phi \cdot 2^j$.
\end{proof}

\subsection{Properties of $\bm{G_\net}$} \label{sec:alg1:pg-proof}

Let $G$ be a simple directed graph whose vertices have one-one correspondence to the points in $P$. We say that $G$ is {\em $(1+\eps)$-navigable} if the following condition holds for every data point $p \in P$ and every query point $q \in \M$:
\myitems{
	\item either $p$ is a $(1+\eps)$-ANN of $q$,
	\item or $p$ has an out-neighbor $p_\out$ satisfying $D(p_\out, q) < D(p, q)$.
}
The following fact is folklore (see Appendix~\ref{app:alg1:pg-proof:navigable} for a proof):

\begin{fact} \label{fact:alg1:pg-proof:navigable}
    $G$ is a $(1+\eps)$-PG of $P$ if and only if $G$ is $(1+\eps)$-navigable.
\end{fact}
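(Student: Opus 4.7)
The plan is to prove the two directions of Fact~\ref{fact:alg1:pg-proof:navigable} separately; both are short and follow directly from how \textbf{greedy} is specified.

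For the ``$(1+\eps)$-navigable $\Rightarrow$ $(1+\eps)$-PG'' direction, I would fix an arbitrary start point $p_\start \in P$ and query $q \in \M$ and analyze what \textbf{greedy}$(p_\start, q)$ does. The paper already notes that consecutive hop vertices have strictly decreasing distance to $q$, so, since $P$ is finite, \textbf{greedy} terminates after at most $n$ iterations. Let $p^\circ$ be the returned vertex. The test on line~4 guarantees that $p^\circ$ either has no out-neighbor or has no out-neighbor whose distance to $q$ is strictly smaller than $D(p^\circ, q)$. The navigability hypothesis applied at $(p^\circ, q)$ then forces $p^\circ$ to be a $(1+\eps)$-ANN of $q$, which is exactly what is needed for $G$ to be a $(1+\eps)$-PG.

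For the converse direction, I would pick any $p \in P$ and any $q \in \M$ and invoke \textbf{greedy}$(p, q)$, i.e., with $p_\start = p$. Two cases arise depending on whether the first iteration exits through line~4. If it does, the returned vertex is $p$ itself, and since $G$ is assumed to be a $(1+\eps)$-PG, $p$ must be a $(1+\eps)$-ANN of $q$. If it does not, then line~5 is reached on the very first iteration, which means $p$ has an out-neighbor $p^+_\out$ with $D(p^+_\out, q) < D(p, q)$. Either outcome matches one of the two branches of the navigability definition at $(p, q)$, so navigability holds for every such pair.

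No step here presents a real obstacle; the only point worth being careful about is that \textbf{greedy} actually halts, so that ``the returned vertex'' is well defined in the first direction. This follows immediately from the strict monotonicity of $D(p^\circ, q)$ along the hop sequence together with the finiteness of $P$. The whole argument essentially amounts to reading the definitions of \textbf{greedy} and of navigability against each other.
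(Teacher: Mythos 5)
Your proposal is correct and follows essentially the same route as the paper's proof in Appendix~\ref{app:alg1:pg-proof:navigable}: both directions are obtained by running \textbf{greedy} and reading its termination condition against the definition of $(1+\eps)$-navigability (the paper phrases the arguments as small contradictions, you as direct case analyses, but the content is identical). Your explicit remark that termination follows from strict distance decrease plus finiteness of $P$ is a harmless extra detail the paper leaves implicit.
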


\noindent Next, we show that the graph $G_\net$ built earlier is $(1+\eps)$-navigable and, therefore, a $(1+\eps)$-PG of $P$. In addition, we will establish a {\em log-drop property} of $G_\net$ that is crucial for the technical development in the later parts of the paper.

\vgap


\begin{lemma} \label{lmm:alg1:pg-proof:help2}
	Fix an arbitrary point $q \in \M$, and an arbitrary point $p^\circ \in P$ that is not a $(1+\eps)$-ANN of $q$. Define
    \myeqn{
		p^+_\out = \text{the out-neighbor of $p^\circ$ closest to $q$.}
		\label{eqn:alg1:p+_out}
	}
	Both of the following statements are true:
    \myenums{
		\item $D(p^+_\out, q) < D(p^\circ, q)$.

		\vgap

		\item {\em {\bf [The log-drop property]}} Let $\varrho$ be any point in $P$ satisfying $D(\varrho, q) \le D(p^+_\out, q)$. If $\varrho$ is not a $(1+\eps)$-ANN of $q$, then
		\myeqn{
			\ceil{\log D(\varrho, p^*)} < \ceil{\log D(p^\circ, p^*)}
			\label{eqn:alg1:log-drop}
		}
		where $p^*$ is an exact NN of $q$.
    }
\end{lemma}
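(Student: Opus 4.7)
The plan is to exhibit a single well-chosen out-neighbor $y$ of $p^\circ$ and then read off both statements via triangle-inequality bookkeeping. Let $p^*$ be an exact NN of $q$ and set $r = D(p^\circ, p^*)$; since $p^\circ$ is not a $(1+\eps)$-ANN we have $p^\circ \neq p^*$, so $r \ge 2$ by the minimum-distance convention on $P$. Choose $i^*$ to be the smallest non-negative integer satisfying $2^{i^*} \ge r/(\phi-1) = r/2^{\eta+1}$; when $i^* \ge 1$ the minimality also gives $2^{i^*} < r/2^\eta$. By the covering property of $Y_{i^*}$, pick $y \in Y_{i^*}$ with $D(y, p^*) \le 2^{i^*}$. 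Then $D(p^\circ, y) \le r + 2^{i^*} \le (\phi-1)\cdot 2^{i^*} + 2^{i^*} = \phi \cdot 2^{i^*}$, so $y$ is an out-neighbor of $p^\circ$ in $G_\net$, and hence $D(p^+_\out, q) \le D(y, q) \le 2^{i^*} + D(p^*, q)$.

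For Statement 1, the definition of $\eta$ gives $2^\eta \ge 1 + 2/\eps = (2+\eps)/\eps$, whence $2^{i^*} < r/2^\eta \le r\eps/(2+\eps)$. I would then split into two cases. If $D(p^*, q) < r/(2+\eps)$, then $D(p^\circ, q) \ge r - D(p^*, q) > r(1+\eps)/(2+\eps)$, while $D(y, q) \le 2^{i^*} + D(p^*, q) < r(1+\eps)/(2+\eps)$. Otherwise $D(p^*, q) \ge r/(2+\eps)$, so $\eps D(p^*, q) \ge r\eps/(2+\eps) > 2^{i^*}$, giving $D(y, q) < (1+\eps) D(p^*, q) < D(p^\circ, q)$. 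Either way $D(p^+_\out, q) \le D(y, q) < D(p^\circ, q)$.

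For Statement 2, I would combine $D(\varrho, q) \le D(p^+_\out, q) \le 2^{i^*} + D(p^*, q)$ with the hypothesis $D(\varrho, q) > (1+\eps) D(p^*, q)$ to extract $D(p^*, q) < 2^{i^*}/\eps$, and then bound
$$D(\varrho, p^*) \le D(\varrho, q) + D(p^*, q) \le 2^{i^*} + 2 D(p^*, q) < 2^{i^*}(1 + 2/\eps) \le 2^{i^* + \eta}.$$
Since $2^{i^*} < r/2^\eta$, $2^{i^*+\eta}$ is a power of $2$ strictly less than $r = D(p^\circ, p^*)$, so $2^{i^*+\eta} \le 2^{\lceil \log r \rceil - 1}$, forcing $\lceil \log D(\varrho, p^*) \rceil \le \lceil \log r \rceil - 1 < \lceil \log D(p^\circ, p^*) \rceil$. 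The principal obstacle is precisely this last ``strict $<$'' step that promotes the raw analogue bound $D(\varrho, p^*) < r$ into a dyadic drop of at least one scale; it succeeds only because $2^{i^*+\eta}$ is a power of $2$, and it is why $\eta$ was defined as $\lceil \log(1+2/\eps)\rceil$ rather than something weaker. The only other delicate point is the boundary $i^* = 0$ (i.e.\ $r \le 2^{\eta+1}$), where the separation property of $Y_0 = P$ forces $y = p^*$; Statement 1 is then immediate and Statement 2 is vacuous since $D(\varrho, q) \le D(p^+_\out, q) \le D(p^*, q)$ would contradict $\varrho$'s non-$(1+\eps)$-ANN hypothesis.
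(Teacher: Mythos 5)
Your proposal is correct and follows essentially the same route as the paper: your level $i^*$ coincides exactly with the paper's $\beta = \max\{\alpha-\eta-1,\,0\}$ (where $\alpha = \lceil \log D(p^\circ,p^*) \rceil$), your net point $y$ is the paper's $y^\circ$, and the heart of the argument --- that $y$ is an out-neighbor of $p^\circ$ because $D(p^\circ,y) \le D(p^\circ,p^*) + 2^{i^*} \le \phi \cdot 2^{i^*}$, followed by triangle-inequality bookkeeping exploiting $2^\eta \ge 1 + 2/\eps$ --- is the same. The remaining differences are only organizational (you prove the two statements by direct case analysis, whereas the paper funnels both through a single Fact proved by contradiction), together with two harmless unstated checks, namely $i^* \le h$ so that $Y_{i^*}$ is defined and $y \ne p^\circ$ so that the edge is a genuine out-edge, both immediate from $D(y,p^*) \le 2^{i^*} < D(p^\circ,p^*) \le \diam(P)$ when $i^* \ge 1$ and from $y = p^* \ne p^\circ$ when $i^* = 0$.
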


Statement (1) of Lemma~\ref{lmm:alg1:pg-proof:help2} indicates that $G_\net$ is $(1+\eps)$-navigable.

\extraspacing {\bf Proof of Lemma~\ref{lmm:alg1:pg-proof:help2}.} Define:
\myeqn{
	\alpha &=& 
	\ceil{\log D(p^\circ, p^*)} \label{eqn:alg1:alpha} \\
	\beta &=& \max\{\alpha - \eta - 1, 0\} \label{eqn:alg1:beta}
}
where $\eta$ is given in \eqref{eqn:alg1:eta}. Note that $\alpha \ge 1$ and $0 \le \beta \le h$, where $h$ is given in \eqref{eqn:alg1:h}. Furthermore, $\beta \le \alpha - 1$. Define:
\myeqn{
	y^\circ = \text{an arbitrary point in $Y_\beta$ such that $D(p^*, y^\circ) \le 2^\beta$}
	\label{eqn:alg1:y-circ}
}
Such $y^\circ$ exists because $Y_\beta$ is a $2^\beta$-net of $P$ (the covering property). Note that $y^\circ \ne p^\circ$ because $D(p^\circ, p^*) > 2^{\alpha-1} \ge 2^\beta \ge D(y^\circ, p^*)$.
Using $\beta \ge \alpha - \eta - 1$ (see \eqref{eqn:alg1:beta}), we can derive
    \myeqn{
		D(p^\circ, y^\circ) \le D(p^\circ, p^*) + D(p^*, y^\circ) \le 2^\alpha + 2^\beta = 2^\beta \cdot (2^{\alpha-\beta} + 1) \le 2^\beta \cdot (2^{\eta + 1} + 1) = \phi \cdot 2^\beta. \nn
    }
    Hence, $y^\circ$ must be an out-neighbor of $p^\circ$ by how $G_\net$ is built.

%
%
		\begin{fact} \label{fact:alg1:pg-proof:help2:help1}
		    If a point $p \in P$ is not a $(1+\eps)$-ANN of $q$, either $\log D(p, p^*) \le \alpha - 1$ or $D(p, q) > D(y^\circ, q)$.
		\end{fact}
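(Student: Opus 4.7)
The plan is to prove the contrapositive: assuming both that $p$ is not a $(1+\eps)$-ANN of $q$ and that $D(p, p^*) > 2^{\alpha-1}$, show $D(p, q) > D(y^\circ, q)$. Write $x = D(p^*, q)$. The triangle inequality together with \eqref{eqn:alg1:y-circ} yields
\[
D(y^\circ, q) \le D(y^\circ, p^*) + D(p^*, q) \le x + 2^\beta,
\]
so the task reduces to lifting $D(p, q)$ strictly above $x + 2^\beta$.

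Two strict lower bounds on $D(p, q)$ will be available. The non-ANN hypothesis gives $D(p, q) > (1+\eps) x$, while the triangle inequality together with $D(p, p^*) > 2^{\alpha-1}$ gives $D(p, q) > 2^{\alpha-1} - x$. It then suffices to show that at least one of these two bounds is at least $x + 2^\beta$. Algebraically, this boils down to the parametric inequality $2^{\alpha-1} \ge 2^\beta(1 + 2/\eps)$, after which a case split on whether $x \ge 2^\beta/\eps$ closes the argument: in that range the first bound satisfies $(1+\eps)x \ge x + 2^\beta$, and in the complementary range $x < 2^\beta/\eps$ the second bound satisfies $2^{\alpha-1} - x \ge 2^\beta + 2\cdot 2^\beta/\eps - x > 2^\beta + x$. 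When $\beta = \alpha - \eta - 1 \ge 1$, the parametric inequality is immediate from $2^\eta \ge 1 + 2/\eps$, which holds by \eqref{eqn:alg1:eta}.

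The main obstacle is the boundary regime $\beta = 0$, where $\alpha$ can be as small as $1$ and the parametric inequality $2^{\alpha-1} \ge 1 + 2/\eps$ can fail badly. The plan is to bypass it entirely by exploiting the normalization from Section~\ref{sec:alg1:graph} that the minimum inter-point distance in $P$ equals $2$: combined with $D(p^*, y^\circ) \le 2^\beta = 1 < 2$ and $y^\circ \in Y_0 = P$, this forces $y^\circ = p^*$, so the target inequality collapses to $D(p, q) > x$. This is immediate from $D(p, q) > (1+\eps) x \ge x$ when $x > 0$; in the degenerate case $x = 0$ (where $q = p^*$), we have $p \ne p^*$ because $p$ is not the exact NN, so $D(p, q) = D(p, p^*) \ge 2 > 0 = D(y^\circ, q)$. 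Either regime yields the desired strict inequality.
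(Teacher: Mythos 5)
Your proof is correct, but it takes a genuinely different route from the paper's. Both arguments reduce to the contrapositive, assume $D(p,p^*) > 2^{\alpha-1}$, and split on whether $\beta = 0$ or $\beta = \alpha - \eta - 1 \ge 1$; the divergence is in the main ($\beta \ge 1$) case. The paper starts from the hypothesis $D(p,q) \le D(y^\circ,q)$, feeds it into chained triangle inequalities to extract the a priori bound $D(p,p^*) < \frac{2^{\eta+1}}{2^\eta-1} D(p^*,q)$, and then back-substitutes to conclude $D(p,q) < (1+\eps)D(p^*,q)$ directly, with the constraint $2^\eta \ge 1 + 2/\eps$ entering in the final step. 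You instead negate the ANN condition to obtain $D(p,q) > (1+\eps)x$, add a second lower bound $D(p,q) > 2^{\alpha-1} - x$ from the triangle inequality, and show by a case split on the threshold $x = 2^\beta/\eps$ that one of the two always exceeds $x + 2^\beta \ge D(y^\circ,q)$; the same constraint on $\eta$ appears as the explicit parametric inequality $2^{\alpha-1} \ge 2^\beta(1+2/\eps)$. Your version makes the role of $\eta$ very transparent and avoids the algebraic rearrangement in the paper, at the cost of an extra case split; the paper's version is slightly more compact. One small stylistic point: in your $\beta = 0$ case the first chain $D(p,q) > (1+\eps)x \ge x = D(y^\circ,q)$ already works for $x = 0$ as well, so the degenerate subcase is not strictly needed, though it does no harm.
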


	Before delving into the fact's proof, let us note how it implies Statements (1) and (2) of Lemma~\ref{lmm:alg1:pg-proof:help2}:
	\myitems{
		\item Applying the fact with $p = p^\circ$ tells us $D(p^\circ, q) > D(y^\circ, q)$ because \eqref{eqn:alg1:alpha} suggests $\log D(p^\circ, p^*) > \alpha - 1$. This, together with the definition of $p^+_\out$ in \eqref{eqn:alg1:p+_out}, proves Statement (1) of Lemma~\ref{lmm:alg1:pg-proof:help2}.

		\vgap

		\item Applying the fact with $p = \varrho$ proves Statement (2) because $D(\varrho, q) \le D(p^+_\out, q) \le D(y^\circ, q)$.
	}

	\vslit

	\begin{proof} [Proof of Fact~\ref{fact:alg1:pg-proof:help2:help1}]
	Suppose that $D(p, p^*) > 2^{\alpha - 1}$ and $D(p, q) \le D(y^\circ, q)$ hold simultaneously. We argue that in this case $p$ must be a $(1+\eps)$-ANN of $q$, which will then validate Fact~\ref{fact:alg1:pg-proof:help2:help1}.

	\vgap

	Consider first $\beta = 0$. From \eqref{eqn:alg1:y-circ}, we know $D(p^*, y^\circ) \le 1$, implying that $y^\circ = p^*$ (the inter-point distance in $P$ is at least 2). The condition $D(p, q) \le D(y^\circ, q)$ asserts that $p$ must be an exact NN of $q$. The subsequent discussion assumes $\beta = \alpha - \eta - 1 > 0$.

	\vgap

	By $D(p, p^*) > 2^{\alpha-1}$ and $D(y^\circ, p^*) \le 2^\beta = 2^{\alpha-\eta-1}$, we obtain
	\myeqn{
		&& D(y^\circ, p^*) < D(p, p^*) / 2^\eta \label{eqn:alg1:pg-proof:help2:help1:help1} \\
		&&\Rightarrow D(p,q) \le D(y^\circ, q) \le D(y^\circ, p^*) + D(p^*, q) <  D(p, p^*) / 2^\eta + D(p^*, q). \label{eqn:alg1:pg-proof:help2:help1:help2}
	}
	On the other hand, the triangle inequality shows
	\myeqn{
		D(p, p^*) &\le& D(p, q) + D(q, p^*) \nn \\
		\explain{by \eqref{eqn:alg1:pg-proof:help2:help1:help2}}
		&<&
		D(p, p^*) / 2^\eta + 2 \cdot D(p^*, q) \nn
	}
	Subtracting $D(p, p^*) / 2^\eta$ from both sides and rearranging terms gives:
	\myeqn{
		D(p, p^*) < \fr{2^{\eta + 1}}{2^\eta - 1} \cdot D(p^*, q) \nn
	}
	Plugging the above into \eqref{eqn:alg1:pg-proof:help2:help1:help2}, we obtain:
	\myeqn{
		D(p, q) < \fr{2}{2^\eta - 1} D(p^*, q) + D(p^*, q) \le (1+\eps) \cdot D(p^*, q) \nn
	}
	where the last step used $2^\eta - 1 \ge 2/\eps$ (due to \eqref{eqn:alg1:eta}). Hence, $p$ is a $(1+\eps)$-ANN of $q$.
\end{proof}


\subsection{Size and Query Time} \label{sec:alg1:space-qry}

The following is a rudimentary fact of metric spaces:

\begin{fact} \label{fact:alg1:pg-proof:aspect}
    Consider any subset $X$ $\subseteq$ $\M$.
    If $X$ has aspect ratio $A$, then $|X| = O(A^\lambda)$.
\end{fact}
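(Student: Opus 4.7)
\medskip

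\noindent\textbf{Proof proposal for Fact~\ref{fact:alg1:pg-proof:aspect}.} The plan is to use an iterated application of the doubling property to cover $X$ by balls whose radii are small enough that each ball can contain at most one point of $X$. Let $s$ denote the smallest inter-point distance in $X$, and let $D = \diam(X)$, so that by definition of aspect ratio $D = A \cdot s$. If $|X| \le 1$ the claim is trivial, so assume $|X| \ge 2$, which forces $s > 0$ and $D \ge s$.

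First I would fix an arbitrary point $p_0 \in X$ and observe that $X \subseteq B(p_0, D)$ by the definition of diameter. Next I would inductively apply the doubling property of $(\M, D)$: a ball of radius $r$ is covered by $2^\lambda$ balls of radius $r/2$, so by induction on $k$, the ball $B(p_0, D)$ can be covered by at most $2^{k\lambda}$ balls of radius $D/2^k$. I would then choose the smallest integer $k$ such that $D / 2^k < s/2$, i.e., $k = \ceil{\log(2D/s)} = \ceil{\log(2A)}$; since $A \ge 1$, this $k$ satisfies $k = O(\log A)$.

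The key observation is that any ball $B$ of radius $r < s/2$ contains at most one point of $X$: if two distinct points $x_1, x_2 \in X$ both belonged to $B$, then by the triangle inequality $D(x_1, x_2) \le 2r < s$, contradicting the definition of $s$ as the smallest inter-point distance. Therefore, the $2^{k\lambda}$ balls of radius $D/2^k < s/2$ that cover $B(p_0, D) \supseteq X$ each contain at most one point of $X$, giving
\[
    |X| \;\le\; 2^{k\lambda} \;\le\; 2^{\lambda \ceil{\log(2A)}} \;\le\; 2^{2\lambda} \cdot A^\lambda \;=\; O(A^\lambda),
\]
where the last step uses $\lambda = O(1)$.

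There is no real obstacle here; the only subtlety is calibrating $k$ so that $D/2^k$ drops strictly below $s/2$ (rather than just below $s$), which is what guarantees the ``at most one point'' property via the triangle inequality. Everything else is mechanical expansion of the doubling definition.
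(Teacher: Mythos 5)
Your proposal is correct and follows essentially the same argument as the paper's: cover $X$ by a ball of radius $\diam(X)$, iterate the doubling property until the radius drops strictly below half the minimum inter-point distance, and observe that each such ball contains at most one point of $X$, yielding $|X| \le 2^{O(\lambda)} A^\lambda = O(A^\lambda)$. The only nitpick is that your formula $k = \ceil{\log(2A)}$ can fall one short of strictness when $2A$ is a power of 2, but your verbal definition of $k$ (smallest integer with $D/2^k < s/2$) is the right one and the final bound is unaffected.
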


See Appendix~\ref{app:alg1:pg-proof:aspect} for a proof; recall that $\lambda$ is the doubling dimension of the metric space $(\M, D)$. Fact~\ref{fact:alg1:pg-proof:aspect} assures us that every vertex in $G_\net$ has an out-degree of $O(\phi^\lambda \cdot \log \Delta)$, where $\phi = \Theta(1/\eps)$ is given in \eqref{eqn:alg1:rho}. To see why, consider any point $p \in P$ and any $i \in [0, h]$. Set $X$ to the set of points in $Y_i$ that are out-neighbors of $p$. Recall that $p$ has an edge to $y \in Y_i$ only if $D(p, y) \le \phi \cdot 2^i$. Hence, $X$ is a subset of the ball $B(p, \phi \cdot 2^i)$, implying that $\diam(X) \le 2 \phi \cdot 2^i$. On the other hand, by definition of $2^i$-net, the distance between two (distinct) points in $X$ is at least $2^i$ (the separation property). Thus, the aspect ratio of $X$ is at most $2\phi$, which by Fact~\ref{fact:alg1:pg-proof:aspect} yields $|X| = O((2 \phi)^\lambda) = O(\phi^\lambda)$ because $\lambda = O(1)$. As $i$ has $h+1 = O(\log \Delta)$ choices, $p$ can have at most $O(\phi^\lambda \cdot \log \Delta)$ out-edges. The total number of edges in $G_\net$ is therefore $O((1/\eps)^\lambda \cdot n \log \Delta)$.

\vgap

Next, we prove that $G_\net$ guarantees a query time of $O(\phi^\lambda \cdot \log^2 \Delta)$. Recall that each {\em iteration} of {\bf greedy} --- Lines 3-5 of its pseudocode in Section~\ref{sec:intro:prob} --- visits a new hop vertex $p^\circ$. Once an iteration encounters a $p^\circ$ that is a $(1+\eps)$-ANN of $q$, it will definitely return a $(1+\eps)$-ANN of $q$ because the hop vertices encountered in the subsequent iterations can only be closer to $q$.

\vgap

We argue that, after at most $h$ iterations, the current hop vertex must be a $(1+\eps)$-ANN of $q$. Our weapon is Statement (2) of Lemma~\ref{lmm:alg1:pg-proof:help2}. Suppose that the hop vertex $p^\circ$ of a certain iteration is not a $(1+\eps)$-ANN of $q$. The next iteration of {\bf greedy} will hop to the vertex $p^+_\out$ defined in \eqref{eqn:alg1:p+_out}. If $p^+_\out$ is not a $(1+\eps)$-ANN either, setting $\varrho = p^+_\out$ in Statement (2) of Lemma~\ref{lmm:alg1:pg-proof:help2} yields:
\myeqn{
	\ceil{\log D(p^+_\out, p^*)} < \ceil{\log D(p^\circ, p^*)}.
	\label{eqn:alg1:space-qry:drop}
}
This means that the value of $\ceil{\log D(p^\circ, p^*)}$ at the beginning of an iteration must decrease by at least 1 compared to the previous iteration. This can happen at most $h$ times before $D(p^\circ, p^*)$ drops below 2, at which moment we must have $p^\circ = p^*$.

\vgap

Each iteration calculates $O(\phi^\lambda \cdot \log \Delta)$ distances because, as proved earlier, each point has an out-degree of $O(\phi^\lambda \cdot \log \Delta)$. The total query time is thus $O(\phi^\lambda \cdot \log^2 \Delta) = O((1/\eps)^\lambda \cdot \log^2 \Delta)$.

\subsection{Construction} \label{sec:alg1:construction}

A primary benefit of connecting proximity graphs to $r$-nets is that we can leverage the rich algorithmic literature of $r$-nets to construct $G_\net$ efficiently. Consider the following procedure:

\mytab{
	\> {\bf build} \\
	\> 1.\> compute $Y_0, Y_1, ..., Y_h$ (which are defined in \eqref{eqn:alg1:Y_i}) \\
	\> 2.\> {\bf for} each $i \in [0, h]$ {\bf do} \\
	\> 3.\>\> {\bf for} each point $p \in P$ {\bf do} \\
	\> 4.\>\>\> $S \leftarrow \set{y \in Y_i \mid D(p, y) \le \phi \cdot 2^i}$ \\
	\> 5.\>\>\> create an edge $(p, y)$ for each $y \in S$
}

Line 1 can be implemented in $O(n \log (n\Delta))$ time using an algorithm due to Har-Peled and Mendel \cite[Theorem 3.2]{hm06}. Next, we will concentrate on Lines 2-5.

\vgap

At Line 2, prior to entering Line 3, we create a data structure $T$ on $Y_i$ that allows us to answer 2-ANN queries on $Y_i$. The structure should be fully dynamic, i.e., it can support both insertions and deletions. Denote by $t_\qry$ the worst-case time for $T$ to answer a 2-ANN query, and by $t_\upd$ the worst-case time for $T$ to perform an insertion or deletion. Immediately, it follows that $T$ can be built in $O(|Y_i| \cdot t_\upd) = O(n \cdot t_\upd)$ time.

\vgap

At Line 4, we retrieve $S$ using $T$ as follows. Initially, $S = \emptyset$ and $T$ stores exactly the points in $Y_i$. We then repeatedly (i) find a 2-ANN $y$ of the point $p$ from $T$, (ii) add $y$ to $S$ if $D(p, y) \le \phi \cdot 2^i$, and (iii) delete $y$ from $T$. The repetition continues until $D(p, y) > 2\phi \cdot 2^i$ for the first time.

\vgap

We argue that the set $S$ thus computed is precisely the one needed at Line 4. Let $S_\del$ be the set of points removed from $T$, and $y_\last$ be the last point removed, i.e., $D(p, y_\last) > 2\phi \cdot 2^i$. If $S_\del$ misses a point $y' \in Y_i$ with $D(p, y) \le \phi \cdot 2^i$, then $y'$ must still remain in $T$. This, however, would contradict the fact that $y_\last$ is a 2-ANN of $p$ (among the points remaining in $T$) because $2 \cdot D(p, y') \le 2 \phi \cdot 2^i < D(p, y_\last)$.

\vgap

The retrieval of $S_\del$ incurs a running time of $O(|S_\del| \cdot (t_\qry + t_\upd))$. We argue that $|S_\del| = O(\phi^\lambda)$. Note that every point in $S_\del$ --- except $y_\last$ --- falls in $B(p, 2\phi \cdot 2^i)$. Thus, the diameter of $S_\del \setm \set{y_\last}$ is at most $4\phi \cdot 2^i$. On the other hand, because all the points of $S_\del$ come from $Y_i$, their inter-point distance is at least $2^i$. Hence, the aspect ratio of $S_\del \setm \set{y_\last}$ is at most $4\phi$. It then follows from Fact~\ref{fact:alg1:pg-proof:aspect} that $S_\del \setm \set{y_\last}$ has $O((4\phi)^\lambda) = O(\phi^\lambda)$ points.

\vgap

Prior to entering Line 5, we restore $T$ by inserting all the points of $S_\del$ back in $T$. This costs another $O(|S_\del| \cdot t_\upd)$ time. We can now conclude that Line 4 takes $O(\phi^\lambda \cdot (t_\qry + t_\upd))$ time for each point in $P$. As a result, the total running time of Lines 2-5 is $O(\phi^\lambda \cdot (t_\qry + t_\upd) \cdot n \cdot h)$.

\vgap

We have shown that {\bf build} runs in
\myeqn{
	O(n \log (n\Delta) + (1/\eps)^\lambda \cdot (t_\qry + t_\upd) \cdot n \log \Delta) \label{eqn:alg1:construction}
}
time overall. It remains to choose a good data structure for $T$. The structure of Cole and Gottlieb \cite{kl04} ensures $t_\qry = O(\log n)$ and $t_\upd = O(\log n)$. Plugging these bounds into \eqref{eqn:alg1:construction} gives the construction time claimed in Theorem~\ref{thm:ub-metric}.


\extraspacing {\bf Remark.} The above discussion has assumed that we know the minimum and maximum inter-point distances in $P$, denoted as $d_{min}$ and $d_{max}$, respectively (note: $d_{max} = \diam(P)$). The assumption can be removed using standard techniques \cite{hm06, kl04}. More specifically, we can obtain in $O(n \log n)$ time values $\hat{d}_{min} \in [\fr{1}{2} d_{min}, d_{min}]$ and $\hat{d}_{max} \in [d_{max}, 2d_{max}]$.\footnote{To compute $\hat{d}_{max}$, take an arbitrary point $p \in P$ and then set $\hat{d}_{max}$ $=$ $2 \max_{p' \in P} D(p, p')$. To compute $\hat{d}_{min}$, first build a 2-ANN structure on $P$. For each point $p \in P$, use the structure to find a 2-ANN $p'$ of $p$ and record the distance $D(p,p')$ for $p$. Then, $\hat{d}_{min}$ can be set to half of the smallest recorded distance of all points.} The ratio $\hat{d}_{max}/\hat{d}_{min}$ approximates the aspect ratio $\Delta$ up to a factor of 4. Our algorithm can then be applied after replacing $d_{min}$ and $\diam(P)$ with $\hat{d}_{min}$ and $\hat{d}_{max}$, respectively.

\section{A Size Lower Bound under $\bm{\eps = 1}$} \label{sec:lb1}

This section serves as a proof of Statement (1) of Theorem~\ref{thm:lbs}. Recall that we are given integers $\Delta$ and $n$ both of which are powers of 2; they satisfy the condition that $n \ge 2$ and $n^2 \le 2\Delta \le 2^n$.

\vgap

We will design a metric space by resorting to a complete binary tree $\T$ of $2\Delta$ leaves. The tree has $h + 1$ levels where $h = \log (2\Delta)$. We number the levels bottom-up, with the leaves at level 0 and the root at level $h$. For each node $u$ of $\T$, we use $\level(u)$ to represent its level. To each edge $\set{u, v}$ of $\T$ --- w.l.o.g., assume that $u$ is the parent of $v$ --- we assign a {\em weight} that equals 1 if $v$ is a leaf, or $2^{\level(v)-1}$ otherwise.

\vgap

We are now able to clarify the metric space $(\M, D)$:
\myitems{
	\item $\M$ is the set of leaves in $\T$;
	\item for any leaves $v_1, v_2$ in $\T$, their distance $D(v_1, v_2)$ equals the total weight of the edges on the unique simple path connecting $v_1$ and $v_2$ in $\T$.
}
When $v_1 \ne v_2$, our design of weights allows a simple calculation of $D(v_1, v_2)$: if the lowest common ancestor (LCA) of $v_1$ and $v_2$ is at level $\ell$, then $D(v_1, v_2) = 2^\ell$. It is easy to verify that $D$ satisfies identify of indiscernibles, symmetry, and triangle inequality. The doubling dimension of $(\M, D)$ is 1, as proved in Appendix~\ref{app:lb1:ddim}.

\begin{figure}
	\centering
    \includegraphics[height=55mm]{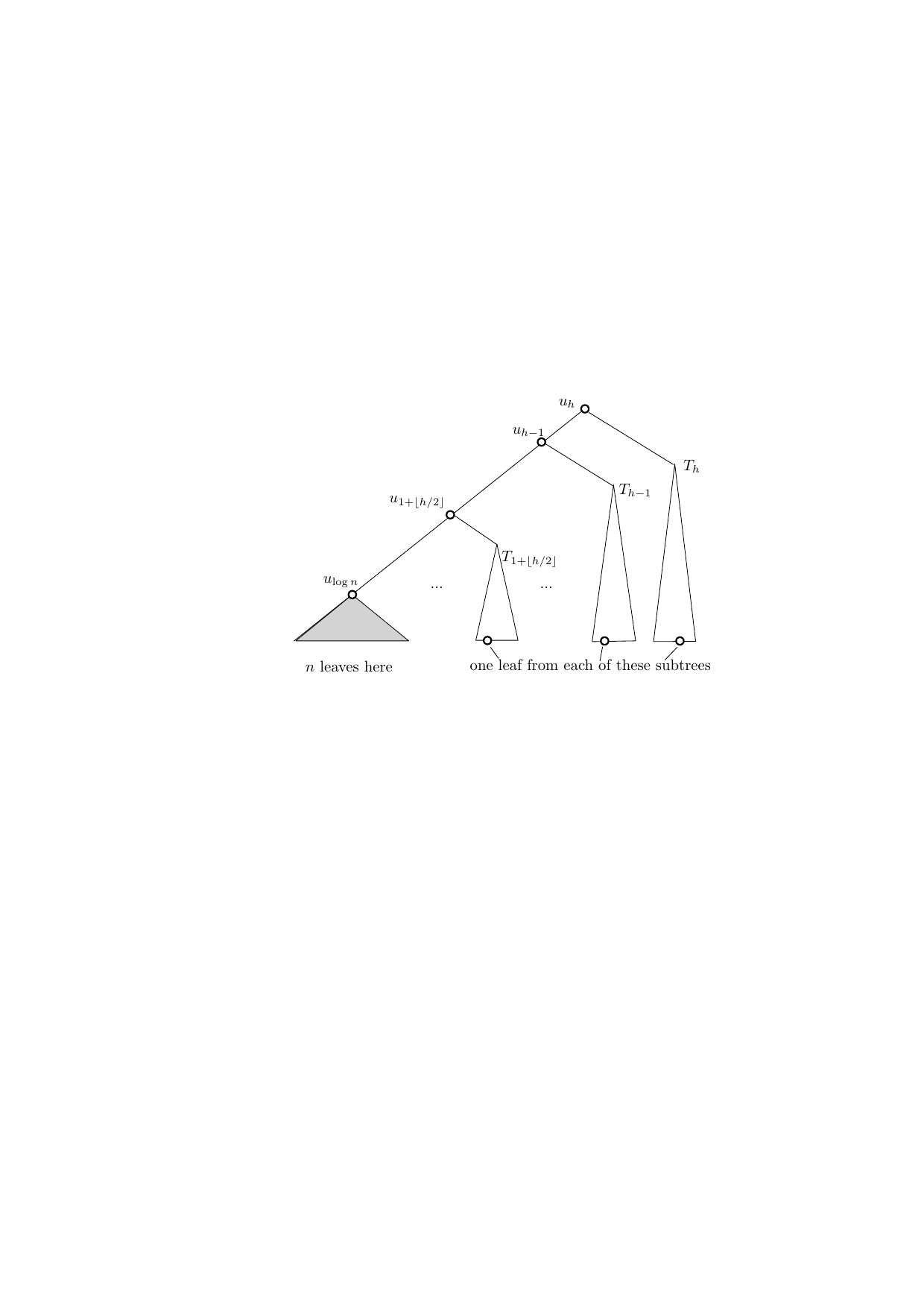}
    \figcapup
    \caption{Hard input for Section~\ref{sec:lb1}}
    \label{fig:lb1}
    \figcapdown
\end{figure}

\vgap

Next, we describe a set $P$ of $n$ points (a.k.a.\ leaves) from $\M$, which will serve as our hard input. Let $\pi$ be the leftmost root-to-leaf path of $\T$. For each $i \in [0, h]$, denote by $u_i$ be the level-$i$ node on $\pi$, and by $T_i$ the right subtree of $u_i$. We create $P$ as follows:
\myitems{
	\item For each $i \in (h/2, h]$, we add to $P$ one arbitrary leaf in $T_i$.
	\item Add to $P$ all the leaves in the subtree of $u_{\log n}$.
}
See Figure~\ref{fig:lb1} for an illustration. We will use $P_1$ (resp., $P_2$) to represent the set of leaves added in the first (resp., second) bullet. As $\log n \le \fr{1}{2} \log (2\Delta) = h/2$, the sets $P_1$ and $P_2$ are disjoint. The total size of $P$ is $n + \floor{h/2}$, which is between $n$ and $3n/2$ (recall that $2\Delta \le 2^n$ and hence $h \le n$). Note that $|P_1| = n$ and $|P_2| = \floor{h / 2} \ge 1$ because $2\Delta \ge n^2 \ge 4$. The reader can verify that $\diam(P) = 2^h = 2\Delta$ and the smallest inter-point distance of $P$ is 2; hence, the aspect ratio of $P$ is $\Delta$.

\vgap

Consider any 2-PG $G$ of $P$; by Fact~\ref{fact:alg1:pg-proof:navigable}, the graph $G$ needs to be 2-navigable. We will argue that $G$ must have an edge $(v_1, v_2)$ for every $(v_1, v_2) \in P_1 \times P_2$. As a result, the number of edges in $G$ must be at least $|P_1||P_2| = \Omega(n \log \Delta)$, as claimed in Statement (a) of Theorem~\ref{thm:lbs}.

\vgap

Assume, for contradiction, that $G$ has no edge $(v_1, v_2)$ for some $v_1 \in P_1$ and $v_2 \in P_2$. We will show that $G$ cannot be 2-navigable. W.l.o.g., assume that $v_2$ is in $T_\ell$ for some $\ell \in (h/2, h]$. It thus holds that $D(v_1, v_2) = 2^\ell$ because the LCA of $v_1$ and $v_2$ is $u_\ell$. Let us set $q = v_2$; as $q \in P$, the NN of $q$ is $v_2$ itself (with the NN-distance $D(q, v_2) = 0$). Hence, $v_1$ is not a 2-ANN of $q$. We claim that $v_1$ has no out-neighbor in $G$ that is closer to $q$ than $v_1$, because of which $G$ is not 2-navigable.

\vgap

Let $p_\out$ be an arbitrary out-neighbor of $v_1$; to prove the claim, it suffices to explain why $D(p_\out, q) \ge D(v_1, q)$. Clearly, $p_\out \neq v_2$ because the edge $(v_1, v_2)$ is absent in $G$. Where else can $p_\out$ be? If $p_\out$ is a descendant of $u_i$ for some $i \le \ell - 1$, then the LCA of $p_\out$ and $q = v_2$ must be $u_\ell$, because of which $D(p_\out, q) = 2^\ell = D(v_1, q)$. If $p_\out$ is in $T_k$ for some $i \ge \ell + 1$, then the LCA of $p_\out$ and $q = v_2$ is $u_i$, because of which $D(p_\out, q) = 2^i > D(v_1, q)$. As no other cases are possible, we conclude that $G$ is not 2-navigable.

\section{A Size Lower Bound under Small $\bm{\eps}$} \label{sec:lb2}

This section serves as a proof of Statement (2) of Theorem~\ref{thm:lbs}. Let us start by introducing several useful notations. Recall that the statement assumes that three integers $s \ge 2, t \ge 1$, and $d \ge 1$ have been given. We use the notation $\Zdom_s$ to represent the set $\set{0, 1, ..., s-1}$. Given two points $p, w \in \real^d$, we define the output of $p + w$ to be the point whose $i$-th coordinate is $p[i] + w[i]$ for each $i \in [d]$. If $S$ is a (possibly infinite) set of points in $\real^d$, given a point $w$, we define the {\em $w$-translated copy} of $S$ to be $\set{p + w \mid p \in S}$.

\vgap

Define $M = (\Zdom_s)^d$,
which is a set of $s^d$ points. Furthermore, define
\myeqn{
	W &=& \set{(i \cdot 2s, 0, 0, ..., 0) \mid i \in [0, t-1]}
	\label{eqn:lb2:W}
}
namely, every point $w \in W$ has a non-zero coordinate only on the first dimension, with $w[1]$ being a multiple of $2s$ in $[0, 2s (t-1)]$. For each $w \in W$, define
\myeqn{
	M_w &=& \text{the $w$-translated copy of $M$}. \nn
}
We will refer to each $M_w$ as a {\em block}.

\vgap

The hard data input we use is
\myeqn{
	P &=& \bigcup_{w \in W} M_w. \label{eqn:lb2:P}
}
See Figure~\ref{fig:lb2} for an illustration in $\real^2$. It is clear that $n = |P| = s^d \cdot t$.

\begin{figure}
	\centering
    \includegraphics[height=25mm]{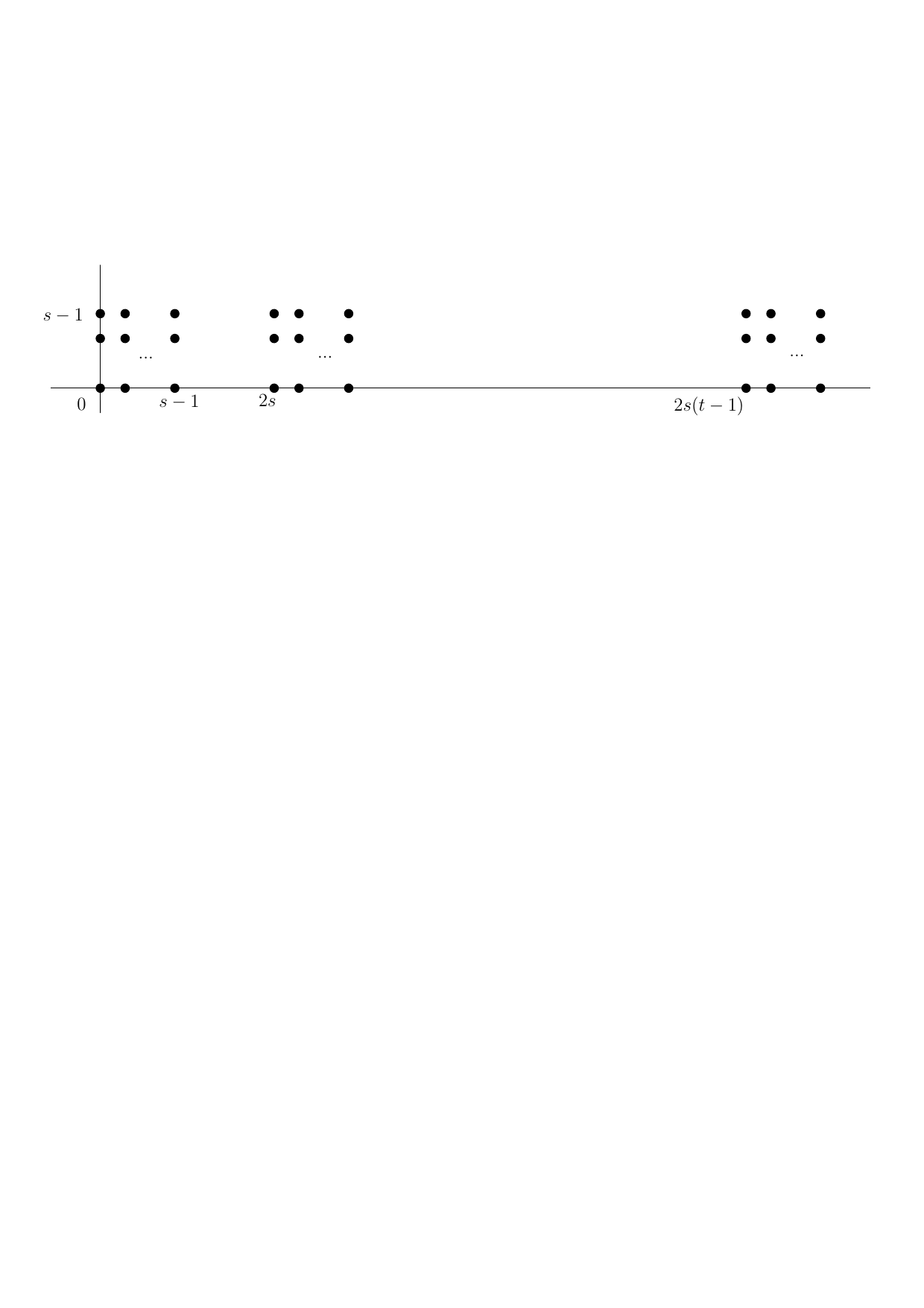}
    \figcapup
    \caption{Hard input for Section~\ref{sec:lb2}}
    \label{fig:lb2}
    \figcapdown
\end{figure}

\vgap

Next, we will design the metric space $(\M, D)$. The set $\M$ includes $P$ and one extra point $q$, which is non-Euclidean (i.e., $q$ is not in $\real^d$). The definition of $D$ --- which will be clarified later --- depends on $p^*$, which is a point in $P$. As we will see, varying the choice of $p^*$ will result in a different $D$. For this reason, we will represent the distance function as $D_{p^*}$. This gives rise to a set of distance functions:
\myeqn{
	\D &=& \set{D_{p^*} \mid p^* \in P}. \label{eqn:lb2:class-D}
}
We now specify the details of $D_{p^*}$. Denote by $w^*$ the unique point in $W$ such that $p^*$ is in the block $M_{w^*}$. Then:
\myitems{
	\item for any $p_1, p_2 \in P$, define $D_{p^*}(p_1, p_2) = D_{p^*}(p_2, p_1) = L_\infty(p_1, p_2)$;
	\item for any $p \in P \setm M_{w^*}$, define $D_{p^*}(p, q) = D_{p^*}(q, p) = L_\infty(p, w^*)$;
	\item for any $p \in M_{w^*}$ and $p \ne p^*$, define $D_{p^*}(p, q) = D_{p^*}(q, p) = s$;
	\item define $D_{p^*}(p^*, q) = D_{p^*}(q, p^*) = s-1$;
	\item define $D(q, q) = 0$.
}
We prove in Appendix~\ref{app:lmm:lb2:D}:

\begin{lemma} \label{lmm:lb2:D}
    For every $p^* \in P$, $(\M, D_{p^*})$ is a metric space with doubling dimension $\lambda \le \log (1 + 2^d)$.
\end{lemma}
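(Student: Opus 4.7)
The proof splits into verifying the metric axioms and bounding the doubling dimension. For the metric axioms, identity of indiscernibles and symmetry are immediate from the piecewise definition of $D_{p^*}$; the triangle inequality reduces to a case analysis on how many of the three involved points equal $q$. If none do, $D_{p^*}$ restricts to $L_\infty$ on $\mathbb{R}^d$ and the inequality is standard. If exactly one of the three points is $q$, I split further into sub-cases on whether the two $P$-points lie in $M_{w^*}$ or in $P \setminus M_{w^*}$, and check each of the three resulting inequalities using: (a) the bounds $L_\infty(w^*, p) \le s-1$ for $p \in M_{w^*}$ and $L_\infty(w^*, p) \ge s+1$ for $p \in P \setminus M_{w^*}$, the latter following from the $2s$-separation of blocks along dimension $1$; (b) the integer-grid gap $L_\infty(p, p') \ge 1$ for distinct $p, p' \in M_{w^*}$, which absorbs the $s{-}1$ vs $s$ discrepancy at $p^*$.

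For the doubling-dimension bound I need to show that every ball $B_\M(x, r)$ admits a covering by at most $1 + 2^d$ balls of radius $r/2$ centered in $\M$. The key observation is that the map $\psi : \M \to \mathbb{R}^d$ that is the identity on $P$ and sends $q$ to $w^*$ is non-expansive---that is, $L_\infty(\psi(u), \psi(v)) \le D_{p^*}(u, v)$ for all $u, v \in \M$, which follows from the same facts used above. Hence $B_\M(x, r) \cap P$ lies inside the $L_\infty$-hypercube $B_{L_\infty}(\psi(x), r)$ of side $2r$. My covering takes the form of one ``central'' ball plus $2^d$ ``orthant'' balls: the central ball is $B_\M(q, r/2)$ when $q \in B_\M(x, r)$ and $B_\M(x, r/2)$ otherwise; the orthant balls correspond to the canonical partition of the ambient hypercube into $2^d$ sub-hypercubes $C_\epsilon$ for $\epsilon \in \{-1, +1\}^d$, each itself an $L_\infty$-ball of radius $r/2$. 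For each non-empty orthant I will exhibit a $P$-point $c_\epsilon$ such that $B_\M(c_\epsilon, r/2)$ covers the $P$-points of $C_\epsilon \cap B_\M(x, r)$ not already captured by the central ball; this yields $1 + 2^d$ balls in total, as required.

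The main obstacle is choosing the orthant centers $c_\epsilon$ from the discrete set $P$. For small $r$ (roughly $r \le 2s$), the ball $B_\M(x, r)$ lies essentially in $\{q\} \cup M_{w^*}$ and the $2^d$ orthant balls reduce to a straightforward $L_\infty$-doubling of the single grid $M_{w^*}$, which is a shifted copy of $(\Zdom_s)^d$. For larger $r$, each orthant $C_\epsilon$ may straddle several blocks of $P$; my plan is to place $c_\epsilon$ near the outer corner of $C_\epsilon$ and exploit the uniform $2s$-spacing of blocks along dimension $1$, together with the integer-grid regularity inside each block, to verify that the radius-$r/2$ $L_\infty$-ball around $c_\epsilon$ captures every $P$-point of $C_\epsilon \cap B_\M(x, r)$. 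Care is needed at the interface between the central and orthant balls, especially for $M_{w^*}$-points not absorbed by the central ball when $r/2 < s$; these will be handled by the inclusions $M_{w^*} \subseteq B_{L_\infty}(w^*, s-1)$ and the bound $D_{p^*}(p, q) \le s$ for every $p \in M_{w^*}$, which allow an orthant ball centered at a suitably chosen $M_{w^*}$-point to absorb the leftovers.
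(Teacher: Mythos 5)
Your treatment of the metric axioms is fine and essentially matches the paper's (the same case split on whether one of the three points is $q$ and whether the $P$-points lie in $M_{w^*}$, using the same distance bounds), and your non-expansive map $\psi$ is a clean repackaging of the relationship the paper states between $B_\infty(p,r)$ and its ``$D_{p^*}$-corresponding ball'' (they differ by at most the single point $q$). The gap is in the doubling-dimension half, which is exactly the part that needs the real work: you never actually produce the covering. The paper reduces the problem to covering $L_\infty$-balls of $(P,L_\infty)$ by $2^d$ radius-$r/2$ balls centered at points of $P$ (and adds one ball at $q$, handling $q$-centered balls via $B_{p^*}(q,r)=\set{q}\cup B_\infty(w^*,r)$ for $r\ge s$); your proposal instead promises to exhibit, for each orthant sub-cube $C_\eps$ of the side-$2r$ hypercube around $\psi(x)$, a center $c_\eps\in P$ whose radius-$r/2$ ball absorbs the $P$-points of $C_\eps$ not caught by the central ball --- but this is precisely the hard step, and it is left as ``my plan is'' / ``I will exhibit'' / ``care is needed'' without an argument.

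Moreover, the concrete plan you do state --- ``place $c_\eps$ near the outer corner of $C_\eps$'' --- fails in identifiable configurations. A side-$r$ sub-cube has $L_\infty$-diameter $r$, so the only point within $r/2$ of all of it is its exact center; a corner placement only works if the leftover points cluster near that corner, which need not happen. For instance, take $d=2$, $x$ at the lowest corner of its block and $r=s-1$: the positive orthant contains the entire block, and the points not within $r/2$ of $x$ include $(s-1,0)$ and $(0,s-1)$ (relative to the block corner), which are at $L_\infty$-distance exactly $r$ from each other; the unique admissible center is $\bigl(\tfrac{s-1}{2},\tfrac{s-1}{2}\bigr)$, which is not a grid point when $s-1$ is odd, so neither a corner placement nor any $P$-point serves as a single orthant center, and you would have to redistribute these points to other (nearly empty) orthant balls --- an argument you have not given. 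Similarly, your choice of central ball $B_{\M}(q,r/2)$ whenever $q$ lies in the ball does not capture the $M_{w^*}$-points (they are at $D_{p^*}$-distance $s$ or $s-1$ from $q$ regardless of how close they are to $\psi(x)$ in $L_\infty$), so the inner region near $\psi(x)$ must also be charged to the orthant balls, further straining the $1+2^d$ budget. Until the selection of the $\M$-centered covering balls is carried out (using the grid structure of the blocks and the $2s$-spacing, with the parity and interface issues resolved), the doubling bound $\lambda\le\log(1+2^d)$ is not established.
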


Set $\eps = 1/(2s)$, as in Statement (2) of Theorem~\ref{thm:lbs}. When an algorithm constructs a $(1+\eps)$-PG $G$ of $P$, it has access only to the points in $P$, but not the non-Euclidean point $q$. Thus, the algorithm can evaluate (as it wishes) only the distances between the points in $P$, but not the distance between $q$ and any $p \in P$.

\vgap

The above observation gives rise to an adversarial argument. Imagine an adversary --- named Alice --- who does not finalize the function $D$ until after seeing the PG $G$ produced by the algorithm. Of course, Alice cannot lie: her ultimate choice of $D$ must be consistent with the distances already exposed to the algorithm. However, this will not be a problem as long as she chooses $D$ from the class $\D$ (see \eqref{eqn:lb2:class-D}), noticing that every $D \in \D$ gives exactly the same $D(p_1, p_2)$ for any $p_1, p_2 \in P$.

\vgap

Next, we will argue that, for every $w \in W$ (see \eqref{eqn:lb2:W} for $W$) and any distinct $p_1, p_2$ in the same block $M_w$, there must be an edge $(p_1, p_2)$ in $G$. As each block has $s^d$ points and there are $t$ blocks, the total number of edges in $G$ will then be at least
\myeqn{
	s^d \cdot (s^d - 1) \cdot t = \Omega(s^d \cdot n) \nn
}
because $s \ge 2$. This will establish Statement (2) of Theorem~\ref{thm:lbs}.

\vgap

Assume, for contradiction, that $G$ has no edge $(p_1, p_2)$ for some points $p_1$ and $p_2$ that appear in the same block $M_w$, for some $w \in W$. Seeing this, adversary Alice sets $p^*$ to $p_2$ and thereby finalizes $D$ to $D_{p_2}$. We will see that under the metric space $(\M, D_{p_2})$, the graph $G$ cannot be $(1+\eps)$-navigable, which by Fact~\ref{fact:alg1:pg-proof:navigable} means that $G$ cannot be a $(1+\eps)$-PG of $P$.

\vgap

Under $D_{p_2}$, the NN of $q$ is $p_2$ with $D(q, p_2) = s - 1$; indeed, by our design $D(q, p) \ge s$ for all the other points $p \in P$. Furthermore, as $p_1$ is from the same block as $p_2$, we have $D(q, p_1) = s$. This means that $p_1$ is not a $(1+\eps)$-ANN of $q$ because $s > s - \fr{1}{2} - \fr{1}{2s} = (s-1) (1 + \fr{1}{2s}) = (s-1) (1 + \eps)$. We claim that $p_1$ has no out-neighbor in $G$ that is closer to $q$ than $p_1$, because of which $G$ is not $(1+\eps)$-navigable.

\vgap

Let $p_\out$ be an arbitrary out-neighbor of $p_1$; to prove our  claim above, it suffices to explain why $D(p_\out, q) \ge D(p_1, q)$. Clearly, $p_\out \ne p_2$ because the edge $(p_1, p_2)$ is absent in $G$. If $p_\out$ is in $M_w$ (i.e., the block of $p_2$), then $D(p_\out, q) = s = D(p_1, q)$. If $p_\out$ is a block different from $M_w$, then $D(p_\out, q) = L_\infty(p_\out, w)$, which is at least $s+1$ and hence greater than $D(p_1, q)$. We thus conclude that $G$ is not $(1+\eps)$-navigable.

\vgap

Finally, let us note that, under any $D \in \D$, the maximum distance (under $D$) between two points in $S$ is less than $2st = O(n)$, while the smallest inter-point distance is 1. Hence, the aspect ratio of $P$ is $O(n)$. This completes the proof of Theorem~\ref{thm:lbs}.

\extraspacing {\bf Remark.} When $t = 1$, the set $P$ degenerates into a hard input used in \cite{hm06} to prove a lower bound on the {\em query time} of ANN data structures. Generalizing that hard input to establish a {\em size} lower bound for $(1+\eps)$-PGs under a wide range of $\eps$ demands additional ideas, as we have shown above.

\section{Smaller Proximity Graphs in the Euclidean Space} \label{sec:euc}

This section serves as a proof of Theorem~\ref{thm:ub-geo}. The goal is to improve the size bound of Theorem~\ref{thm:ub-metric} in the special metric space of $(\real^d, L_2)$ by shaving-off the $\log \Delta$ factor. Our discussion will assume that the smallest inter-point distance in $P$ is 2 (as can be achieved by scaling the dimensions of $\real^d$) and that the value of $\diam(P)$ is available. The assumption can be removed using the same techniques explained in the remark of Section~\ref{sec:alg1:construction}.

\vgap

Let us first apply the algorithm of Theorem~\ref{thm:ub-metric} to obtain a $(1+\eps)$-PG for the data input $P$ in $(\real^d, L_2)$; we will denote the graph as $G_\net$, where the subscript reminds us that it is obtained using an algorithm designed for general metric spaces. As analyzed in Section~\ref{sec:alg1:space-qry}, each vertex of $G_\net$ has an out-degree of $O((1/\eps)^\lambda \cdot \log \Delta)$ such that the graph has $O((1/\eps)^\lambda \cdot n \log \Delta)$ edges in total (recall that $\lambda$ is the doubling dimension of $(\real^d, L_2)$).

\vgap

Consider the following drastic idea to ``force'' the edge number to drop by a $\log \Delta$ factor:
\myitems{
	\item sample each vertex independently with probability
	\myeqn{
		\tau = \fr{z}{\log \Delta}
		\label{eqn:euc:tau}
	}
	where $z$ is a constant to be determined later;

	\item keep the edges of only the sampled vertices and discard all other edges (non-sampled vertices are retained, even though their out-degrees are now 0).
}
In expectation, the resulting graph --- denoted as $G'_\net$ --- has $O((1/\eps)^\lambda \cdot n)$ edges, exactly what we hope for. However, the idea does not work (yet) because $G'_\net$ may no longer be a $(1+\eps)$-PG of $P$.

\vgap

A second idea now kicks in: how about ``patching up'' $G'_\net$ by merging it with a ``small-but-slow'' $(1+\eps)$-PG $G_\geo$ that has $O((1/\eps)^\lambda \cdot n)$ edges but possibly very poor query time? The subscript of $G_\geo$ serves as a reminder that $G_\geo$ is indeed a blessing of geometry --- Statement (1) of Theorem~\ref{thm:lbs} has ruled out the existence of such a graph in every general metric space, no matter how bad the query time is.
Formally, the merging of $G'_\net$ and $G_\geo$ gives us a graph $G$ defined as follows:
\myitems{
	\item The vertices of $G$ have one-one correspondence to $P$ (recall that both $G'_\net$ and $G_\geo$ have the same vertex set, i.e., $P$).
	\item The out-edge set of each point $p \in P$ in $G$ is the union of those in $G'_\net$ and $G_\geo$.
}

\vslit

The rest of the section will develop the above ideas into a concrete algorithm to build a proximity graph meeting the requirements of Theorem~\ref{thm:ub-geo}.

\subsection{Small-but-Slow Proximity Graphs} \label{sec:euc:small-slow}

This subsection deals with the following problem: given a set $P$ of $n$ points in $\real^d$, build a $(1+\eps)$-PG $G_\geo$ of $P$ of $O((1/\eps)^\lambda \cdot n)$ edges under $L_2$ norm. Arya and Mount \cite{am93} have proven such a graph's existence, but their construction takes $\Omega((1/\eps)^d \cdot n^2)$ expected time. Our objective is to achieve a near-linear dependence on $n$ in construction time. To achieve the objective,  we will introduce a variant of the so-called ``$\theta$-graph'' known to permit fast construction. Then, we will prove that an appropriate choice of $\theta$ will guarantee that the graph is a $(1+\eps)$-PG of $P$.

\vgap

A {\em halfspace} in $\real^d$ is the set of points $\set{x \in \real^d \mid \sum_{i=1}^d x[i] \cdot c_i \ge c_{d+1}}$, where $c_1, c_2, ..., c_{d+1}$ are constants. The boundary of the halfspace is the plane $\sum_{i=1}^d x[i] \cdot c_i = c_{d+1}$. A set of halfspaces is said to be in {\em general position} if no two halfspaces have parallel boundary planes. A (simplicial) {\em cone} $C$ is the intersection of $d$ halfspaces in general position. The {\em apex} of $C$ is the intersection of the boundary planes of those halfspaces, and the {\em angular diameter} is the largest angle between two rays inside $C$ emanating from the apex; see Figure~\ref{fig:euc:cone}(a) for a 2D example.

\vgap

For any angle (measured in Radians) $\theta$ satisfying $0 < \theta < \pi$, Yao \cite{y82} gave an algorithm to compute in $O((1/\theta)^{d-1})$ time a set $\C$ of cones with the properties below:
\myitems{
	\item each cone in $\C$ has its apex at the origin and has an angular diameter at most $\theta$;
	\item the union of all cones in $\C$ is $\real^d$.
}
Let us associate each cone $C \in \C$ with an arbitrary ray --- denoted as $\rho_C$ --- that emanates from the origin and is contained in $C$.

\begin{figure}
	\centering
	\begin{tabular}{cc}
	 	\includegraphics[height=22mm]{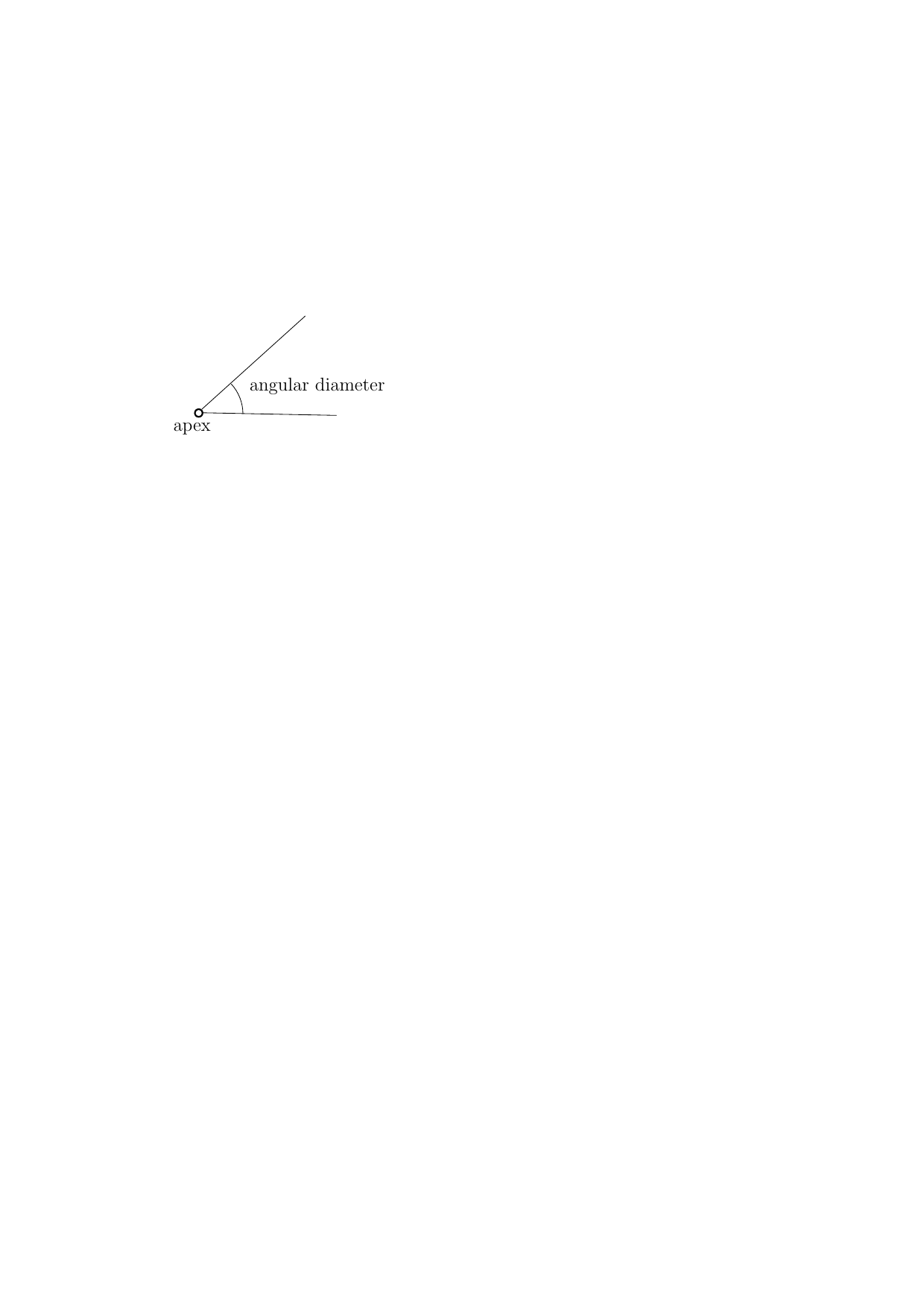} &
		\includegraphics[height=30mm]{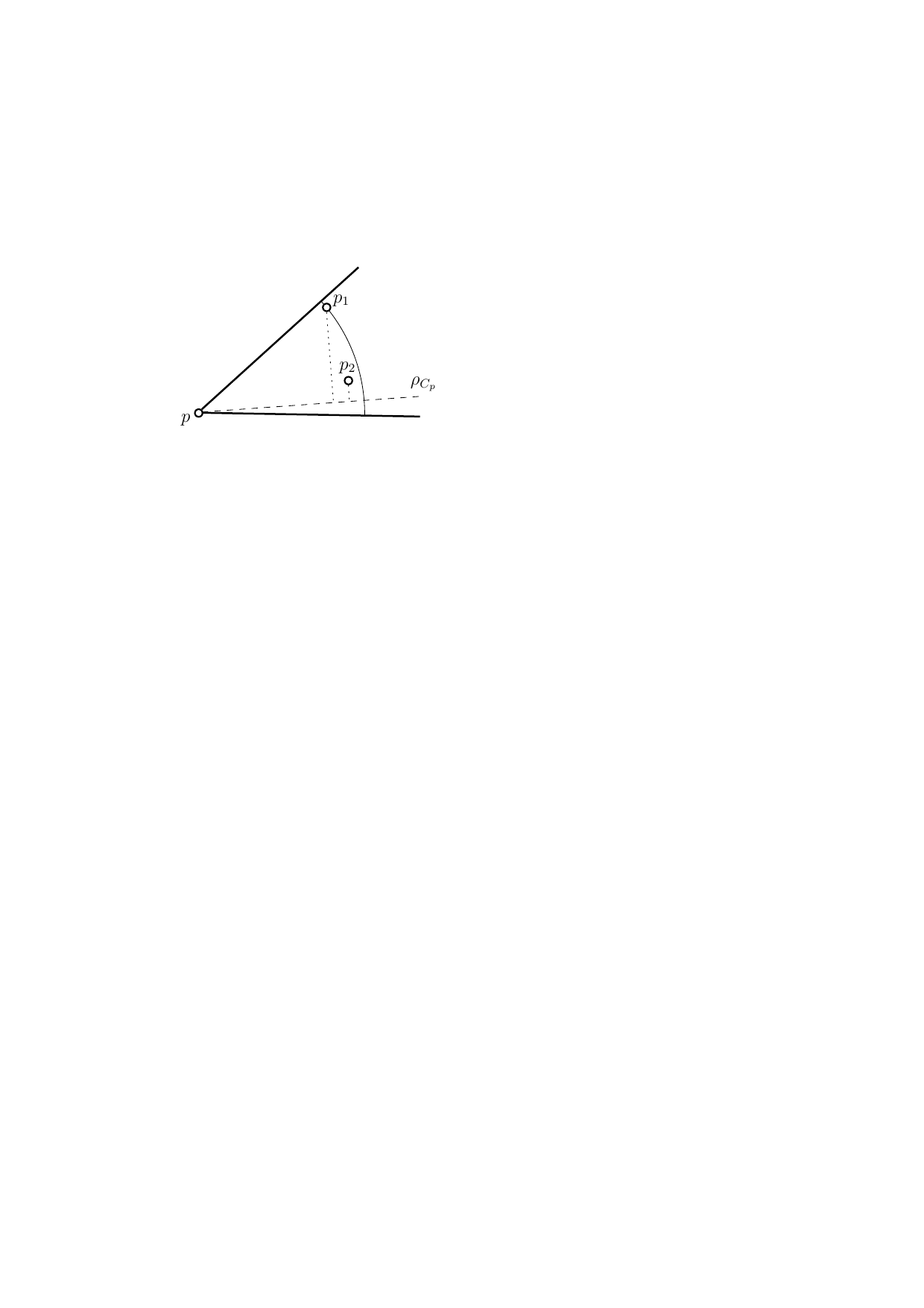} \\
		(a) Cone &
		(b) Nearest Point on Ray
	\end{tabular}
	\figcapup
	\caption{Key concepts underlying the $\theta$-graph}
	\label{fig:euc:cone}
	\figcapdown
\end{figure}

\vgap

Fix an arbitrary point $w \in \real^d$. For each $C \in \C$, we use $C_w$ to denote the $w$-translated copy of $C$ (the reader may wish to review Section~\ref{sec:lb2} for what is a ``$w$-translated copy'').
Define
\myeqn{
	\C_w &=& \set{C_w \mid C \in \C}. \label{eqn:euc:set_C_w}
}
The set $\C_w$ comprises $|\C| = O((1/\theta)^{d-1})$ cones with apex $w$ and angular diameter at most $\theta$ whose union covers $\real^d$. For each cone $C_w \in \C_w$, denote by $\rho_{C_w}$ the $w$-translated copy of the ray $\rho_C$. We will refer to $\rho_{C_w}$ the {\em designated ray} of cone $C_w$.

\vgap

Now, set $w$ to a point $p \in P$. We say that a cone $C_p \in \C_p$ is {\em non-empty} if $C_p$ covers at least one other point of $P$ besides $p$. For each non-empty $C_p$, identify a point $p'$ as the {\em nearest-point-on-ray} of $p$ in $C_p$ as follows:
\myitems{
	\item Let $S$ be the set of points in $P$ covered by $C_p$, after excluding $p$ itself.
	\item Project all the points of $S$ onto $\rho_{C_p}$, i.e., the designated ray of $C_p$.
	\item Then, $p'$ is the point whose projection on $\rho_{C_p}$ has the smallest $L_2$ distance to $p$.
}
Figure~\ref{fig:euc:cone}(b) illustrates an example where $S = \set{p_1, p_2}$. Point $p_1$ is the nearest-point-on-ray of $p$ in $C_p$ because its projection on ray $\rho_{C_p}$ is closer to $p$ than that of $p_2$ (note: $p_1$ actually has a greater $L_2$ distance from $p$ than $p_2$).

\vgap

We are ready to define the $\theta$-graph of $P$. This is a simple directed graph where
\myitems{
	\item the vertices have one-one correspondence to $P$;
	\item for any distinct points $p, p' \in P$, there is an edge from $p$ to $p'$ if and only if $p'$ is the nearest-point-on-ray of $p$ in some non-empty cone of $\C_p$.
}
Every vertex $p \in P$ in the $\theta$-graph has an out-degree at most $|\C| = O((1/\theta)^{d-1})$ (at most one out-edge from $p$ for each non-empty cone of $\C_p$). The total number of edges is thus $O((1/\theta)^{d-1} \cdot n)$. Such a graph can be constructed in $(1/\theta)^{d-1} \cdot n \polylog n$ time \cite{ams99b, rs91}.

\vgap

We prove the next lemma in Appendix~\ref{app:lmm:euc:theta-graph}.

\begin{lemma} \label{lmm:euc:theta-graph}
    A $(\eps/32)$-graph of $P$ is a $(1+\eps)$-proximity graph of $P$.
\end{lemma}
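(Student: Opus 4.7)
By Fact~\ref{fact:alg1:pg-proof:navigable}, it suffices to prove that the $(\eps/32)$-graph is $(1+\eps)$-navigable. Pick any point $p \in P$ that is not a $(1+\eps)$-ANN of a query $q \in \real^d$; let $p^*$ be the exact NN of $q$, so $L_2(p^*, q) < L_2(p, q)/(1+\eps)$. Writing $a = L_2(p, p^*)$, $b = L_2(p, q)$, $c = L_2(p^*, q)$, I need to produce an out-neighbor of $p$ in the $(\eps/32)$-graph that is strictly closer to $q$ than $p$.

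The natural candidate is the point $p'$ defined as the nearest-point-on-ray of $p$ in the cone $C^* \in \C_p$ that contains $p^*$. Because $p^*$ itself is a valid candidate in the projection selection, the nearest-point-on-ray property gives $L_2(p, p') \le a / \cos\theta$, where $\theta := \eps/32$. Moreover, since both $\vec{pp^*}$ and $\vec{pp'}$ lie inside $C^*$, whose angular diameter is at most $\theta$, we have $\angle p' p p^* \le \theta$.

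Suppose for contradiction that $L_2(p', q) \ge b$. Let $\psi = \angle p^* p q$; by the triangle inequality for angles, $\angle p' p q \le \psi + \theta$. Expanding the inequality $L_2(p', q)^2 \ge b^2$ via the law of cosines in triangle $p p' q$ yields, whenever $\cos(\psi + \theta) > 0$, the lower bound $L_2(p, p') \ge 2 b \cos(\psi + \theta)$. Chaining with the upper bound on $L_2(p, p')$ produces the key inequality
\[
a \;\ge\; 2\, b \cos\theta \cos(\psi + \theta). \qquad (*)
\]

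The analysis now splits on $\psi$. If $\psi \ge \pi/2 - \theta$, then $\cos\psi \le \sin\theta$; applying the law of cosines to $p p^* q$ and minimizing over $a$ (optimum at $a = b\sin\theta$) gives $c \ge b\cos\theta$, and since $1/\cos(\eps/32) < 1+\eps$ for all $\eps \in (0, 1]$, this contradicts $c < b/(1+\eps)$. If instead $\psi < \pi/2 - \theta$, substituting $(*)$ into $c^2 = a^2 + b^2 - 2ab\cos\psi$ and using the product-to-sum identity $2\cos\theta\cos(\psi + \theta) - \cos\psi = \cos(\psi + 2\theta)$ collapses the expression to $c^2/b^2 \ge \cos^2(\psi + 2\theta) + \sin^2\psi = 1 - \sin(2\theta) \sin(2\psi + 2\theta) \ge 1 - \sin(2\theta)$; a direct computation shows $1 - \sin(\eps/16) \ge 1/(1+\eps)^2$ throughout $\eps \in (0, 1]$, again contradicting $c < b/(1+\eps)$. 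Hence $L_2(p', q) < b$, so $p$ has an out-neighbor strictly closer to $q$, establishing $(1+\eps)$-navigability. The delicate step is the second case, where the product-to-sum identity is what converts the loose angle-based inequality $(*)$ into a clean trigonometric bound with the right leading constant; the choice $\theta = \eps/32$ is calibrated so that both cases yield the desired contradiction on the full range $\eps \in (0,1]$.
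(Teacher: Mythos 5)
Your route is genuinely different from the paper's: you use the same witness edge (the nearest-point-on-ray $p'$ of $p$ in the cone $C^*\in\C_p$ containing $p^*$), but then argue by contradiction through the law of cosines at the apex $p$ with the angle $\psi=\angle p^*pq$, whereas the paper bounds $L_2(p_\out,q)$ directly by splitting on whether $p_\out$ lies inside or outside $B(p,L_2(p,p^*))$, using a spherical ``critical curve'' argument in the first case and projection estimates along the designated ray in the second. Your preliminary steps are sound: $L_2(p,p')\le a/\cos\theta$ follows correctly from the nearest-point-on-ray rule, $\angle p'pp^*\le\theta$ from the angular diameter, the derivation of $(*)$ is valid (in the range where you use it, $\psi+\theta<\pi/2$), the case $\psi\ge\pi/2-\theta$ is airtight, and the trigonometric identities and the numeric bounds $1/\cos(\eps/32)<1+\eps$ and $1-\sin(\eps/16)\ge 1/(1+\eps)^2$ all check out. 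When it works, your computation is shorter and tighter than the paper's case analysis.

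There is, however, one unjustified step. In your second case you ``substitute $(*)$'' into $c^2=a^2+b^2-2ab\cos\psi$. Plugging the lower bound $a\ge a_0:=2b\cos\theta\cos(\psi+\theta)=b\bigl(\cos\psi+\cos(\psi+2\theta)\bigr)$ into this quadratic yields a lower bound on $c^2$ only if the quadratic is nondecreasing on $[a_0,\infty)$, i.e., only if $a_0\ge b\cos\psi$, which is equivalent to $\cos(\psi+2\theta)\ge 0$, i.e., $\psi\le\pi/2-2\theta$. Your Case 2 ranges over all $\psi<\pi/2-\theta$, so on the window $\pi/2-2\theta<\psi<\pi/2-\theta$ the asserted inequality $c^2/b^2\ge\cos^2(\psi+2\theta)+\sin^2\psi$ does not follow from $(*)$: there $a_0<b\cos\psi$, and a configuration with $a=b\cos\psi$ satisfies $a\ge a_0$ yet gives $c^2=b^2\sin^2\psi$, strictly below your claimed bound. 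The hole is easy to patch with tools already in your proof: either move the case boundary to $\pi/2-2\theta$ (your Case 1 argument still works there, since $\cos\psi\le\sin 2\theta$ and minimizing over $a$ gives $c\ge b\cos 2\theta=b\cos(\eps/16)>b/(1+\eps)$), or note that in the window the law of cosines alone gives $c\ge b\sin\psi\ge b\cos 2\theta$, and $\cos^2 2\theta=1-\sin^2 2\theta\ge 1-\sin 2\theta$, so your final contradiction applies verbatim. With that adjustment the proof is complete.
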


In the subsequent discussion, we will set $G_\geo$ to an $(\eps/32)$-graph of $P$, which can be constructed in $(1/\eps)^{d-1} \cdot n \polylog n$ time. Recall that the value $d$ never exceeds the doubling dimension $\lambda$.

\subsection{The Power of Merging} \label{sec:euc:merged}

We have obtained two graphs: $G'_\net$ and $G_\geo$. In particular, $G'_\net$ was obtained from $G_\net$ --- which itself was built using Theorem~\ref{thm:ub-metric} --- via vertex sampling. We call a point $p \in P$ a {\em jackpot} point/vertex if it was sampled (in the process of building $G'_\net$); recall that all the out-edges of $p$ in $G_\net$ are retained by $G'_\net$.

\vgap

Merging $G'_\net$ and $G_\geo$ gives graph $G$, which has $O((1/\theta)^\lambda \cdot n)$ edges in expectation as explained earlier. $G$ must be $(1+\eps)$-navigable (and hence a $(1+\eps)$-PG of $P$ by Fact~\ref{fact:alg1:pg-proof:navigable}). To see why, take an arbitrary point $p \in P$ and an arbitrary query $q \in \real^d$ such that $p$ is not an $(1+\eps)$-ANN of $q$. As $G_\geo$ is $(1+\eps)$-navigable (because it is a $(1+\eps)$-PG; see Lemma~\ref{lmm:euc:theta-graph}), there must exist an out-neighbor $p_\out$ of $p$ in $G_\geo$ with $L_2(p_\out, q) < L_2(p, q)$. Point $p_\out$ remains as an out-neighbor of $p$ in $G$, thus confirming that $G$ is $(1+\eps)$-navigable.

\vgap


Next, we will prove that w.h.p.\ the merged graph $G$ achieves a small query time for one single query point $q \in \real^d$. The next subsection will extend the result to all query points.

\vgap

Let us temporarily ignore $G'_\net$ and focus on $G_\geo$. For each $p \in P$, if we run {\bf greedy} on $G_\geo$ with parameters $p_\start = p$ and $q$, the algorithm visits a sequence of hop vertices (i.e., the $p^\circ$ vertices in the pseudocode in Section~\ref{sec:intro:prob}); let us denote that sequence as $\sigma_\geo(p)$. We say that $\sigma_\geo(p)$ is {\em long} if it has at least $\ln n \cdot \log \Delta$ vertices. The following is a condition we would like to have:

\vslit

\minipg{0.9\linewidth}{
	{\bf The jackpot condition:} Every long $\sigma_\geo(p)$ (where $p \in P$) encounters a jackpot point within the first $\ceil{\ln n \cdot \log \Delta}$ vertices.
}

\vslit

\noindent The jackpot condition holds w.h.p.. To see why, notice that, for each long $\sigma_\geo(p)$, the probability for none of the first $l$ vertices on $\sigma_\geo(p)$ to be sampled is at most $(1-\tau)^l \le e^{-\tau \cdot l}$, which is at most $1/n^z$ for $l = \ceil{\ln n \cdot \log \Delta}$ and the value of $\tau$ in \eqref{eqn:euc:tau}. As there are at most $n$ long sequences (at most one for each $p \in P$), the probability that all of them obey the jackpot condition is at least $1 - 1/n^{z-1}$.

\vgap

The remainder of this subsection will prove that, under the jackpot condition, the merged graph $G$ guarantees a query time of $O((1/\eps)^\lambda \cdot \log^2 \Delta + (1/\eps)^{d-1} \log n \cdot \log^2 \Delta)$ for $q$. Run {\bf greedy} on $G$ with an arbitrary $p_\start$ and stop the algorithm after it has visited
\myeqn{
	k = 1 + \ceil{\log (2\Delta)} \nn
}
hop vertices that are jackpot points (provided that it has not already self-terminated). Denote by $\sigma$ the sequence of hop vertices visited by {\bf greedy}. Chop $\sigma$ into subsequences, each of which (i) either ends at a jackpot vertex or is the last subsequence of $\sigma$, and (ii) includes no jackpot vertex except possibly at the end.

\begin{lemma} \label{lmm:euc:subseq-len}
	Every subsequence has at most $\ceil{\ln n \cdot \log \Delta}$ vertices.
\end{lemma}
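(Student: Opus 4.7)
The plan is to exploit the fact that at a non-jackpot vertex, the merged graph $G$ and the geometric graph $G_\geo$ have identical out-neighborhoods. Indeed, when we built $G'_\net$ from $G_\net$, we discarded every out-edge whose tail was not sampled, so a non-jackpot point $p$ keeps only its $G_\geo$ out-edges in $G$. Consequently, whenever \textbf{greedy} executes an iteration at a non-jackpot hop vertex, it selects the same next hop as \textbf{greedy} on $G_\geo$ would have selected from that same vertex.

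Consider an arbitrary subsequence and let $p_1$ be its first vertex. By the chopping rule, every vertex of the subsequence except possibly the last is non-jackpot; in particular, $p_1$ is non-jackpot unless the subsequence has length $1$ (in which case the claimed bound is trivial). A straightforward induction, using the observation above, then shows that the entire subsequence is a prefix of $\sigma_\geo(p_1)$, since each intermediate non-jackpot vertex of the subsequence hops to the same successor under both $G$ and $G_\geo$, and the last vertex of the subsequence is likewise determined by \textbf{greedy} on $G_\geo$ starting from $p_1$.

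Now suppose for contradiction that this subsequence had more than $\ceil{\ln n \cdot \log \Delta}$ vertices. Then $\sigma_\geo(p_1)$ has length at least $\ceil{\ln n \cdot \log \Delta}+1$, so in particular it is long in the sense of the preceding text. The jackpot condition then forces $\sigma_\geo(p_1)$ to contain a jackpot vertex within its first $\ceil{\ln n \cdot \log \Delta}$ positions. Since our subsequence agrees with this prefix, the subsequence would contain a jackpot vertex strictly before its last position, contradicting property (ii) of the chopping.

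The main obstacle is really the bookkeeping: one has to verify that \textbf{greedy} on $G$ hop-by-hop imitates \textbf{greedy} on $G_\geo$ inside a subsequence, and to handle the corner cases (the first subsequence, which starts at $p_\start$, and the possibly-unterminated last subsequence after the $k$-th jackpot). Both cases are handled uniformly once one observes that, whenever $p_1$ is non-jackpot, its entire subsequence is a prefix of $\sigma_\geo(p_1)$; and whenever $p_1$ is a jackpot, the subsequence has length exactly one and the bound is immediate.
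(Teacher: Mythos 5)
Your proposal is correct and follows essentially the same route as the paper: observe that at non-jackpot vertices $G$ and $G_\geo$ have the same out-edges, conclude that each subsequence starting at a non-jackpot vertex is a prefix of $\sigma_\geo(p_\first)$, and then invoke the jackpot condition to bound its length (with the length-one case for a jackpot first vertex handled trivially). No gaps.
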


\begin{proof}
	Consider any subsequence $\sigma'$ of length at least 2. Let $p_\first$ be the first vertex of $\sigma'$, which must be a non-jackpot point. Observe that $\sigma'$ must be a prefix of $\sigma_\geo(p_\first)$. To see why, take any vertex $p$ on $\sigma'$ except the last vertex of $\sigma'$. As $p$ is not a jackpot point, all its out-edges originate from $G_\geo$. When $p$ is the hop vertex, {\bf greedy} must choose the same next hop as it would when running on $G_\geo$, explaining why $\sigma'$ is a prefix of $\sigma_\geo(p_\first)$.

	\vgap

	Because $\sigma'$ has at most one jackpot vertex, its length must be at most $\ceil{\ln n \cdot \log \Delta}$ under the jackpot condition.
\end{proof}

If {\bf greedy} terminates without seeing $k$ jackpot hop vertices, it must return a $(1+\eps)$-ANN of $q$ because $G$ is a $(1+\eps)$-PG. Next, we consider the situation where {\bf greedy} is forced to terminate. We will argue that $\sigma$ must contain at least one $(1+\eps)$-ANN of $q$. This implies that the last vertex of $\sigma$ must be a $(1+\eps)$-ANN of $q$ because the vertices on $\sigma$ have descending distances to $q$.

\vgap

Assume, for contradiction, that no vertex on $\sigma$ is a $(1+\eps)$-ANN of $q$. Denote by $p^*$ the NN of $q$.
 Since we manually stopped {\bf greedy}, the sequence $\sigma$ consists of exactly $k$ subsequences, each of which ends with a jackpot point. For each $i \in [k]$, define
\myeqn{
	p^\circ_i &=& \text{the last vertex of the $i$-th subsequence} \nn
}

\begin{lemma} \label{lmm:euc:help}
    For each $1 \le i \le \ceil{\log (2\Delta)}$, it holds that $\ceil{\log L_2(p^\circ_i, p^*)} > \ceil{\log L_2(p^\circ_{i+1}, p^*)}$.
\end{lemma}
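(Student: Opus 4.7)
The plan is to exploit the fact that a jackpot vertex retains \emph{all} its out-edges from $G_\net$, which lets us invoke the log-drop property (Statement (2) of Lemma~\ref{lmm:alg1:pg-proof:help2}) at each $p^\circ_i$. The overall structure is an argument by contradiction plus a monotonicity observation about {\bf greedy}.

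First I would fix $i$ and note that, since we are in the contradiction case, \emph{no} vertex of $\sigma$ is a $(1+\eps)$-ANN of $q$; in particular, neither $p^\circ_i$ nor $p^\circ_{i+1}$ is. Because $p^\circ_i$ is a jackpot point, every out-edge of $p^\circ_i$ in $G_\net$ is preserved in the merged graph $G$. Let $p^+_\out$ be the out-neighbor of $p^\circ_i$ \emph{in $G_\net$} that is closest to $q$, exactly as in \eqref{eqn:alg1:p+_out}. Statement (1) of Lemma~\ref{lmm:alg1:pg-proof:help2} gives $L_2(p^+_\out, q) < L_2(p^\circ_i, q)$.

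Next I would use the monotonicity of {\bf greedy} to propagate the bound $L_2(\cdot, q) \le L_2(p^+_\out, q)$ all the way down to $p^\circ_{i+1}$. When {\bf greedy} on $G$ leaves $p^\circ_i$, it moves to the out-neighbor of $p^\circ_i$ in $G$ that is closest to $q$; since $p^+_\out$ is one such out-neighbor (thanks to the jackpot status), the immediate next hop $p^\circ_\text{new}$ satisfies $L_2(p^\circ_\text{new}, q) \le L_2(p^+_\out, q)$. Every subsequent hop vertex visited by {\bf greedy} only gets closer to $q$, so in particular
\[
L_2(p^\circ_{i+1}, q) \;\le\; L_2(p^\circ_\text{new}, q) \;\le\; L_2(p^+_\out, q).
\]

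Finally, I would apply Statement (2) of Lemma~\ref{lmm:alg1:pg-proof:help2} with $p^\circ := p^\circ_i$ and $\varrho := p^\circ_{i+1}$. The hypotheses hold: $p^\circ_{i+1}$ is not a $(1+\eps)$-ANN of $q$ (contradiction assumption), and $L_2(p^\circ_{i+1}, q) \le L_2(p^+_\out, q)$ by the previous step. The conclusion \eqref{eqn:alg1:log-drop} then reads $\ceil{\log L_2(p^\circ_{i+1}, p^*)} < \ceil{\log L_2(p^\circ_i, p^*)}$, which is exactly the desired inequality. The main subtlety to flag is ensuring that Lemma~\ref{lmm:alg1:pg-proof:help2} can be invoked on $G$ at $p^\circ_i$: the lemma was stated for $G_\net$, but its conclusion only depends on the existence of an out-edge to $p^+_\out$, which is guaranteed by the jackpot property. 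Everything else is bookkeeping, so no substantial obstacle remains.
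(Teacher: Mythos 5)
Your proposal is correct and follows essentially the same route as the paper's proof: use the jackpot status of $p^\circ_i$ to keep its $G_\net$ out-edge to $p^+_\out$ in $G$, use the greedy/monotonicity property to get $L_2(p^\circ_{i+1}, q) \le L_2(p^+_\out, q)$, and then invoke the log-drop property (Statement (2) of Lemma~\ref{lmm:alg1:pg-proof:help2}) with $\varrho = p^\circ_{i+1}$. The only cosmetic difference is that the paper spells out the monotonicity by listing the vertices $v_1, \dots, v_l$ of the $(i+1)$-th subsequence and appealing to $(1+\eps)$-navigability, while you cite the strictly descending distances of greedy's hop sequence directly; both are valid.
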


\begin{proof}
    Let us write out the vertices of the $(i+1)$-th subsequence of $\sigma$ as $v_1, v_2, ..., v_l$ for some $l \le \ceil{\ln n \cdot \log \Delta}$. Note that $v_1$ succeeds $p^\circ_i$ in $\sigma$ and $v_l = p^\circ_{i+1}$. As $G$ is $(1+\eps)$-navigable and no vertex on $\sigma$ is a $(1+\eps)$-ANN of $q$, the $(1+\eps)$-navigable definition tells us $L_2(v_1, q) > L_2(v_2, q) > ... > L_2(v_l, q)$.

    \vgap

    Because $p^\circ_i$ is not a $(1+\eps)$-ANN of $q$, by Lemma~\ref{lmm:alg1:pg-proof:help2} its out-degree in $G_\net$ is at least 1. Let $p^+_\out$ be the vertex defined in \eqref{eqn:alg1:p+_out}, i.e., the out-neighbor of $p^\circ_i$ in $G_\net$ closest to $q$.     Because $p^\circ_i$ is a jackpot point, $p^+_\out$ must be an out-neighbor of $p^\circ_i$ in $G$. As {\bf greedy} always hops to the out-neighbor closest to $q$, we have $L_2(v_1, q) \le L_2(p^+_\out, q)$, which leads to $L_2(v_l, q) \le L_2(p^+_\out, q)$.

    \vgap

    Now, apply Statement (2) of Lemma~\ref{lmm:alg1:pg-proof:help2} by setting $p^\circ = p^\circ_i$, $\varrho = v_l$, and $D = L_2$. The application yields $\ceil{\log L_2(v_l, p^*)} < \ceil{\log L_2(p^\circ_i, p^*)}$, as claimed.
\end{proof}

As $L_2(p^\circ_1, p^*) \le \diam(P) = 2\Delta$ (recall that the smallest inter-point distance in $P$ is 2), Lemma~\ref{lmm:euc:help} implies that $L_2(p^\circ_k, p^*)$ must be strictly less than 2, indicating that $p^\circ_k = p^*$. This contradicts the fact that $\sigma$ contains no $(1+\eps)$-ANNs of $q$.

\vgap

Recall that each jackpot vertex has an out-degree of $O((1/\eps)^\lambda \cdot \log \Delta)$ in $G$, while each non-jackpot vertex has an out-degree of $O((1/\eps)^{d-1}))$ in $G$. Our algorithm  visits $O(\log \Delta)$ jackpot vertices and $O(\log n \cdot \log^2 \Delta)$ non-jackpot vertices (due to Lemma~\ref{lmm:euc:subseq-len}). The total query time is therefore
$O((1/\eps)^\lambda \cdot \log^2 \Delta + (1/\eps)^{d-1} \cdot \log n \cdot \log^2 \Delta)$.

\subsection{Achieving High Probability} \label{sec:euc:hp}

So far our query time holds w.h.p.\ on only one query point in $\real^d$. To prove Theorem~\ref{thm:ub-geo}, we must argue that w.h.p.\ the same query time holds on all query points in $\real^d$. The key observation that makes this possible is that, even though there are infinitely many query points,  only $O(n^{2d})$ representative ones need to be considered.

\vgap

The execution of {\bf greedy} is decided by the outcome of distance comparisons of the form ``{\em which of $L_2(p_1, q)$ and $L_2(p_2, q)$ is larger?}'' Imagine two query points $q_1$ and $q_2$ with the property:

\minipg{0.9\linewidth}{
	for any distinct points $p_1, p_2 \in P$: $L_2(p_1, q_1) < L_2(p_2, q_1) \Leftrightarrow L_2(p_1, q_2) < L_2(p_2, q_2)$.
}

\noindent For any $p_\start \in P$, the behavior of {\bf greedy} invoked with parameters $(p_\start, q_1)$ is exactly the same as invoked with $(p_\start, q_2)$. This is true regardless of which proximity graph is adopted.

\vgap

The points in $P$ define ${n \choose 2}$ perpendicular bisectors, which dissect $\real^d$ into $O(n^{2d})$ polytopes. Queries in each polytope have the same NN and induce the same behavior of {\bf greedy}. Take a query representative from each polytope. Section~\ref{sec:euc:merged} has shown that the merged graph $G$ guarantees a low query time on one query with probability at least $1 - 1/n^{z-1}$, where $z$ is the constant in \eqref{eqn:euc:tau}. We can thus conclude that $G$ guarantees a low query time on all the representatives (and hence all queries in $\real^d$) with probability at least $1 - n^{O(d)} /n^{z-1}$, which is greater than $1 - 1/n^c$ for any constant $c$ by making $z$ sufficiently large.

\vgap

Only one issue remains. Currently, the number of edges in $G$ is $O((1/\eps)^\lambda \cdot n)$ in expectation. To ensure this size bound w.h.p., we  run our construction algorithm $z' \cdot \log n$ times for a sufficiently large constant $z'$. With probability at least $1 - z' \log n \cdot n^{O(d)} /n^{z-1}$, the proximity graphs produced by all the runs guarantee query time $O((1/\eps)^\lambda \cdot \log^2 \Delta + (1/\eps)^{d-1} \log n \cdot \log^2 \Delta)$ on all queries in $\real^d$. By Markov's inequality, in each run, the graph size  exceeds twice the expectation with probability at most $1/2$. Therefore, the probability for the smallest $G$ of all runs to have $O((1/\eps)^\lambda \cdot n)$ edges is at least $1 - 1/n^{z'}$.

\vgap

We now conclude that w.h.p.\ we can compute in $(1/\eps)^\lambda \cdot n \polylog(n \Delta)$ time a $(1+\eps)$-PG that has $O((1/\eps)^\lambda \cdot n)$ edges and ensures query time $O((1/\eps)^\lambda \cdot \log^2 \Delta + (1/\eps)^{d-1} \log n \cdot \log^2 \Delta)$ for all queries. This completes the proof of Theorem~\ref{thm:ub-geo}.

\bibliographystyle{plain}
\bibliography{ref}

\appendix

\def\vgap{\vspace{2mm}}
\def\vslit{\vspace{1mm}}
\def\extraspacing{\vspace{4mm} \noindent}

\section{Proof of Fact~\ref{fact:alg1:pg-proof:navigable}}
\label{app:alg1:pg-proof:navigable}

\noindent{\bf The If-Direction ($\Leftarrow$).} To prove this direction, we assume that $G$ is $(1+\eps)$-navigable. Fix an arbitrary query point $q \in \M$ and an arbitrary data point $p_\start \in P$. Let $p$ be the point returned by the {\bf greedy} algorithm when invoked with parameters $(p_\start, q)$. Suppose that $p$ is not a $(1+\eps)$-ANN of $q$. Because $G$ is $(1+\eps)$-navigable, $p$ must have an out-neighbor closer to $q$ than $p$ itself, meaning that the greedy algorithm cannot terminate at $p$, giving a contradiction. Hence, $p$ must be a $(1+\eps)$-ANN of $q$; and thus $G$ is a $(1+\eps)$-PG.

\extraspacing{\bf The Only-If Direction ($\Rightarrow$).} To prove this direction, we assume that $G$ is a $(1+\eps)$-PG. Fix an arbitrary query point $q \in \M$ and an arbitrary data point $p\in P$ such that $p$ is not a $(1+\eps)$-ANN of $q$. Run the {\bf greedy} algorithm with $p_\start = p$ and $q$. Because $G$ is a $(1+\eps)$-PG, the algorithm must return a $(1+\eps)$-ANN of $q$ and hence cannot return $p_\start$. As a result, {\bf greedy} must be able to identify an out-neighbor $p_\out$ of $p_\start$ with $D(p_\out, q) < D(p_\start, q)$. The presence of $p_\out$ indicates that $G$ is $(1+\eps)$-navigable.

\section{Proof of Fact~\ref{fact:alg1:pg-proof:aspect}}
\label{app:alg1:pg-proof:aspect}
%
%

Let $d_\mit{min}$ and $d_\mit{max}$ be the minimum and maximum inter-point distances in $X$, respectively. Hence, $A = d_\mit{max}/d_\mit{min}$. By definition of diameter, the set $X$ can be covered by a ball $B(p, d_{max})$ where $p$ can be any point in $X$. Inductively, suppose that $B(p, d_{max})$ can be covered by $2^{i \cdot \lambda}$ balls of radius $d_\mit{max}/2^i$ for some $i \ge 0$. By definition of doubling dimension, we can cover each of those balls with $2^\lambda$ balls of radius $d_\mit{max}/2^{i+1}$. This means that $B(p, d_{max})$ can be covered by $2^{(i+1) \cdot \lambda}$ balls of radius $d_\mit{max}/2^{i+1}$.

\vgap

The above argument tells us that $B(p, d_{max})$ and, hence, $X$ can be covered by $2^{k \cdot \lambda}$ balls of radius $A \cdot d_\mit{min}/2^k$ for any $k \ge 0$. Now, set $k = 2 + \ceil{\log A}$, with which we have
\myeqn{
	\fr{d_{max}}{2^{k}}
	=
	\fr{A \cdot d_{min}}{2^{k}}
	<
	\fr{d_\mit{min}}{2}. \nn
}
Therefore, $X$ can be covered by
\myeqn{
	2^{k \lambda}
	\le 2^{\lambda \cdot (3 + \log A)}
	= (2^{3 + \log_2 A})^{\lambda} = (8A)^\lambda \nn
}
balls of radius less than $d_{min} / 2$. Each ball can cover at most one point in $X$. This is because the maximum distance of two points in a ball is less than $d_{min}$, which, let us recall, is the smallest inter-point distance in $X$. It thus follows that $|X| \le (8A)^\lambda$, which is $O(A^\lambda)$ for $\lambda = O(1)$.

\section{Doubling Dimension of the Hard Input in Section~\ref{sec:lb1}}
\label{app:lb1:ddim}

Given an arbitrary ball $B(p, r)$ where $p \in \M$ and $r > 0$, we will explain how to cover it with at most two balls of radius $r / 2$. This indicates that the metric space $(\M, D)$ has doubling dimension 1.

\vgap

Let us start by reminding the reader that, for two distinct points $v_1, v_2 \in \M$ (which are leaves of the binary tree $\T$), their distance is $2^\ell$, where $\ell \ge 1$ is the level of the LCA of $v_1$ and $v_2$. Therefore, if $r < 2$, then $B(p, r)$ contains only $p$ itself and thus can be covered by a single ball of radius $r/2$. The subsequent discussion assumes $r \ge 2$.

\vgap

Let $r' = 2^\ell$ be the largest power of 2 within the range $[2, 2\Delta]$ that does not exceed $r$. We have $B(p, r') = B(p, r)$ because every inter-point distance in $\M$ is a power of 2, as mentioned. Note that $\ell$ is some integer between 1 and $h = \log(2\Delta)$. Thus, point $p$, which is a leaf of $\T$, has an ancestor at level $\ell$, which we denote as $u_\anc$. The ball $B(p, r')$ is precisely the set of leaves in the subtree of $u_\anc$.

\vgap

Denote by $u_1$ and $u_2$ the left and right children of $u_\anc$, respectively. Let $X_1$ (resp., $X_2)$ be the set of leaves in the subtree of $u_1$ (resp., $u_2$). Clearly, $B(p, r') = X_1 \cup X_2$. Each of $X_1$ and $X_2$ is covered by a ball of radius $r' / 2 \le r / 2$. By symmetry, it suffices to prove this only for $X_1$. Take an arbitrary leaf $v$ from $X_1$. We argue that $X_1 \subseteq B(v, r'/2)$. Indeed, for any $v' \in X_1$ that differs from $v$, the LCA of $v$ and $v'$ must be a descendant of $u_1$ and hence must be at an level at most $\ell - 1$, meaning that $D(v, v') \le 2^{\ell - 1} = r'/2$.

\section{Proof of Lemma~\ref{lmm:lb2:D}} \label{app:lmm:lb2:D}

First, let us note the following property of our design: if two points $p_1, p_2 \in P$ are from different blocks, then $D_{p^*}(p_1, p_2) = L_\infty(p_1, p_2) = |p_1[1] - p_2[1]| \ge s + 1$.

\extraspacing {\bf Triangle Inequality.} To prove $(\M, D_{p^*})$ is a metric space, it suffices to prove that $D_{p^*}$ satisfies the triangle inequality.
Consider any $p_1, p_2$, and $p_3 \in \M$. If all of them originate from $P$, then their distances under $D_{p^*}$ are the same as under $L_\infty$-norm. Hence, we must have $D_{p^*}(p_1, p_2) \le D_{p^*}(p_1, p_3) + D_{p^*}(p_2, p_3)$. Next, we will assume that $p_3 = q$. Furthermore, we assume $p_1 \ne p_2$ because otherwise $D_{p^*}(p_1, p_2) = 0$ and the triangle inequality holds on $D_{p^*}(p_1, p_2)$, $D_{p^*}(p_1, q)$, and $D_{p^*}(p_2, q)$.

\vgap

If neither $p_1$ nor $p_2$ is from $M_{w^*}$ (i.e., the block of $p^*$), then the mutual distances of $p_1, p_2$, and $q$ under $D_{p^*}$ are the same as those of $p_1, p_2$, and $w^*$ under $L_\infty$. Those mutual distances must satisfy the triangle inequality.

\vgap

Consider now the case where both $p_1$ and $p_2$ are from $M_{w^*}$. W.l.o.g., suppose that $D_{p^*}(p_1, q) \le D_{p^*}(p_2, q)$. Thus, $D_{p^*}(p_1, p_2) \in [1, s - 1]$, $D_{p^*}(p_1, q) = s - 1$ or $s$, while $D_{p^*}(p_2, q) = s$. The three distances obey the triangle inequality.

\vgap

It remains to examine the case where $p_1 \in M_{w^*}$ but $p_2 \notin M_{w^*}$. We must have $D_{p^*}(p_1, q) = s - 1$ or $s$, $D_{p^*}(p_1, p_2)   > s$, and $D_{p^*}(p_2, q) = L_\infty(p_2, w^*) > s$. Let us first derive
\myeqn{
	D_{p^*}(p_1, q) + D_{p^*}(p_1, p_2)
	&\ge&
	s - 1 + L_\infty(p_1, p_2) \nn \\
	&=&
	s - 1 + |p_1[1] - p_2[1]| \nn \\
	\explain{as $p_1 \in M_{w^*}$, $w^* \in M_{w^*}$}
	&\ge&
	|p_1[1] - w^*[1]| + |p_1[1] - p_2[1]| \nn \\
	&\ge&
	|p_2[1] - w^*[1]| \nn \\
	&=&
	D_{p^*}(p_2, q). \nn
}
Similarly, we can derive
\myeqn{
	D_{p^*}(p_1, q) + D_{p^*}(p_2, q)
	&\ge&
	s - 1 + L_\infty(p_2, w^*) \nn \\
	&=&
	s - 1 + |p_2[1] - w^*[1]| \nn \\
	&\ge&
	|p_1[1] - w^*[1]| + |p_2[1] - w^*[1]| \nn \\
	&\ge&
	|p_1[1] - p_2[1]| \nn \\
	&=&
	D_{p^*}(p_1, p_2). \nn
}
We now conclude that $D_{p^*}(p_1, q)$, $D_{p^*}(p_1, p_2)$, and $D_{p^*}(p_2, q)$ satisfy the triangle inequality.

\extraspacing {\bf Doubling Dimension.} We will speak about balls under two different metric spaces: $(\M, D_{p^*})$ and $(P, L_\infty)$. To avoid confusion, we will adopt the notations below:
\myitems{
	\item given a point $p \in \M$, let $B_{p^*}(p, r)$ be the ball $B(p, r)$ under $(\M, D_{p^*})$;
	\item given a point $p \in P$, let $B_\infty(p, r)$ be the ball $B(p, r)$ under $(P, L_\infty)$.
}
When $p$ comes from $P$, we will refer to $B_{p^*}(p, r)$ as the {\em $D_{p^*}$-corresponding ball} of $B_\infty(p, r)$. These two balls have the following relationship:
\myitems{
	\item $B_\infty(p, r) \subseteq B_{p^*}(p, r)$;
	\item if $B_\infty(p, r) \ne B_{p^*}(p, r)$, then $B_{p^*}(p, r)$ contains only one extra point --- namely, $q$ --- outside of $B_\infty(p, r)$.
}

The metric space $(\real^d, L_\infty)$ is known to have doubling dimension  $d$. As $P \subseteq \real^d$, the metric space $(P, L_\infty)$ has doubling dimension at most $d$. We will utilize this fact to analyze the doubling dimension $\lambda$ of $(\M, D_{p^*})$. Given an arbitrary ball $B_{p^*}(p, r)$, we will show how to cover $B_{p^*}(p, r)$ with at most $1 + 2^d$ balls of radius $r/2$ under $(\M, D_{p^*})$, meaning that $\lambda \le \log (1 + 2^d)$.

\vgap

Consider first $p \in P$ (in other words, $p \ne q$). If $B_{p^*}(p, r) = B_\infty(p, r)$, we cover $B_{p^*}(p, r)$ with a set $S$ of at most $2^d$ balls under $(\M, D_{p^*})$ found using the procedure below:
\myitems{
	\item Initialize $S$ to be the empty set.

	\item Cover $B_\infty(p, r)$ with at most $2^d$ balls of radius $r/2$ under $(P, L_\infty)$.

	\item For each of the above ball, add its $D_{p^*}$-corresponding ball to $S$.
}
If $B_{p^*}(p, r) \ne B_\infty(p, r)$, we cover $B_{p^*}(p, r)$ with a set $S$ of at most $1 + 2^d$ balls under $(\M, D_{p^*})$ as follows:
\myitems{
	\item Obtain a set $S$ using the procedure for the case $B_{p^*}(p, r) = B_\infty(p, r)$.
	\item Add to $S$ the ball $B_{p^*}(q, r/2)$.
}

\vslit

The subsequent discussion will focus on the scenario where $p = q$:
\myitems{
	\item If $r < s-1$, then $B_{p^*}(q, r) = \set{q}$ and, hence, can be covered with a single ball of radius $r/2$ under $(\M, D_{p^*})$.


	\item If $r = s-1$, then $B_{p^*}(q, r) = \set{q, p^*}$ and, hence, can be covered with two balls of radius $r/2$ under $(\M, D_{p^*})$.



	\item If $r \ge s$, then $B_{p^*}(q,r) = B_{p^*}(w^*, r) = \set{q} \cup B_\infty(w^*, r)$; recall that $w^*$ is the point in $W$ such that $p^* \in M_{w^*}$. As $w^* \in P$, we have already explained how to cover $B_{p^*}(w^*, r)$ with at most $1 + 2^d$ balls under $(\M, D_{p^*})$. The same approach therefore works for  $B_{p^*}(q,r)$.
}

\vslit

We now complete the proof of Lemma~\ref{lmm:lb2:D}.

\section{Proof of Lemma~\ref{lmm:euc:theta-graph}} \label{app:lmm:euc:theta-graph}

\subsection{Basic Facts}

Let us start with three facts that will be useful in our technical derivation.

\begin{fact} \label{fact:app:lmm:euc:theta-graph:fact1}
	For any $0 \le x \le 1/2$, we have $\tan x \le 2x$.
\end{fact}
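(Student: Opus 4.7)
The plan is to prove the inequality by a direct monotonicity argument. Define $f(x) = 2x - \tan x$ on $[0, 1/2]$. The strategy is: observe $f(0) = 0$, then show $f'(x) \ge 0$ throughout the interval, which yields $f(x) \ge 0$ on $[0, 1/2]$, i.e., $\tan x \le 2x$.

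For the derivative step, I would compute $f'(x) = 2 - \sec^2 x$ and note that $f'(x) \ge 0$ is equivalent to $\cos^2 x \ge 1/2$, i.e., to $x \le \pi/4$ within the range $[0, \pi/2)$ where $\tan x$ is defined and cosine is nonneg. Since the paper measures angles in radians, we have $\pi/4 > 1/2$ (because $\pi > 2$), so $[0, 1/2] \subseteq [0, \pi/4]$ and therefore $f'(x) \ge 0$ on the entire interval of interest.

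There is essentially no obstacle here; the only thing to double-check is the scalar inequality $\pi/4 > 1/2$, which follows immediately from $\pi > 2$. The argument takes only a few lines and relies solely on basic calculus, with no appeal to any of the PG machinery developed earlier in the paper.
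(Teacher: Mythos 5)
Your proposal is correct and follows essentially the same route as the paper: the paper defines $f(x) = \tan x - 2x$, notes $f'(x) = 1/\cos^2 x - 2 < 0$ on $[0,1/2]$ (equivalently, your observation that $1/2 < \pi/4$ keeps $\cos^2 x > 1/2$), and concludes from $f(0)=0$. Your sign-flipped $f(x) = 2x - \tan x$ is the same monotonicity argument.
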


\begin{proof}
    Define $f(x) = \tan x - 2x$. Thus, $f'(x) = 1/(\cos x)^2 - 2$, which is negative for $0 \le x \le 1/2$. The fact then follows from $f(0) = 0$.
\end{proof}

Given two points $p, q$, we will use the following notation frequently:
\myeqn{
	\rho_{u, v}
	=
	\text{the ray that emanates from $p$ and passes $q$}.
	\label{eqn:app:lmm:euc:theta-graph:rho}
}


\vgap

\begin{fact} \label{fact:app:euc:lmm:theta-graph:fact2}
	Let $a, b,$ and $c$ be three distinct points in $\real^d$ such that the angle  $\gamma$ between rays $\rho_{a, b}$ and $\rho_{a, c}$ satisfies $0 < \gamma < \pi/2$. If $L_2(a, b) = L_2(a, c) = l > 0$, then $L_2(b, c)< l \cdot \tan \gamma$.
\end{fact}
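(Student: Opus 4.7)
The plan is to work in the 2D plane spanned by the three points $a, b, c$ and reduce the claim to an elementary right-triangle construction. Concretely, I will construct an auxiliary point $c'$ lying on ray $\rho_{a,c}$ such that the segment $bc'$ is perpendicular to the ray $\rho_{a,b}$; since $0 < \gamma < \pi/2$, this perpendicular from $b$ meets the ray $\rho_{a,c}$ at a unique point strictly past $a$. Then by the definition of $\tan$ applied to the right triangle with legs $ab$ (length $l$) and $bc'$, we have $L_2(b, c') = l \tan \gamma$ exactly, and the hypotenuse satisfies $L_2(a, c') = l/\cos\gamma$.

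Next I will argue that $c$ lies strictly between $a$ and $c'$ on the ray $\rho_{a,c}$. This is immediate because $L_2(a, c) = l < l/\cos \gamma = L_2(a, c')$, with strict inequality since $\cos \gamma < 1$ (using $\gamma > 0$). With $c$ strictly between $a$ and $c'$, I will examine the triangle $b c c'$: the angle $\angle b c a$ is a base angle of the isoceles triangle $abc$ with apex angle $\gamma$, so $\angle b c a = (\pi - \gamma)/2 < \pi/2$. Therefore $\angle b c c' = \pi - \angle bca = (\pi + \gamma)/2 > \pi/2$ is obtuse.

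Finally, I will invoke the standard fact that in a triangle the side opposite the largest angle is the longest side. Since $\angle bcc'$ is the obtuse (hence largest) angle of triangle $bcc'$, the opposite side $bc'$ is strictly longer than $bc$, yielding $L_2(b, c) < L_2(b, c') = l \tan \gamma$, as desired.

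I do not foresee a real obstacle here; the argument is a one-paragraph planar picture. The only subtlety worth being careful about is the strictness of the final inequality, which hinges on $\cos \gamma < 1$ (hence $c \ne c'$), guaranteed by the assumption $\gamma > 0$. A cleaner-looking alternative would be to invoke the law of cosines to get $L_2(b,c) = 2l \sin(\gamma/2)$ and then check $2 \sin(\gamma/2) < \tan \gamma$ on $(0, \pi/2)$ via $(2\cos\gamma + 1)(\cos\gamma - 1) < 0$, but the geometric argument above is more transparent and fits the style of the surrounding exposition.
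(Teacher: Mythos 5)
Your argument is correct, but it takes a genuinely different route from the paper. The paper computes the chord length exactly via the isosceles triangle, $L_2(b,c) = 2l\sin(\gamma/2)$, and then establishes the trigonometric inequality $2\sin(\gamma/2) < \tan\gamma$ on $(0,\pi/2)$ by the algebraic factorization $(2\cos(\gamma/2)+1)(\cos(\gamma/2)-1) < 0$ together with $\cos^2(\gamma/2) - \sin^2(\gamma/2) = \cos\gamma > 0$. You instead argue synthetically: erect the perpendicular to $\rho_{a,b}$ at $b$, intersect it with $\rho_{a,c}$ at $c'$ (unique since $\cos\gamma > 0$), read off $L_2(b,c') = l\tan\gamma$ and $L_2(a,c') = l/\cos\gamma$ from the right triangle, place $c$ strictly between $a$ and $c'$ because $l < l/\cos\gamma$, and conclude via the obtuse angle $\angle bcc' = (\pi+\gamma)/2$ and the ``larger angle faces larger side'' fact that $L_2(b,c) < L_2(b,c')$. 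All the side conditions you need are available: $b$ is not collinear with $a,c,c'$ since $0<\gamma<\pi$, so triangle $bcc'$ is non-degenerate, and strictness comes from $\cos\gamma<1$, exactly as you note. What your route buys is that it avoids the half-angle manipulation entirely and makes the strict inequality geometrically transparent; what the paper's route buys is the exact chord value $2l\sin(\gamma/2)$ (occasionally reusable) and a purely computational verification with no auxiliary construction. One small slip in your closing aside: the factorization for the paper's alternative should be in terms of $\cos(\gamma/2)$, i.e.\ $(2\cos(\gamma/2)+1)(\cos(\gamma/2)-1)<0$, not $\cos\gamma$; this does not affect your main proof.
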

\begin{proof}
	By applying basic geometric reasoning to the isosceles triangle $abc$, we obtain $L_2(b, c) = 2 \cdot l \sin(\gamma/2)$. Next, we will prove $2 \sin(\gamma/2) < \tan \gamma$.

	\vgap

	As $0 < \gamma/2 < \pi/4$, we have
	\myeqn{
		&& (2\cos(\gamma/2) + 1) (\cos(\gamma/2) - 1) < 0 \nn \\
		&\Rightarrow& 2\cos^2(\gamma/2) - \cos(\gamma/2) - 1 < 0 \nn \\
		&\Rightarrow& \cos^2 (\gamma/2) - \sin^2 (\gamma/2) < \cos (\gamma/ 2)  \nn 
	}
	As $\cos^2(\gamma/2) > \sin^2(\gamma/2)$ when $0 < \gamma/2 < \pi/4$, we can derive from the above
	\myeqn{
		&&
		1
		<
		\fr{\cos (\gamma/2)}{\cos^2 (\gamma/2) - \sin^2 (\gamma/2)} \nn \\
		&\Rightarrow&
		2 \cdot \sin(\gamma/2)
		<
		\frac{2 \sin \left( \gamma/2 \right) \cos \left( \gamma/2 \right)}{\cos^2 \left( \gamma/2 \right) - \sin^2 \left( \gamma/2 \right)}
		=
		\fr{\sin \gamma}{\cos \gamma} \nn
	}
	which is $\tan \gamma$.
\end{proof}


\begin{fact} \label{fact:app:euc:lmm:theta-graph:fact3}
    If $0 \le \gamma \le \eps/32$, then $(2+\eps) \cdot (2\tan\gamma + 1 - \cos \gamma) < \eps$.
\end{fact}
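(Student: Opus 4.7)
The plan is to bound the two quantities $\tan\gamma$ and $1-\cos\gamma$ separately using the smallness of $\gamma$, and then combine them with the crude estimate $2+\eps\le 3$ (which holds because $\eps\in(0,1]$ by the problem definition). Throughout, note $\gamma\le\eps/32\le 1/32$, so in particular $\gamma\le 1/2$ and Fact~C.1 applies.

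First, by Fact~C.1 we get $\tan\gamma\le 2\gamma\le \eps/16$. Second, the standard half-angle identity $1-\cos\gamma=2\sin^2(\gamma/2)$ together with the elementary bound $\sin x\le x$ for $x\ge 0$ gives $1-\cos\gamma\le\gamma^2/2\le \eps^2/2048$, and since $\eps\le 1$ this is at most $\eps/2048$.

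Combining these, $2\tan\gamma+1-\cos\gamma\le \eps/8+\eps/2048$. Multiplying by $2+\eps\le 3$ yields
\[
(2+\eps)(2\tan\gamma+1-\cos\gamma)\;\le\;3\left(\tfrac{\eps}{8}+\tfrac{\eps}{2048}\right)\;=\;\eps\cdot\left(\tfrac{3}{8}+\tfrac{3}{2048}\right)\;<\;\eps,
\]
as desired. There is no real obstacle here: the whole argument is just two one-line trigonometric estimates and a numerical check. The only thing to verify in passing is that $\gamma\in[0,1/2]$ so Fact~C.1 is legitimately applicable, which follows immediately from the hypothesis $\gamma\le\eps/32\le 1/32$.
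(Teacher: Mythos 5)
Your proof is correct. The overall structure matches the paper's: bound $\tan\gamma$ via Fact~\ref{fact:app:lmm:euc:theta-graph:fact1} (identically, $\tan\gamma\le 2\gamma\le\eps/16$), bound $1-\cos\gamma$ separately, then combine with $2+\eps\le 3$ and a numerical check. Where you genuinely diverge is the estimate of $1-\cos\gamma$: the paper derives $\cos^2\gamma = 1/(1+\tan^2\gamma)\ge 16^2/(16^2+\eps^2)$ and then verifies $16^2/(16^2+\eps^2)>(1-\eps/6)^2$ through a cubic-polynomial monotonicity argument, arriving at the comparatively loose bound $1-\cos\gamma<\eps/6$; you instead use the half-angle identity $1-\cos\gamma=2\sin^2(\gamma/2)$ together with $\sin x\le x$, which immediately gives the much sharper $1-\cos\gamma\le\gamma^2/2\le\eps^2/2048\le\eps/2048$ (using $\eps\le 1$). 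Your route is more elementary and shorter, dispensing with the polynomial verification entirely; the paper's route has the minor virtue of reusing only the already-stated tangent bound rather than invoking additional trigonometric facts, but your sharper quadratic estimate makes the final inequality hold with much more room to spare ($\eps(3/8+3/2048)<\eps$, strict since $\eps>0$). All side conditions you need ($\gamma\le 1/32\le 1/2$ for Fact~\ref{fact:app:lmm:euc:theta-graph:fact1}, and $0<\eps\le 1$) are correctly checked, so there is no gap.
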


\begin{proof}
	Because $\gamma \le \eps/32 \le 1/32$, we have
	from Fact~\ref{fact:app:lmm:euc:theta-graph:fact1}:
	\myeqn{
		\tan \gamma \le 2\gamma \le \eps/16. \label{eqn:app:euc:lmm:theta-graph:fact3:0}
	}
	Later, we will prove:
	\myeqn{
		1-\cos \gamma < \eps/6.
		\label{eqn:app:euc:lmm:theta-graph:fact3:2}
	}
	Hence
	\myeqn{
		(2+\eps) \cdot (2\tan\gamma + 1 - \cos \gamma)
		&<&
		(2+\eps) \cdot \left( 2 \cdot \fr{\eps}{16} + \fr{\eps}{6} \right) \nn \\
		&=&
		(2+\eps) \cdot \fr{7\eps}{24} \nn \\
		\explain{as $0 < \eps \le 1$}
		&\le&
		3 \cdot \fr{7\eps}{24} \nn
	}
	which is less than $\eps$,
	as claimed.

	\vgap

	It remains to explain why \eqref{eqn:app:euc:lmm:theta-graph:fact3:2} is correct. First, using \eqref{eqn:app:euc:lmm:theta-graph:fact3:0}, we can derive:
	\myeqn{
		\cos^2 \gamma = \fr{1}{1+\tan^2 \gamma}
		\ge
		\fr{16^2}{16^2 + \eps^2}.
		\label{eqn:app:euc:lmm:theta-graph:fact3:1}
	}
	Define $f(x) = x^3 - 12 x^2 + 292 x - 3072$. We have $f'(x) = 3x^2-24x+292$, which is always positive. As $f(1) < 0$, we can assert that $f(x)  < 0$ for all $x \le 1$. This yields:
	\myeqn{
		\eps^3 - 12 \eps^2 + 292 \eps
		&<&
		3072
		\nn \\
		\Rightarrow
		\eps^4 - 12 \eps^3 + 292 \eps^2
		&<&
		3072\eps
		\nn
	}
	Rearranging terms from the above gives
	\myeqn{
		&& \fr{16^2}{16^2 + \eps^2} > (1-\eps/6)^2 \nn \\
		\explain{by \eqref{eqn:app:euc:lmm:theta-graph:fact3:1}}
		&\Rightarrow&
		\cos^2 \gamma > (1-\eps/6)^2 \nn \\
		\explain{as $\cos \gamma > 0$ and $\eps < 1$}
		&\Rightarrow&
		\cos \gamma > 1-\eps/6 \nn
	}
	thus giving the claim in \eqref{eqn:app:euc:lmm:theta-graph:fact3:2}.
\end{proof}

\subsection{The Proof}

Let $G$ be an $(\eps/32)$-graph of $P$. Our objective is to prove that $G$ is a $(1+\eps)$-PG of $P$. Fix an arbitrary data point $p \in P$ and an arbitrary query point $q \in \real^d$ such that $p$ is not a $(1+\eps)$-ANN of $q$. We will show that $p$ has an out-neighbor $p_\out$ in $G$ satisfying $L_2(p_\out, q) < L_2(p, q)$. This indicates that $G$ is $(1+\eps)$-navigable and thus a $(1+\eps)$-PG of $P$ by Fact~\ref{fact:alg1:pg-proof:navigable}.

\vgap

\begin{figure}
	\centering
	\includegraphics[height=60mm]{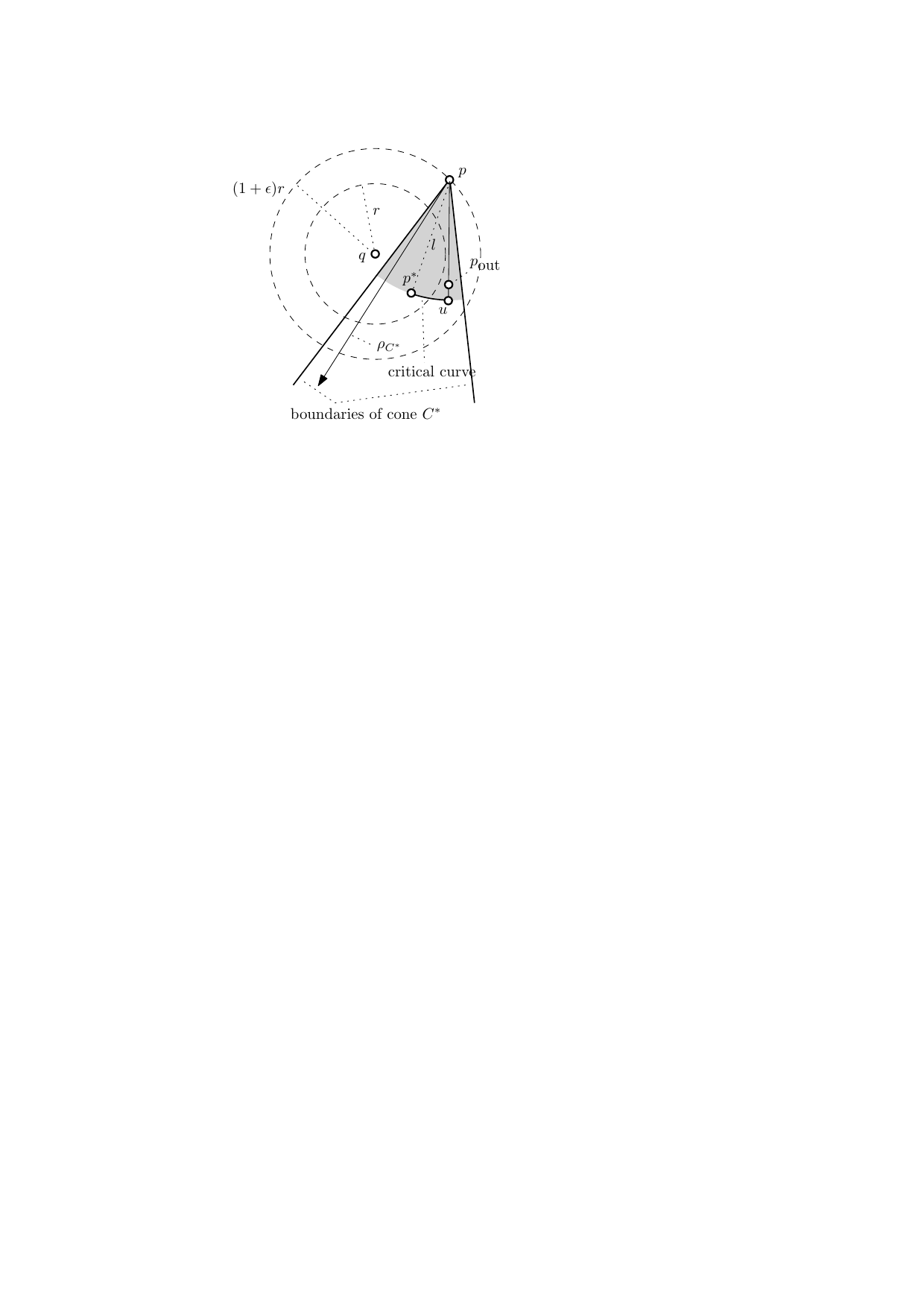}
	\caption{Case 1 of the proof in Section~\ref{app:lmm:euc:theta-graph}}
	\label{fig:app:euc:lmm:theta-graph:case1}
\end{figure}

Let us introduce two notions related to balls. Given a ball $B(p, r)$, we define its {\em surface} as the set $\set{x \in \real^d \mid L_2(p, x) = r}$. In addition, we say that a point $x \in \real^d$ is
\myitems{
	\item {\em in the interior} of $B(p, r)$ if $L_2(p, x)< r$;
	\item {\em on the surface} of $B(p, r)$ if $L_2(p, x) = r$.
}

\vslit

Let $p^*$ be the (exact) NN of $q$. Henceforth, we will fix
\myeqn{
	r = \fr{L_2(p, q)}{1+\eps}. \label{eqn:app:euc:lmm:theta-graph:r}
}
As $p$ is not a $(1+\eps)$-ANN of $q$, we must have $L_2(q, p^*) < r$, i.e., $p^*$ is in the interior of $B(q, r)$.

\vgap

Recall from Section~\ref{sec:euc:small-slow} that the union of the cones in $\C_p$ is $\real^d$ (see \eqref{eqn:euc:set_C_w} for the definition of $\C_p$). Hence, there must be a cone $C^* \in \C_p$ covering $p^*$. Define
\myeqn{
	p_\out &=&
	\text{the nearest-point-on-ray of $p$ in cone $C^*$}.
	\label{eqn:app:euc:lmm:theta-graph:p-out}
}
As explained in Section~\ref{sec:euc:small-slow}, $p_\out$ is the point whose projection on $\rho_{C^*}$ --- the designated ray of $C^*$ --- is the closest to $p$ under $L_2$ norm among the projections of all the points of $P \setm \set{p}$ covered by $C^*$; see Figure~\ref{fig:app:euc:lmm:theta-graph:case1} for an illustration. By the definition of $\theta$-graph, $p_\out$ is an out-neighbor of $p$ in $G$.

\vgap

The rest of the proof will show
\myeqn{
	L_2(p_\out, q) < (1+\eps) r
	\label{eqn:app:euc:lmm:theta-graph:goal}
}
which, by the value of $r$ in \eqref{eqn:app:euc:lmm:theta-graph:r}, says $L_2(p_\out, q) < L_2(p, q)$, thus leading us to the conclusion that $G$ is $(1+\eps)$-navigable.
Define
\myeqn{
	l &=& L_2(p, p^*)
	\label{eqn:app:euc:lmm:theta-graph:l}
}
Next, we proceed differently depending on the relationship between $L_2(p, p_\out)$ and $l$.

\extraspacing{\bf Case 1: $\bm{L_2(p, p_\out) \le l}$.} That is, $p_\out$ is in $B(p, l)$. This is the scenario illustrated in Figure~\ref{fig:app:euc:lmm:theta-graph:case1}. Before proceeding, the reader may wish to review the definition in \eqref{eqn:app:lmm:euc:theta-graph:rho} first.

\begin{figure}
	\centering
    \includegraphics[height=50mm]{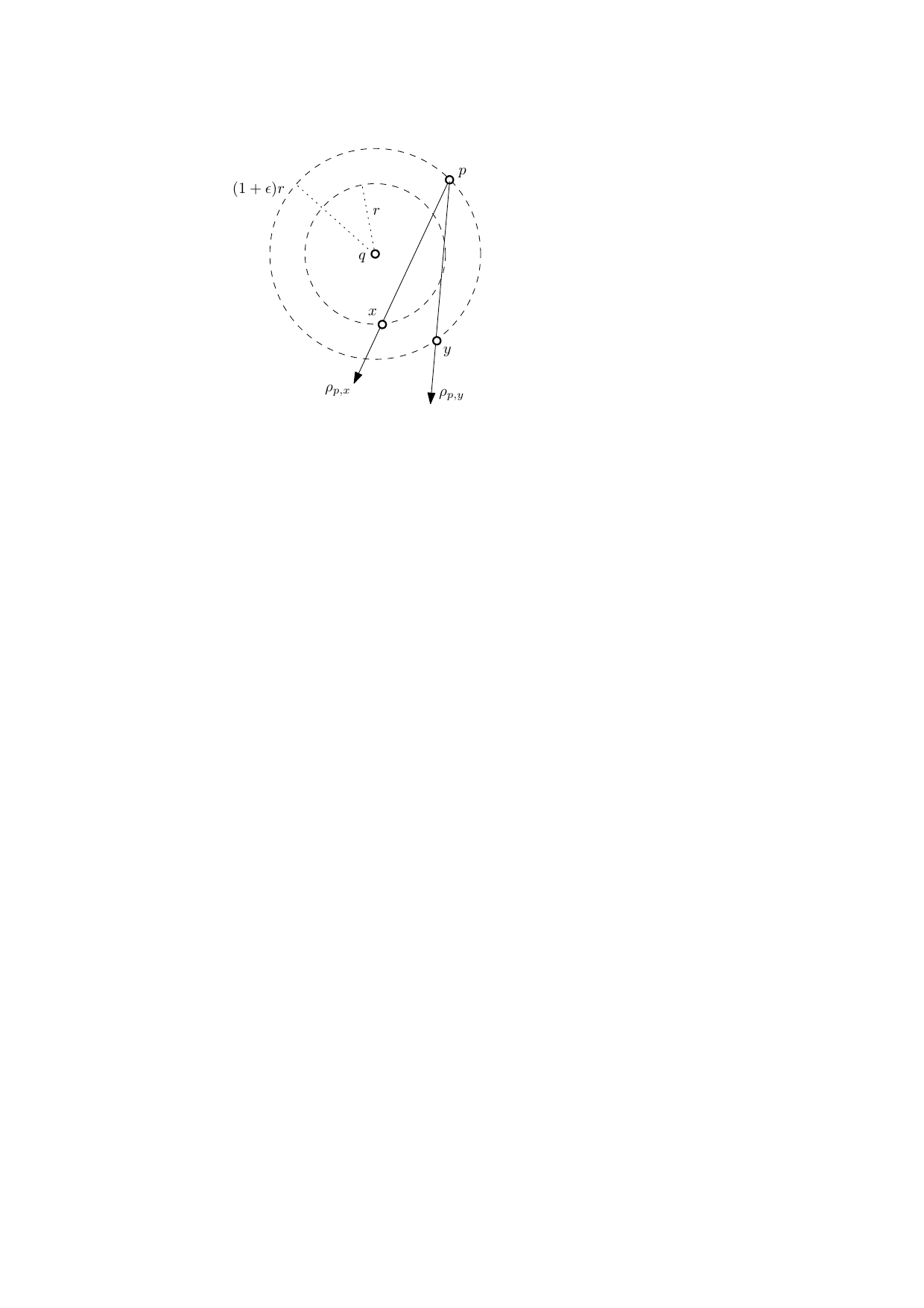}
    \caption{Illustration of Lemma~\ref{lmm:euc:lmm:theta-graph:case1:help}}
    \label{fig:app:euc:lmm:theta-graph:1}
\end{figure}

\begin{lemma} \label{lmm:euc:lmm:theta-graph:case1:help}
	Let $x$ and $y$ be two points that are on the surfaces of $B(q, r)$ and $B(q, (1+\eps)r)$, respectively. If $L_2(p, x) = L_2(p, y)$, then the angle  between the rays $\rho_{p, x}$ and $\rho_{p, y}$ is strictly larger than $\eps/8$.
\end{lemma}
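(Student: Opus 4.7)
The plan is to use the law of cosines in the two triangles $pqx$ and $pqy$ to obtain explicit formulas for $\cos(\angle qpx)$ and $\cos(\angle qpy)$, show that their difference is bounded below by a constant multiple of $\eps$, translate that into a lower bound on $\angle qpy - \angle qpx$, and then invoke the triangle inequality for angles to conclude $\angle xpy > \eps/8$.

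First I would recall from the outer proof that $L_2(p, q) = (1+\eps) r$ (since $r$ is defined as $L_2(p,q)/(1+\eps)$ in \eqref{eqn:app:euc:lmm:theta-graph:r}) and set $l' = L_2(p, x) = L_2(p, y)$. A direct application of the triangle inequality gives $l' \le L_2(p, q) + L_2(q, x) = (2+\eps) r$. Next, I would observe that triangle $pqy$ is isosceles with $L_2(p, q) = L_2(q, y) = (1+\eps) r$, so dropping the perpendicular from $q$ to line $py$ bisects the segment and yields
\[
\cos(\angle qpy) \;=\; \frac{l'}{2(1+\eps)\,r}.
\]
For triangle $pqx$, the law of cosines combined with $L_2(q,x)^2 = r^2$, $L_2(p,q)^2 = (1+\eps)^2 r^2$, and $L_2(p,x)^2 = l'^2$ gives, after rearranging,
\[
\cos(\angle qpx) \;=\; \frac{l'^2 + \eps(2+\eps)\, r^2}{2(1+\eps)\, r\, l'}.
\]
Subtracting, the $l'^2$ terms cancel, producing the clean identity
\[
\cos(\angle qpx) - \cos(\angle qpy) \;=\; \frac{\eps(2+\eps)\, r}{2(1+\eps)\, l'} \;\ge\; \frac{\eps}{2(1+\eps)} \;\ge\; \frac{\eps}{4},
\]
where the first bound uses $l' \le (2+\eps) r$ and the last uses $\eps \le 1$.

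To convert the cosine gap into an angle gap, I would observe that $\cos$ is $1$-Lipschitz (since $|\cos'| = |\sin| \le 1$), so the mean value theorem gives $\angle qpy - \angle qpx \ge \cos(\angle qpx) - \cos(\angle qpy) \ge \eps/4$. Finally, the spherical triangle inequality, applied to the three unit vectors pointing from $p$ along $\rho_{p,q}$, $\rho_{p,x}$, $\rho_{p,y}$, yields $\angle xpy \ge |\angle qpy - \angle qpx| > \eps/8$. The main obstacle is really only bookkeeping: making sure the law-of-cosines algebra is right and that one invokes the angle-triangle-inequality at $p$ rather than trying (incorrectly) to deduce $\angle xpy$ directly from the cosine difference, since $x$ and $y$ need not be coplanar with $p$ and $q$ when $d > 2$.
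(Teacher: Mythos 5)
Your proof is correct, and it takes a genuinely different route from the paper's. The paper argues by contradiction: assuming the angle $\gamma$ between $\rho_{p,x}$ and $\rho_{p,y}$ is at most $\eps/8$, it bounds the chord from above by $L_2(x,y) < L_2(p,x)\tan\gamma \le ((2+\eps)r)(2\gamma) < \eps r$ using Facts~\ref{fact:app:lmm:euc:theta-graph:fact1} and \ref{fact:app:euc:lmm:theta-graph:fact2}, while the reverse triangle inequality at $q$ forces $L_2(x,y) \ge L_2(q,y)-L_2(q,x) = \eps r$, a contradiction. You instead work with the angles at $p$ directly: the law of cosines in the (planar) triangles $pqx$ and $pqy$, combined with $L_2(p,q)=(1+\eps)r$ and $l' = L_2(p,x)=L_2(p,y)\le (2+\eps)r$, gives $\cos(\angle qpx)-\cos(\angle qpy) = \eps(2+\eps)r/(2(1+\eps)l') \ge \eps/4$; then the $1$-Lipschitzness of cosine and the triangle inequality for angles between rays at $p$ (which you correctly invoke, since $x,y,p,q$ need not be coplanar when $d>2$) yield $\angle xpy \ge \angle qpy - \angle qpx \ge \eps/4 > \eps/8$. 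Your algebra checks out, and the implicit requirement $l'>0$ is harmless: $x\ne p$ because $L_2(q,x)=r < (1+\eps)r = L_2(q,p)$, so the rays and the division by $l'$ are well defined. As for what each approach buys: your direct argument is self-contained, avoids the paper's auxiliary trigonometric facts, and in fact proves the stronger bound $\eps/4$; the paper's contradiction argument is shorter in context because Facts~\ref{fact:app:lmm:euc:theta-graph:fact1} and \ref{fact:app:euc:lmm:theta-graph:fact2} are needed anyway for Case 2 of the same appendix, and it never has to pass from a cosine gap back to an angle gap.
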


See Figure~\ref{fig:app:euc:lmm:theta-graph:1} for an illustration.

\begin{proof} [Proof of Lemma~\ref{lmm:euc:lmm:theta-graph:case1:help}.]
	Denote by $\gamma$ the angle between $\rho_{p, x}$ and $\rho_{p, y}$.
	It must hold that $\gamma > 0$. Indeed, if $\gamma = 0$, then $x$ and $y$ are on the same line, in which case the condition $L_2(p, x) = L_2(p, y)$ implies $x = y$. This contradicts the fact that $x$ and $y$ are on the surfaces of two different balls.

	\vgap

	Assume, for contradiction, that $\gamma \le \eps/8 \le 1/8$. As $\gamma < \pi/2$, we can use Fact~\ref{fact:app:euc:lmm:theta-graph:fact2} to derive
	\myeqn{
		L_2(x, y)
		&<&
		L_2(p, x)\cdot \tan \gamma \nn \\
		\explain{by triangle inequality}
		&\le& (L_2(p, q) + L_2(q, x)) \cdot \tan \gamma \nn \\
		&=& ((1+\eps) r + r) \cdot \tan \gamma \nn \\
		\textrm{(by Fact~\ref{fact:app:lmm:euc:theta-graph:fact1} and $0 < \gamma \le \eps/8$)}
		&\le&
		((1+\eps) r + r) \cdot (2 \gamma) \nn \\
		&\le&
		(2+\eps) r \cdot (\eps/4) \nn \\
		\explain{as $0 < \eps \le 1$}
		&<&
		3 \eps r / 4 \nn \\
		&<&
		\eps r. \nn
	}

	\vslit

	On the other hand, by the triangle inequality, we have $L_2(x, y) \ge L_2(q, y) - L_2(q, x) = \eps r$, thus giving a contradiction.
\end{proof}


In general, given two different points $p_1, p_2 \in P$, we use the term ``segment $p_1p_2$'' to refer to the line segment connecting them. Define
\myeqn{
	u = \text{the intersection point between the ray $\rho_{p, p_\out}$ and the surface of $B(p, l)$}.  \nn
}
Because (as mentioned) $p_\out$ is in $B(p, l)$, the point $p_\out$ must be on the segment $pu$; see Figure~\ref{fig:app:euc:lmm:theta-graph:case1}.

\vgap

We argue that $u$ must be in the interior of $B(q, (1+\eps) r)$. Once this is done, we know that the entire segment connecting $p$ and $u$ --- except point $p$ --- must be in the interior of $B(q, (1+\eps) r)$. Thus, $p_\out$, which is different from $p$, must be in the interior of $B(q, (1+\eps) r)$, which indicates $L_2(p_\out, q) < (1+\eps)r$, as claimed in \eqref{eqn:app:euc:lmm:theta-graph:goal}.

\vgap

As both $p^*$ and $u$ are on the surface of $B(p, l)$, we must be able to travel on the surface of $B(p, l)$ from $p^*$ to $u$. We will do so on a particular curve --- referred to as the {\em critical curve} --- decided as follows:
\myitems{
	\item For each point $p_\text{seg}$ on segment $p^*u$, shoot a ray from $p$ towards $p_\text{seg}$, and take the point $p_\text{curve}$ at which the ray intersects the surface of $B(p, l)$.

	\vslit

	\item The critical curve is the set of all the $p_\text{curve}$ points produced.
}
See Figure~\ref{fig:app:euc:lmm:theta-graph:case1} for an illustration of the critical curve. As both $p^*$ and $u$ are in cone $C^*$, the angle between the rays $\rho_{p,p^*}$ and $\rho_{p,u}$ is at most $\eps/32$ (because $G$ is an $(\eps/32)$-graph of $P$). For any points $x, y$ on the critical curve, the angle of the rays $\rho_{p,x}$ and $\rho_{p,y}$ can only be smaller and hence is at most $\eps/32$.

\vgap

Assume, for contradiction, that $u$ is not in the interior of $B(q, (1+\eps) r)$. Remember that $p^*$ is in the interior of $B(q, r)$. As we walk from $p^*$ towards $u$ on the critical curve, we must first hit the surface of $B(q, r)$ at some point $x$ and then hit the surface of $B(q, (1+\eps) r)$ at another point $y$. That both $x$ and $y$ are on the curve means that they are both on the surface of $B(p, l)$ and, hence, $L_2(p, x) = L_2(p, y)$. By Lemma~\ref{lmm:euc:lmm:theta-graph:case1:help}, the angle between the rays $\rho_{p, x}$ and $\rho_{p, y}$ is larger than $\eps/8$. This is impossible because as mentioned the angle can be at most $\eps/32$.

\extraspacing{\bf Case 2: $\bm{L_2(p, p_\out) > l}$.} That is, $p_\out$ is outside $B(p, l)$, as illustrated in Figure~\ref{fig:app:euc:lmm:theta-graph:case2}. We will prove later
\myeqn{
	L_2(p^*, p_\out) < \eps r
	\label{eqn:app:euc:lmm:theta-graph:case2:goal}
}
where $p_\out$ is defined in \eqref{eqn:app:euc:lmm:theta-graph:p-out}. The above will give us
\myeqn{
	L_2(p_\out, q) \le L_2(q, p^*) + L_2(p^*, p_\out) < r + \eps r = (1+\eps)r \nn
}
as claimed in \eqref{eqn:app:euc:lmm:theta-graph:goal}.

\vgap

In the rest of our proof, we will fix
\myeqn{
	\gamma = \text{the angular diameter of $C^*$}. \nn
}
Hence, $\gamma \le \eps/32$ (because $G$ is an $(\eps/32)$-graph).
Define:
\myeqn{
	v &=& \text{the intersection point between the ray $\rho_{C^*}$ and the surface of $B(p, l)$}  \nn \\
	\perp^* &=& \text{the projection point of $p^*$ onto ray $\rho_{C^*}$} \nn \\
	\perp_\out &=& \text{the projection point of $p_\out$ onto ray $\rho_{C^*}$} \nn
}
See Figure~\ref{fig:app:euc:lmm:theta-graph:case2} for an illustration.

\begin{figure}
	\centering
	\includegraphics[height=60mm]{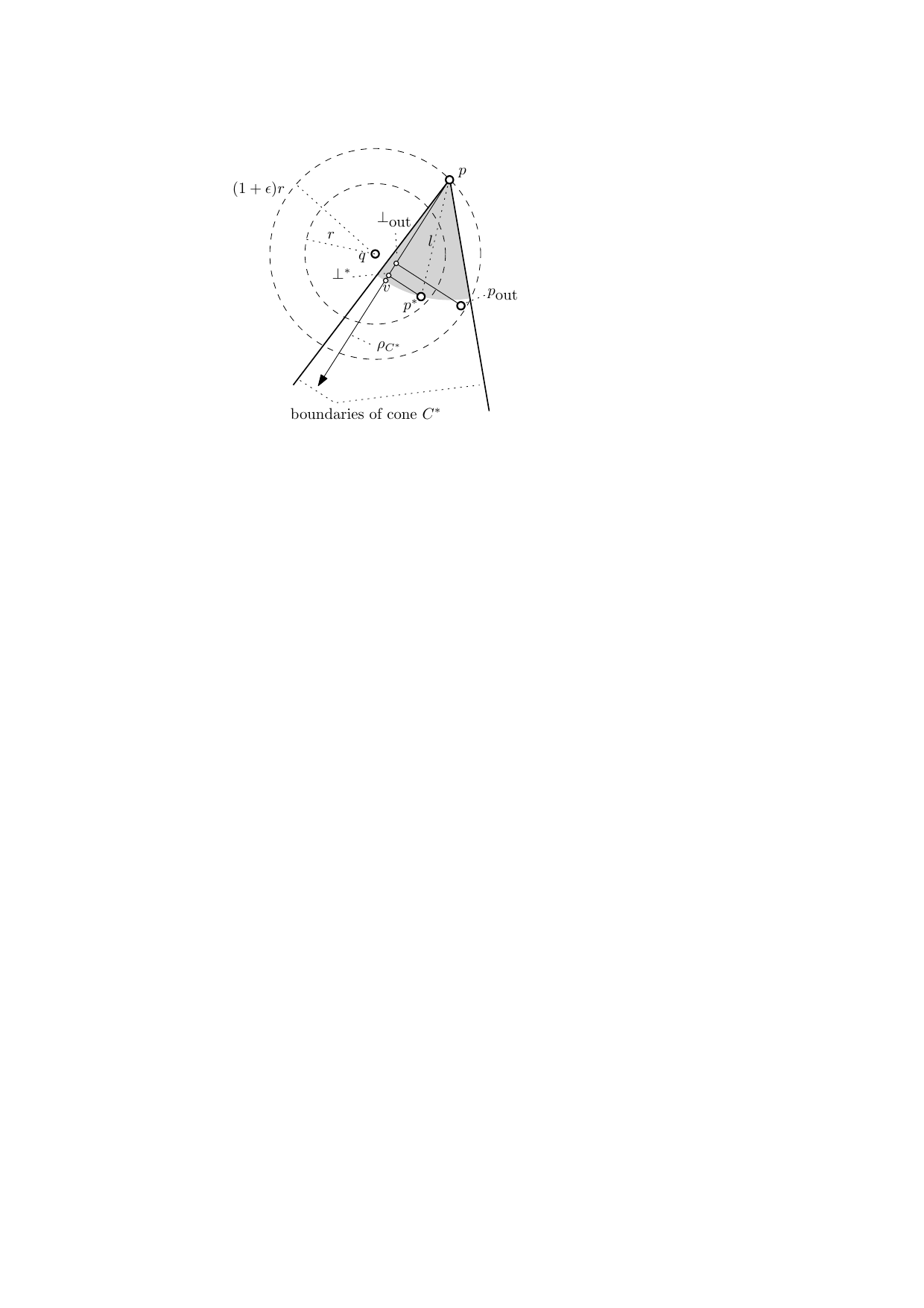}
	\caption{Case 2 of the proof in Section~\ref{app:lmm:euc:theta-graph}}
	\label{fig:app:euc:lmm:theta-graph:case2}
\end{figure}

\begin{lemma} \label{lmm:euc:lmm:theta-graph:case1:help2}
    The following inequalities are correct:
    \myeqn{
		l &<& (2+\eps) \cdot r
		\label{eqn:euc:lmm:theta-graph:case2:0} \\
		L_2(p^*, v) &\le& l \cdot \tan \gamma
		\label{eqn:euc:lmm:theta-graph:case2:1} \\
		L_2(v, \perp_\out)
		&<&
		l \cdot (1 - \cos \gamma) \label{eqn:euc:lmm:theta-graph:case2:3} \\
		L_2(\perp_\out, p_\out) &\le& l \cdot \tan \gamma. \label{eqn:euc:lmm:theta-graph:case2:4}
    }
\end{lemma}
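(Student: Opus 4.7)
The plan is to handle the four inequalities separately, all resting on elementary triangle geometry inside the cone $C^*$, whose angular diameter is at most $\gamma \le \eps/32 < \pi/2$. For (1), I would apply the triangle inequality to the triple $p, q, p^*$: since $L_2(p, q) = (1+\eps)\, r$ by the definition of $r$ in \eqref{eqn:app:euc:lmm:theta-graph:r}, and $L_2(q, p^*) < r$ because $p$ fails to be a $(1+\eps)$-ANN of $q$ (so $p^*$ lies strictly inside $B(q, r)$), this yields $l = L_2(p, p^*) \le L_2(p, q) + L_2(q, p^*) < (2+\eps)\, r$.

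For (2), both $p^*$ and $v$ sit on the surface of $B(p, l)$ --- the first by the definition of $l$, the second by the construction of $v$. Because $p^*$ and the ray $\rho_{C^*}$ both lie in the cone $C^*$, the angle between $\rho_{p, p^*}$ and $\rho_{p, v} = \rho_{C^*}$ is at most $\gamma < \pi/2$. Fact~\ref{fact:app:euc:lmm:theta-graph:fact2} then yields $L_2(p^*, v) < l \cdot \tan \gamma$ when that angle is positive, and the bound holds trivially when $p^* = v$. For (4), the right triangle with vertices $p, \perp_\out, p_\out$ gives $L_2(\perp_\out, p_\out) = L_2(p, \perp_\out) \cdot \tan \theta_\out$, where $\theta_\out \le \gamma$ denotes the angle between $\rho_{p, p_\out}$ and $\rho_{C^*}$. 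The claim will then follow as soon as I establish $L_2(p, \perp_\out) \le l$, which is part of the argument for (3) below.

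Inequality (3) is the main obstacle, because it has to combine the defining property of $p_\out$ as the nearest-point-on-ray with the Case~2 hypothesis $L_2(p, p_\out) > l$. I would first locate $\perp_\out$ on segment $pv$. Since $p^* \in P \setminus \{p\}$ lies in $C^*$, it is a valid candidate in the nearest-point-on-ray competition defining $p_\out$; hence $L_2(p, \perp_\out) \le L_2(p, \perp^*) = l \cdot \cos \theta^* \le l$, where $\theta^* \le \gamma$ is the angle between $\rho_{p, p^*}$ and $\rho_{C^*}$. This justifies $L_2(v, \perp_\out) = l - L_2(p, \perp_\out)$ and simultaneously supplies the missing estimate needed for (4). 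For a matching lower bound on $L_2(p, \perp_\out)$, the same cosine identity applied at $p_\out$ gives $L_2(p, \perp_\out) = L_2(p, p_\out) \cdot \cos \theta_\out \ge L_2(p, p_\out) \cdot \cos \gamma > l \cdot \cos \gamma$, where the last step invokes Case~2. Subtracting delivers $L_2(v, \perp_\out) < l \cdot (1 - \cos \gamma)$, completing the lemma.
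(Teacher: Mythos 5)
Your proof is correct and follows essentially the same route as the paper: part (1) via the triangle inequality on $p, q, p^*$; part (2) via Fact~\ref{fact:app:euc:lmm:theta-graph:fact2} on the two surface points $p^*$ and $v$; part (3) by sandwiching $L_2(p,\perp_\out)$ between $l\cos\gamma$ (from the Case~2 hypothesis $L_2(p,p_\out)>l$) and $l$ (from the nearest-point-on-ray property, $L_2(p,\perp_\out)\le L_2(p,\perp^*)$), giving $L_2(v,\perp_\out)=l-L_2(p,\perp_\out)<l(1-\cos\gamma)$; and part (4) from the right-triangle identity $L_2(\perp_\out,p_\out)=L_2(p,\perp_\out)\tan\theta_\out$ combined with $L_2(p,\perp_\out)\le l$ and $\theta_\out\le\gamma$.
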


\begin{proof}
	We have
	\myeqn{
		l = L_2(p, p^*) \le L_2(p, q)+L_2(q, p^*)
		< (1+\eps)r + r = (2+\eps)\cdot r \nn
	}
	which proves \eqref{eqn:euc:lmm:theta-graph:case2:0}.

	\vgap

	Next, let us attend to \eqref{eqn:euc:lmm:theta-graph:case2:1}. As both $v$ and $p^*$ are on the surface of $B(p, l)$, it holds that $L_2(p, v) = L_2(p, p^*) = l$. Define $\gamma'$ to be the angle between rays $\rho_{p,v} = \rho_{C^*}$ and $\rho_{p,p^*}$. If $\gamma' = 0$, then $p^*$ coincides with $v$; thus, $L_2(p^*, v) = 0$ and  \eqref{eqn:euc:lmm:theta-graph:case2:1} holds trivially. If $\gamma' > 0$, we must have $\gamma' \le \gamma \le \eps/32$ by definition of $\gamma$. This allows us to apply Fact~\ref{fact:app:euc:lmm:theta-graph:fact2}, which gives:
	\myeqn{
		L_2(p^*, v)
		&<&
		L_2(p, v) \cdot \tan \gamma' \nn \\
		\explain{as $\gamma' \le \gamma \le \eps/32 < \pi/2$}
		&\le&
		l \cdot \tan \gamma \nn
	}
	as claimed in \eqref{eqn:euc:lmm:theta-graph:case2:1}.

	\vgap

	Define $\gamma''$ to be the angle between rays $\rho_{p,v} = \rho_{C^*}$ and $\rho_{p,p_\out}$. We must have $\gamma'' \le \gamma$ by definition of $\gamma$. Thus:
	\myeqn{
		L_2(p, \perp_\out)
		&=&
		L_2(p, p_\out) \cdot \cos \gamma'' \nn \\
		\explain{as $\gamma'' \le \gamma \le \eps/32 < \pi/2$}
		&\ge&
		L_2(p, p_\out) \cdot \cos \gamma \nn \\
		\explain{as $p_\out$ is outside $B(p, l)$}
		&>&
		l \cdot \cos \gamma. \label{eqn:lmm:euc:lmm:theta-graph:case1:help2:1}
	}

	\vslit

	As $p^*$ is on the surface of $B(p, l)$ and the angle between rays $\rho_{p,v}$ and $\rho_{p,p^*}$ is at most $\gamma < \pi/2$, the projection $\perp^*$ of $p^*$ on $\rho_{C^*}$ must be in the interior of $B(p, l)$. Hence, $p^*$ must be on the segment $pv$. On the other hand, by the definition of $p_\out$ (see \eqref{eqn:app:euc:lmm:theta-graph:p-out}), its projection $\perp_\out$ on $\rho_{C^*}$ cannot be farther from $p$ than $\perp^*$. This means that $\perp_\out$ must be on the segment connecting $p$ and $\perp^*$. See Figure~\ref{fig:app:euc:lmm:theta-graph:case2}. Therefore:
	\myeqn{
		L_2(v, \perp_\out)
		&=&
		L_2(p, v) - L_2(p, \perp_\out) \nn \\
		\explain{by \eqref{eqn:lmm:euc:lmm:theta-graph:case1:help2:1}}
		&<&
		l - l \cdot \cos \gamma \nn
	}
	which proves \eqref{eqn:euc:lmm:theta-graph:case2:3}.

	\vgap

	 As mentioned, $\perp_\out$ is on the segment connecting $p$ and $\perp^*$, while $\perp^*$ is on the segment connecting $p$ and $v$. This means that $\perp_\out$ must be on segment $pv$, suggesting $L_2(p, \perp_\out) \le l$. Hence:
	 \myeqn{
		L_2(\perp_\out, p_\out)
		=
		L_2(p, \perp_\out) \cdot \tan \gamma''
		\le
		l \cdot \tan \gamma''
		\le
		l \cdot \tan \gamma
	 }
	 which proves \eqref{eqn:euc:lmm:theta-graph:case2:4}.
\end{proof}

\vgap

Consequently, we have
\myeqn{
	L_2(p^*, p_\out)
	&\le&
	L_2(p^*, v) + L_2(v, p_\out) \nn \\
	&\le&
	L_2(p^*, v) + L_2(v, \perp_\out) + L_2(\perp_\out, p_\out) \nn \\
	\explain{by \eqref{eqn:euc:lmm:theta-graph:case2:1}, \eqref{eqn:euc:lmm:theta-graph:case2:3}, \eqref{eqn:euc:lmm:theta-graph:case2:4}}
	&<&
	l \cdot \tan\gamma + l \cdot (1-\cos \gamma) + l \cdot \tan\gamma \nn \\
	&=&
	l \cdot (2\tan\gamma + 1 - \cos \gamma) \nn \\
	\explain{by \eqref{eqn:euc:lmm:theta-graph:case2:0}}
	&<&
	(2+\eps)(2\tan\gamma + 1 - \cos \gamma) \cdot r
	\nn  \\
	\explain{by Fact~\ref{fact:app:euc:lmm:theta-graph:fact3}}
	&<&
	\eps r \nn
}
as needed in \eqref{eqn:app:euc:lmm:theta-graph:case2:goal}.

\end{document}